\newcommand{\bfi}{\bfseries\itshape}
\newcommand{\rem}[1]{}
\def\thefigure{\thesection.\@arabic\c@figure}
\def\fps@figure{h, t}
\def\thetable{\thesection.\@arabic\c@table}
\def\fps@table{h, t}
\newcommand{\todo}[1]{\vspace{5 mm}\par \noindent
\framebox{\begin{minipage}[c]{0.95 \textwidth}
\tt #1 \end{minipage}}\vspace{5 mm}\par}
\quad \underline{\it Lie group reduction of higher-order invariant variational problems} }
\begin{document}

\newtheorem{theorem}{Theorem}[section]
\newtheorem{definition}[theorem]{Definition}
\newtheorem{lemma}[theorem]{Lemma}
\newtheorem{remark}[theorem]{Remark}
\newtheorem{proposition}[theorem]{Proposition}
\newtheorem{corollary}[theorem]{Corollary}
\newtheorem{example}[theorem]{Example}

\def\below#1#2{\mathrel{\mathop{#1}\limits_{#2}}}



\title{Invariant higher-order variational problems}
\author{
Fran\c{c}ois Gay-Balmaz$^{1}$, Darryl D. Holm$^{2}$, David M. Meier$^{2}$, 
\\Tudor S. Ratiu$^{3}$, Fran\c{c}ois-Xavier Vialard$^{2}$
}
\addtocounter{footnote}{1}
\footnotetext{
Laboratoire de 
M\'et\'eorologie Dynamique, \'Ecole Normale Sup\'erieure/CNRS, Paris, France. 
\texttt{gaybalma@lmd.ens.fr}
\addtocounter{footnote}{1} }
\footnotetext{Department of Mathematics and Institute for Mathematical Sciences, Imperial College, London SW7 2AZ, UK. 
\texttt{d.holm@ic.ac.uk, d.meier09@ic.ac.uk, f.vialard@ic.ac.uk}
\addtocounter{footnote}{1}}
\footnotetext{Section de
Math\'ematiques and Bernoulli Center, \'Ecole Polytechnique F\'ed\'erale de
Lausanne,
CH--1015 Lausanne, Switzerland.
\texttt{tudor.ratiu@epfl.ch}
\addtocounter{footnote}{1} 
}%
\date{{\it\color{red} Fondly remembering our late friend Jerry Marsden} }
\maketitle

\makeatother
\maketitle


\noindent \textbf{AMS Classification:} 

\noindent \textbf{Keywords:} 

\begin{abstract} 

We investigate higher-order geometric $k$-splines for template matching on Lie groups. This is motivated by the need to apply diffeomorphic template matching to a series of images, e.g., in longitudinal studies of Computational Anatomy. Our approach formulates Euler-Poincar\'e theory in higher-order tangent spaces on Lie groups. In particular, we develop the Euler-Poincar\'e formalism for higher-order variational problems that are invariant under Lie group transformations. The theory is then applied to higher-order template matching and the corresponding curves on the Lie group of transformations are shown to satisfy higher-order Euler-Poincar\'{e} equations. The example of $SO(3)$ for template matching on the sphere is presented explicitly. Various cotangent bundle momentum maps emerge naturally that help organize the formulas. We also present Hamiltonian and Hamilton-Ostrogradsky Lie-Poisson formulations of the higher-order Euler-Poincar\'e theory for applications on the Hamiltonian side. 
\end{abstract}
\newpage
\tableofcontents


\section{Introduction}\label{Intro-sec}

\paragraph{The purpose of this paper.} 
This paper provides a method for taking advantage of continuous symmetries in solving Lie group invariant optimization problems for cost functions that are defined on $k^{th}$-order tangent spaces of Lie groups. The type of application we have in mind is, for example, the interpolation and comparison of a series of images in longitudinal studies in a biomedical setting. 

Previous work on the geometric theory of Lagrangian reduction by symmetry on first-order tangent spaces of Lie groups provides a convenient departure point that is generalized here to allow for invariant variational problems formulated on higher-order tangent spaces of Lie groups. It turns out that this generalization may be accomplished as a series of adaptations of previous advances in Euler-Poincar\'e theory, placed into the context of higher-order tangent spaces. Extension of the basic theory presented here to allow for actions of Lie groups on Riemannian manifolds should have several interesting applications, particularly in image registration, but perhaps elsewhere, too. Actions of Lie groups on Riemannian manifolds will be investigated in a subsequent treatment. Two important references for the present work are \cite{HoMaRa1998} for the basic Euler-Poincar\'e theory and \cite{CeMaRa2001} for the bundle setting of geometric mechanics. 

\subsection{Previous work on geometric splines for trajectory planning and interpolation}

The topics treated here fit into a class of problems in control theory called \emph{trajectory planning and interpolation by variational curves}. These problems arise in numerous applications in which velocities, accelerations, and sometimes higher-order derivatives of the interpolation path need to be optimized simultaneously. Trajectory planning using variational curves in Lie groups acting on Riemannian manifolds has been discussed extensively in the literature.  For example, trajectory planning for rigid body motion involves interpolation on either the orthogonal group $SO(3)$ of rotations in $\mathbb{R}^3$, or the semidirect-product group $SE(3)\simeq SO(3)\,\circledS\, \mathbb{R}^3$ of three-dimensional rotations and translations in Euclidean space. Trajectory planning problems have historically found great utility with applications, for example,  in aeronautics, robotics, biomechanics, and air traffic control.  \medskip

Investigations of the trajectory planning problem motivated the introduction in \cite{GaKa1985} and \cite{NoHePa1989} of a class of variational curves called \emph{Riemannian cubics}. Riemannian cubics and their recent higher order generalizations are reviewed in \cite{Popiel2007} and \cite{MaSLeKr2010}, to which we refer for extensive references and historical discussions. The latter work addresses the interpolation by variational curves that generalizes the classical least squares problem to Riemannian manifolds. This generalization is also based on the formulation of higher-order variational problems, whose solutions are smooth curves minimizing the $L^2$-norm of the covariant derivative of order $k \ge 1$, that fit a given data set of points at given times. These solutions are called $k^{th}$-order geometric splines, or geometric $k$-splines. This approach was initiated in \cite{NoHePa1989} for the construction of smoothing splines with $k=2$ for the Lie group $SO(3)$ and
  then generalized to higher order in \cite{CaSLCr1995}. The following result, noted in the first of these papers and then discussed more generally in the second one, was another source of motivation for the present work.  

\begin{proposition} [\cite{NoHePa1989}] \label{Noakes-thm}$\,$ \\ 
The equation for a $2^{nd}$-order geometric spline for a bi-invariant metric on $SO(3)$ may be written as a dynamical equation for a time-dependent vector $\mathbf{\Omega}(t)\in\mathbb{R}^3$ using the vector cross product
\begin{equation}
\dddot{\mathbf{\Omega}}=\ddot{\mathbf{\Omega}}\times\mathbf{\Omega},
\label{Noakes-eqn}
\end{equation}
for all $t$ in a certain interval $[0,T]$.
\end{proposition}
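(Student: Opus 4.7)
The plan is to reduce the problem to the Lie algebra $\mathfrak{so}(3) \cong (\mathbb{R}^3, \times)$ via the body angular velocity and then derive \eqref{Noakes-eqn} as the Euler--Lagrange equation for the reduced Lagrangian. First I would set up the variational principle: a $2^{nd}$-order geometric spline for a bi-invariant metric is a critical point (with appropriate boundary conditions on $g$ and $\dot g$) of
\begin{equation*}
S[g] = \frac{1}{2}\int_0^T \bigl\langle \nabla_{\dot g} \dot g,\, \nabla_{\dot g}\dot g \bigr\rangle\, dt,
\end{equation*}
where $\nabla$ is the Levi--Civita connection of the chosen bi-invariant metric on $SO(3)$. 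The key structural input is that, since the metric is bi-invariant and the connection is left-invariant, if $\mathbf{\Omega}(t) \in \mathbb{R}^3$ denotes the body angular velocity defined by $g^{-1}\dot g = \widehat{\mathbf{\Omega}}$, then the left translation of the covariant acceleration back to the identity is simply $\dot{\mathbf{\Omega}}$. Consequently $\|\nabla_{\dot g}\dot g\|^2 = |\dot{\mathbf{\Omega}}|^2$, so the action reduces to
\begin{equation*}
s[\mathbf{\Omega}] = \frac{1}{2}\int_0^T |\dot{\mathbf{\Omega}}(t)|^2\, dt.
\end{equation*}

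Next I would run the Euler--Poincar\'e-style reduction. Consider variations $g_\varepsilon(t)$ with $\delta g = g\widehat{\eta}$ for a curve $\eta(t)\in\mathbb{R}^3$ vanishing together with $\dot\eta$ at $t=0,T$. A direct computation (differentiating $g^{-1}\dot g$) yields the constrained variation
\begin{equation*}
\delta\mathbf{\Omega} = \dot\eta + \mathbf{\Omega}\times\eta,
\qquad
\delta\dot{\mathbf{\Omega}} = \ddot\eta + \dot{\mathbf{\Omega}}\times\eta + \mathbf{\Omega}\times\dot\eta.
\end{equation*}
Substituting into $\delta s = \int_0^T \dot{\mathbf{\Omega}}\cdot \delta\dot{\mathbf{\Omega}}\, dt$ and integrating by parts gives three contributions. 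Two integrations by parts turn $\int \dot{\mathbf{\Omega}}\cdot\ddot\eta\, dt$ into $\int \dddot{\mathbf{\Omega}}\cdot\eta\, dt$. The term $\int \dot{\mathbf{\Omega}}\cdot(\dot{\mathbf{\Omega}}\times\eta)\, dt$ vanishes identically by the scalar triple product identity. For the last term, using the ad-invariance $a\cdot(b\times c)=(a\times b)\cdot c$ and then integrating by parts once in $t$ produces
\begin{equation*}
\int_0^T \dot{\mathbf{\Omega}}\cdot(\mathbf{\Omega}\times\dot\eta)\, dt = -\int_0^T \bigl(\ddot{\mathbf{\Omega}}\times\mathbf{\Omega}\bigr)\cdot \eta\, dt,
\end{equation*}
where the $\dot{\mathbf{\Omega}}\times\dot{\mathbf{\Omega}}$ term arising from the product rule again drops out.

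Collecting the three pieces yields $\delta s = \int_0^T \bigl(\dddot{\mathbf{\Omega}} - \ddot{\mathbf{\Omega}}\times\mathbf{\Omega}\bigr)\cdot\eta\, dt$, and stationarity for arbitrary $\eta$ produces exactly \eqref{Noakes-eqn}. The main subtle point, and the only place one must be genuinely careful, is the first step: justifying the identity $\nabla_{\dot g}\dot g = g\cdot\dot{\mathbf{\Omega}}$ (equivalently, that the reduced Lagrangian is $\tfrac12|\dot{\mathbf{\Omega}}|^2$) using the bi-invariance of the metric and the associated formula $\nabla_X Y = \tfrac12[X,Y]$ on left-invariant vector fields. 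Once this reduction is in hand, the rest is a clean application of constrained variations on $\mathfrak{so}(3)$ together with repeated use of the cyclic property of the triple product.
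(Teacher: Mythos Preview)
Your proof is correct and follows essentially the same Euler--Poincar\'e reduction strategy as the paper: reduce the second-order Lagrangian $\tfrac12\|\nabla_{\dot g}\dot g\|^2$ to $\tfrac12|\dot{\mathbf{\Omega}}|^2$ using bi-invariance (the paper does this via Proposition~\ref{Prop-equiv_dash} and the identity $\operatorname{ad}^\dagger_\xi\xi=0$ in the bi-invariant case), then impose the constrained variation $\delta\mathbf{\Omega}=\dot\eta+\mathbf{\Omega}\times\eta$ and integrate by parts to arrive at \eqref{Noakes-eqn}. The only difference is that the paper first derives the general $k^{th}$-order Euler--Poincar\'e equation \eqref{EP_k_dash} on an arbitrary Lie group and then specializes to $k=2$, a bi-invariant metric, and $SO(3)$, whereas you carry out the computation directly in $\mathbb{R}^3$; the underlying argument is the same.
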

Solutions of the more general version of equation (\ref{Noakes-eqn}) expressed in \cite{CrSL1995} for $2^{nd}$-order geometric splines on Lie groups in terms of the Lie algebra commutator are called `Lie quadratics' in \cite{No2003,No2004,No2006}.

As we said, understanding the intriguing result in Proposition \ref{Noakes-thm} from the viewpoint of Lie group-invariant higher-order variational principles was one of the motivations for the present work. Its general version is proved again below as equation (\ref{CrSLe-commutator}) in Section \ref{hoEP-sec} by using the Euler-Poincar\'e methods of \cite{HoMaRa1998} for higher-order variational principles that are invariant under the action of a Lie group. The directness and simplicity of the present proof of the general version of Proposition \ref{Noakes-thm} compared with other proofs available in the literature encouraged us to continue investigating the application of Lie group-invariant $k^{th}$-order variational principles for geometric $k$-splines. It turns out that higher-order Euler-Poincar\'e theory is the perfect tool for studying  geometric $k$-splines.
 
The Euler-Poincar\'e theory for first-order invariant variational principles focuses on the study of geodesics on Lie groups, which turns out to be the fundamental basis for both ideal fluid dynamics and modern large-deformation image registration. For reviews and references to earlier work on first-order invariant variational principles, see \cite{HoMaRa1998} for ideal fluids and \cite{You2010} for large-deformation image registration. The present paper begins by extending these earlier results for geodesics governed by first-order variational principles that are invariant under a Lie group, so as to include dependence on higher-order tangent spaces of the group (i.e., higher-order time derivatives of curves on the group). This extension is precisely what is needed in designing geometric $k$-splines for trajectory planning problems on Lie groups. The essential strategy in making this extension is the application of reduction by symmetry to the Lagrangian before taking 
variations, as introduced in \cite{HoMaRa1998} for continuum dynamics. The equivalence of the result of Lagrangian reduction by symmetry with the results in the literature for Riemannian cubics and $k^{th}$-order geometric splines is shown in Section \ref{hoEP-sec}, Proposition \ref{Prop-equiv_dash}. 

This previous work has created the potential for many possible applications. In this paper, we shall concentrate on the application of these ideas in template matching for Computational Anatomy (\textit{CA}). Although we do not perform explicit image matching here, we demonstrate the higher-order approach to template matching in the finite dimensional case by interpolating a sequence of points on the sphere $S^2$, using $SO(3)$ as the Lie group of transformations.

\subsection{Main content of the paper}
The {\it main content} of the paper is outlined as follows:
\begin{description}
\item
Section \ref{geomset-sec}  discusses the geometric setting for the present investigation of extensions of group-invariant variational principles to higher order. In particular, Section \ref{geomset-sec} summarizes the definition of higher order tangent bundles and connection-like structures defined on them, mainly by adapting the treatment in \cite{CeMaRa2001} for the geometric formulation of Lagrangian reduction.
\item
Section \ref{hoEP-sec} explains the quotient map for higher-order Lagrangian reduction by symmetry and uses it to derive the basic ${k^{th}}$-order Euler-Poincar\'e equations. This extends to higher-order the Euler-Poincar\'e equations derived in \cite{HoMaRa1998}.  The ${k^{th}}$-order Euler-Poincar\'e equations are then applied to derive the equations for geometric $k$-splines on a Lie group.
After these preliminary developments, there follows a sequence of adaptations of previous advances in Euler-Poincar\'e theory to higher-order tangent spaces. 
\item
Section \ref{Clebsch-sec} extends the Clebsch-Pontryagin approach of \cite{GBRa2010}
to develop the $k^{th}$-order Euler-Poincar\'e equations for potential applications in optimal control.
This extension highlights the role of coadjoint motion for cotangent-lift momentum maps.

\item
Section \ref{hoTemplateMatching-sec} addresses theoretical and numerical results for our main motivation, longitudinal data interpolation. That is, interpolation through a sequence of data points.
After a brief account of the previous work done in Computational Anatomy (\textit{CA}), we derive the equations that generalize the equations for geodesic template matching \cite{BrFGBRa2010} to the case of \emph{higher-order} cost functionals and sequences of \emph{several} data points. We recover in particular the higher-order Euler-Poincar\'{e} equations. For a particular choice of cost functionals one can therefore think of the higher-order template matching approach as \emph{template matching by geometric $k$-splines}. We discuss the gain in smoothness afforded by the higher-order approach, then we provide a qualitative discussion of two Lagrangians that are of interest for applications in \textit{CA}. Finally, we close the section by demonstrating the higher-order approach to template matching in the finite dimensional case by interpolating a sequence of points on the sphere $S^2$, using $SO(3)$ as the Lie group of transformations. This yields the template-matching 
analog of the NHP equation of \cite{NoHePa1989} in 
(\ref{Noakes-eqn}). The results are shown as curves on the sphere in Figures \ref{HOTemplateMatching}. A sample figure is shown below to explain the type of results we obtain. 

\begin{figure}[h!]
\centerline{
\includegraphics[width=0.5\textwidth]{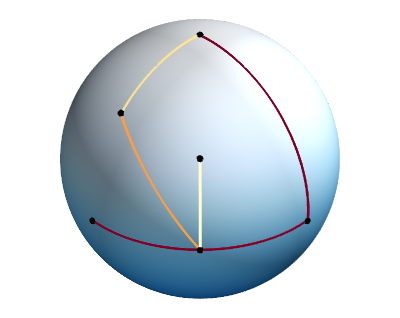}
\includegraphics[width=0.5\textwidth]{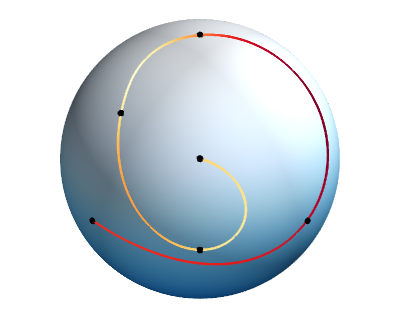}
}
\caption{\footnotesize{
First order \emph{vs.} second order template matching results interpolating a sequence of evenly time-separated points on the sphere, using a bi-invariant metric on the rotation group $SO(3)$. The colors show the local speed along the curves on the spheres (white smaller, red larger). The motion slows as the curve tightens.}}
\label{fig:Intro}
\end{figure}


%
\rem{Section \ref{hoLP-sec} generalizes Lagrange-Poincar\'e reduction \cite{CeMaRa2001} to higher order $G$-invariant Lagrangians by using the theory of higher-order tangent bundles and the connection-like structures defined on them. }
\item
Section \ref{optimizDyn-sec} extends to $k^{th}$-order tangents the metamorphosis approach of \cite{HoTrYo2009} for image registration and the optimization dynamics introduced in \cite{GBHoRa2010}.    
\item
Section \ref{Ham-forms-sec} addresses Hamiltonian and Hamilton-Ostrogradsky formulations of the higher-order Euler-Poincar\'e theory. The Hamilton-Ostrogradsky formulation results in a compound Poisson bracket comprising a sum of canonical and Lie-Poisson brackets. 
\item
Section \ref{outlook-sec} discusses the outlook for future research  and other potential applications of the present approach. These include the formulation of higher-order Lie group invariant variational principles that include both curves on Lie groups and the actions of Lie groups on smooth manifolds, and the formulation of a $k$th-order brachistochrone problem. 
\end{description}

This paper represents only the beginning of our work in this direction. 
The extensions to higher order discussed here demonstrate the unity and versatility of the geometric approach. We hope these methods will be a source of inspiration for future analysis and applications of Lie group reduction of higher-order invariant variational problems.

\section{Geometric setting}\label{geomset-sec}

We shall begin by reviewing the definition of higher order tangent bundles and the connection-like structures defined on them. For more details and explanations of the geometric setting for higher-order variational principles see \cite{CeMaRa2001}.

\subsection{$\mathbf{k^{th}}$-order tangent bundles}

The ${k^{th}}$-order tangent bundle $\tau^{(k)}_Q : T^{(k)}Q
\rightarrow Q$ is defined as the set of equivalence classes of
$C^k$ curves in $Q $ under the equivalence relation that
identifies two given curves $q_i(t), i = 1,
2$, if $q_1(0) = q_2(0) = q_0$ and in any local chart we have 
$q^{(l)}_1(0) =q^{(l)}_2(0)$, for $l = 1, 2,\ldots , k$, where $ q^{(l)}$  denotes the
derivative of order $l$. The equivalence class of the curve $q(t)$ at $q_0 \in Q$ is denoted $[q]_{q_0}^{(k)}$. The projection
\[ 
\tau^{(k)}_Q : T^{(k)}Q \rightarrow Q \quad  \mbox{is given by} \quad
\tau^{(k)}_Q\left([q]_{q_0}^{(k)}\right) = q_0.
\]

It is clear that $T^{(0)}Q = Q$, $T^{(1)}Q = TQ$, and that, for $0 \leq l < k$, there
is a well defined fiber bundle structure
\[ 
\tau^{(l,k)}_Q : T^{(k)}Q \rightarrow T^{(l)}Q, \quad  \mbox{given by} \quad
\tau^{(l,k)}_Q\left([q]_{q_0}^{(k)}\right) = [q]_{q_0}^{(l)}.
\]
Apart from the cases where $k = 0$ and $k=1$, the bundles  $T^{(k)}Q$ are not vector bundles.
The bundle $T^{(2)}Q$ is often denoted $\ddot{Q}$, and is called the {\bfi
second order bundle}.

\begin{remark}{\rm 
We note that $T^{(k)}Q  = J^k_0(\mathbb{R}, Q)$ consists of 
$k$-jets of curves from $\mathbb{R}$ to $Q$ based at 
$ 0 \in \mathbb{R}$, as defined, for example, in  
\cite[\S12.1.2]{Bourbaki1971}.}
\end{remark}

A smooth map $f : M \rightarrow N$ induces a map
\begin{equation}
\label{kth_order_tangent_map} 
T^{(k)}f : T^{(k)}M \rightarrow T^{(k)}N \quad  \mbox{given by} \quad
T^{(k)}f\left([q]_{q_0}^{(k)}\right) := [f\circ q]^{(k)}_{f(q_0)}.
\end{equation} 
In particular, a group action $\Phi : G \times Q \rightarrow Q$ 
naturally lifts to a group action
\begin{equation}
\label{kth_order_action}
\Phi^{(k)} : G \times T^{(k)}Q \rightarrow T^{(k)}Q \quad  \mbox{given by}
\quad \Phi^{(k)}_g\left([q]_{q_0}^{(k)}\right):=T^{(k)}\Phi_g \left( [q]_{q_0}{^{(k)}} \right) = \left[\Phi_g \circ q\right]^ {(k)}_{\Phi_g(q_0) } .
\end{equation} 
This action endows $T^{(k)}Q$ with
a principal $G $-bundle structure. The quotient
$\left( T^{(k)}Q\right) /G$ is a fiber bundle over the base $Q/G$. The class of the element $[q]_{q_0}^{(k)}$ in the
quotient $\left( T^{(k)}Q\right) /G$ is denoted $\left[[q]_{q_0}^{(k)}\right]_G$.

\paragraph{The case of a Lie group.} The $k^{th}$-order tangent bundle $T^{(k)}G$ of  a Lie group $G$ carries a natural Lie group structure: if
$[g]_{g_0}^{(k)}$, and $[h]_{h_0}^{(k)}$ are classes of curves $g$ and
$h$ in $G$, define $[g]_{g_0}^{(k)}[h]_{h_0}^{(k)}:= [gh]_{g_0h_0}^{(k)}$. The Lie algebra $T_eT^{(k)}G$ of $T^{(k)}G$ can be naturally
identified, as a vector space, with $(k + 1)\mathfrak{g}$ (that is, the direct
sum of $k+1$ copies of $\mathfrak{g}$) which, therefore, carries a unique Lie
algebra structure such that this identification becomes a Lie algebra
isomorphism.

\subsection{$\mathbf{k^{th}}$-order Euler-Lagrange equations}\label{sec 2.2}

Consider a Lagrangian $L:T^{ (k) }Q\rightarrow \mathbb{R}  $, $L=L \left(  q, \dot q, \ddot q, ..., q^{ (k) }\right)$. Then a curve
$q: [t_0, t_1] \rightarrow Q$
is a critical curve of the action
\begin{equation}\label{Euler-Lagrange_action}
 \mathcal{J}[q] = \int_{t_0}^{t _1} L \left( q(t), \dot q(t),....,  q^{ (k) }(t) \right)  dt  
\end{equation}
among all curves  $q(t) \in Q$ whose first $(k-1)$ derivatives are fixed at the endpoints: $q^{(j)}(t_i)$, $i=0,1$, $j=0,..., k-1$,
if and only if $q(t)$ is a solution of the $k^{th}$-order Euler-Lagrange equations
  \begin{equation}\label{EL-eqns}
    \sum_{j=0}^k (-1)^j \frac{d^j}{dt^j}  \frac{\partial  L}{\partial  q^{ (j) }}=0.
  \end{equation}
The corresponding variational principle is Hamilton's principle,
\[
\delta \int_{t _0 }^{ t _1} L\left( q(t), \dot q(t),....,  q^{ (k) }(t) \right)  dt  =0
.
\]
In the $\delta$-notation, an infinitesimal variation of the curve $q(t)$ is denoted by $\delta q(t) $ and defined by the \emph{variational derivative}, 
\begin{equation}
\label{var-deriv-def}
\delta q(t) : 
=\left.\frac{d}{d \varepsilon}\right|_{\varepsilon =0 } 
q(t, \varepsilon),
\end{equation}
where $q(t,0)=q(t)$ for all $t$ for which the curve is 
defined and $\frac{\partial^j q}{\partial t^j}(t_i,\varepsilon) = q^{(j)}(t_i)$, for
all $\varepsilon$, $j=0,1,\ldots, k-1$, $i=0,1$. Thus
$\delta q ^{ (j) }(t_0)=0=\delta q ^{ (j) }(t_1)$ for $j=0,...,k-1$. Note that the local notation $L\left( q, \dot q,....,  q^{ (k) }\right)$ used above can be intrinsically written as $L\left(\left[q\right]^{(k)}_q\right)$.

\paragraph{Examples: Riemannian cubic polynomials and generalizations.}
As originally introduced in \cite{NoHePa1989}, Riemannian cubic
polynomials generalize Euclidean splines to Riemannian manifolds.
Let $(Q,\gamma )$ be a Riemannian manifold and $\frac{D}{Dt}$ be the
covariant derivative along curves associated with the Levi-Civita
connection $\nabla$ for the metric $\gamma$.
The Riemannian cubic polynomials are defined as minimizers of the
functional $\mathcal{J}$ in \eqref{Euler-Lagrange_action} for the Lagrangian $L: T^{(2)}Q\rightarrow \mathbb{R}$ defined by
\begin{equation}
L(q,\dot{q},\ddot{q}) := \frac{1}{2}\gamma_q\left(\frac{D}{Dt}\dot
q,\frac{D}{Dt}\dot q\right).
\end{equation}
This Lagrangian is well-defined on the second-order tangent bundle since, in coordinates
\begin{equation}
\frac{D}{Dt}\dot{q}^k=
\ddot{q}^k + \Gamma_{ij}^k(q) \dot{q}^i\dot{q}^j ,
\end{equation}
where $(\Gamma_{ij}^k(q))_{i,j,k}$ are the Christoffel symbols at point
$q$ of the metric $\gamma$ in the given basis.
These Riemannian cubic polynomials have been generalized to the
so-called elastic splines through the following class of Lagrangians
\begin{equation}\label{2_splines_Lagr}
L_{\tau}(q,\dot{q},\ddot{q}) : = \frac{1}{2} \gamma_{q} \left(
\frac{D}{Dt}\dot q, \frac{D}{Dt}\dot q\right)  + \frac{\tau^2}{2}  \gamma_{q}
(\dot{q},\dot{q}) ,
\end{equation}
where $\tau$ is a real constant.
Another extension are the higher-order Riemannian splines, or geometric $k$-splines, where
\begin{equation}\label{k_splines_Lagr}
L_{k}\left( q,\dot q,..., q^{(k)} \right) :=  \frac{1}{2} \gamma_{q}
\left( \frac{D^{k-1}}{Dt^{k-1}}\dot q,\frac{D^{k-1}}{Dt^{k-1}}\dot q
\right) ,
\end{equation}
for $k >2$.
As for the Riemannian cubic splines, $L_k$ is well-defined on $T^{(k)}Q$.
Denoting by $R$ the curvature tensor defined as
$R(X,Y)Z= \nabla _X \nabla _Y X- \nabla _Y \nabla _X Z- \nabla
_{[X,Y]}Z$, the Euler-Lagrange equation for elastic splines ($k=2$) reads
\begin{equation} \label{ELeqns-T2}
\frac{D^3}{Dt^3}\dot q(t)+ R \left( \frac{D}{Dt}\dot q(t),\dot q(t)
\right) \dot q(t)= \tau ^2 \frac{D}{Dt}\dot q (t) ,
\end{equation}
as proven in \cite{NoHePa1989}. For the higher-order Lagrangians $L_k$,
the Euler-Lagrange equations read \cite{CaSLCr1995}
\begin{equation}
\frac{D^{2k-1}}{Dt^{2k-1}}\dot q(t) + \sum_{j=2}^k (-1)^j R
\left(\frac{D^{2k-j-1}}{Dt^{2k-j-1}}\dot q(t),
\frac{D^{j-2}}{Dt^{j-2}}\dot q(t)\right)\dot q(t) =0.
\end{equation}

These various Lagrangians can be used to interpolate between given
configurations on $T^{(k)}Q$. The choice of Lagrangian will depend on the application one has in mind. For
instance, the following interpolation problem was addressed in
\cite{HuBl2004} and was motivated by applications in space-based interferometric
imaging.

\paragraph{Interpolation problem.}
\textit{
Given $N+1$ points $q_i\in Q$, $i=0,...,N$ and tangent vectors $v_j \in T_{q_j}Q$, $j = 0,N$, minimize
\begin{equation} \label{ElasticSplines}
\mathcal{J} [q]:= \frac{1}{2} \int_{t_0}^{t_N} \left( \gamma 
_{q(t)}\left( \frac{D}{Dt} \dot{ q} (t) , \frac{D}{Dt} \dot{ q} (t)
\right) + \tau ^2 \gamma  _{q(t)}\left( \dot{ q}(t) , \dot{ q} (t)
\right) \right) dt  ,
\end{equation}
among curves $ t  \mapsto q(t)\in Q$ that are $C^1$ on $[t _0 , t_N ]$, smooth on
$[t_i, t_{i+1}]$, $t_0\leq t_1\leq \ldots \leq t_N$, and subject to the
interpolation constraints
\[
q(t_i)=q_i, \quad\text{for all }i=1,\ldots, N-1
\]
and the boundary conditions
\[
q(t_0)=q_0,\quad \dot q(t_0)= v_0 ,\quad\text{and}\quad
q(t_N)=q_N,\quad \dot q(t_N)= v_N .
\]}In the context of a group action and invariant Lagrangians, we refer
the reader to Section \ref{hoTemplateMatching-sec} for an example of
higher-order interpolation particularly relevant for
\textit{Computational Anatomy}.

\rem{
\begin{framed}
\paragraph{Example: Riemannian splines.}
Let $(Q,\gamma )$ be a Riemannian manifold and let $\frac{D}{Dt}\alpha (t)$ denote the covariant derivative of a curve $ \alpha (t)\in TQ$, relative to the Levi-Civita connection. \cite{HuBl2004} consider the following interpolation problem for $N$ distinct points $q_i\in Q, i=1,...,N$:

\textit{Minimize
\begin{equation} \label{ElasticSplines}
\mathcal{J} (q(t))= \frac{1}{2} \int_{t_0}^{t_N} \left( \gamma  _{q(t)}\left( \frac{D}{Dt} \dot{ q} (t) , \frac{D}{Dt} \dot{ q} (t) \right) + \tau ^2 \gamma  _{q(t)}\left( \dot{ q}(t) , \dot{ q} (t) \right) \right) dt,
\end{equation}
among curves $q(t)\in Q$ that are $C^1$ on $[t _0 , t_N ]$, smooth on $[t_i, t_{i+1}]$, $t_0\leq t_1\leq ...\leq t_N$, and subject to the interpolation constraints
\[
q(t_i)=q_i, \quad\text{for all $i=1,..., N-1$}
\]
and the boundary conditions
\[
q(t_0)=q_0,\quad \dot q(t_0)= v_0,\quad\text{and}\quad q(t_N)=q_N,\quad \dot q(t_N)= v_N,
\]
where $v_0\in T_{q_0}Q$ and $v_N\in T_{q_N}Q$ are fixed vectors.}

The Lagrangian for this problem is defined on the second order tangent bundle $L:T^{ (2) }Q\rightarrow \mathbb{R},$
\begin{equation}\label{2_splines_Lagr}
L(q, \dot q, \ddot q)= \frac{1}{2} \left\|\frac{D}{Dt}\dot q \right\|^2+ {\color{blue}\frac{\tau ^2}{2}} \left\|\dot q\right\| ^2 
\end{equation}
and its associated Euler-Lagrange equations are
\begin{equation} \label{ELeqns-T2}
\frac{D^3}{dt^3}\dot q(t)+ R \left( \frac{D^3}{dt^3}\dot q(t),\dot q(t) \right) \dot q(t)= \tau ^2 \frac{D}{Dt}\dot q (t),
\end{equation}
where $R(X,Y)Z= \nabla _X \nabla _Y X- \nabla _Y \nabla _X Z- \nabla _{[X,Y]}Z$ is the curvature tensor of $\gamma$. When $ \tau =0$, this problem recovers the interpolation by {\bfi geometric $2$-splines} or Riemannian cubics, \cite{CrSL1995}.

{\bfi Geometric $k$-splines} are obtained from the $k^{th}$-order Lagrangian $L:T^{(k)}Q \rightarrow \mathbb{R}  $ given by
\begin{equation}\label{k_splines_Lagr}
L\left( q,\dot q,..., q^{(k)} \right) = \frac{1}{2} \gamma _q \left( \frac{D^{k-1}}{dt^{k-1}}\dot q,\frac{D^{k-1}}{dt^{k-1}}\dot q \right).
\end{equation}

\begin{remark}\label{important_remark} It is important to note that the higher order Lagrangians \eqref{2_splines_Lagr}, \eqref{k_splines_Lagr} are functions defined on the manifolds $T^{(2)}Q$ and $T^{(k)}Q$ and not on curves $q(t)\in Q$. Therefore, the notation $ \frac{D}{Dt}\dot q$ in formula $\eqref{2_splines_Lagr}$ means the expression in terms of $\dot q$ and $\ddot q$ seen as \emph{independent elements} in the manifold $T^{(2)}Q$.
\end{remark}

\end{framed}
}

\subsection{Quotient space and reduced Lagrangian}

When one deals with a Lagrangian $L: T^{(k)}Q \rightarrow \mathbb{R}$ that is invariant with respect to the lift $\Phi^{(k)}: G \times T^{(k)}Q \rightarrow T^{(k)}Q$ of a group action $\Phi: G \times Q \rightarrow Q$, then the invariance can be exploited to define a new function called the \emph{reduced Lagrangian} on the quotient space $\left(T^{(k)}Q\right)/G$. We review this procedure here. Since this paper mainly deals with the case where $Q = G$, we begin by describing this special case.

Let $G$ be a Lie group and $h \in G$. The right-, respectively left-actions by $h$ on $G$,
\[
R_h: G \rightarrow G, \quad g \mapsto gh, \quad \mbox{and} \quad L_h: G \rightarrow G, \quad g \mapsto hg,
\]
can be naturally lifted to actions on the $k^{th}$-order tangent bundle $T^{(k)}G$ (see \eqref{kth_order_action}). We will denote these lifted actions by concatenation, as in
\begin{align*}
  R_h^{(k)}&: T^{(k)}G \rightarrow T^{(k)}G, \quad [g]_{g_0}^{(k)}\mapsto  R_h^{(k)}\left( [g]_{g_0}^{(k)}\right) =:  [g]_{g_0}^{(k)}h , \quad \mbox{and}\\
L_h^{(k)}&: T^{(k)}G \rightarrow T^{(k)}G, \quad [g]_{g_0}^{(k)}\mapsto  L_h^{(k)}\left( [g]_{g_0}^{(k)}\right) =: h [g]_{g_0}^{(k)} .
\end{align*}
Consider a Lagrangian $L: T^{(k)}G \rightarrow \mathbb{R}$ that is right-, or left-invariant, i.e., invariant with respect to the lifted right-, or left-actions of $G$ on itself. For any $[g]_{g_0}^{(k)}\in T^{(k)}G$ we then get
\begin{equation}
L\left([g]_{g_0}^{(k)}\right) = L|_{T_e^{(k)}G}\left([g]_{g_0}^{(k)}g_0^{-1}\right), \quad \mbox{or} \quad  L\left([g]_{g_0}^{(k)}\right) = L|_{T_e^{(k)}G}\left(g_0^{-1}[g]_{g_0}^{(k)}\right),
\end{equation}
respectively. The restriction $L|_{T_e^{(k)}G}$ of the Lagrangian to the $k^{th}$-order tangent space at the identity $e$ therefore fully specifies the Lagrangian $L$. Moreover, there are natural identifications $\alpha_k: T_e^{(k)}G \rightarrow k \mathfrak{g}$ given by 
\begin{equation}\label{Alpha_k_right_inv}
  \alpha_k\left([g]_{e}^{(k)}\right) := \left(\dot{g}(0), \left.\frac{d}{dt}\right|_{t = 0} \dot{g}(t)g(t)^{-1}, \ldots ,  \left.\frac{d^{k-1}}{dt^{k-1}}\right|_{t = 0} \dot{g}(t)g(t)^{-1}\right),
\end{equation}
or
\begin{equation}\label{Alpha_k_left_inv}
  \alpha_k\left([g]_{e}^{(k)}\right) := \left(\dot{g}(0), \left.\frac{d}{dt}\right|_{t = 0}g(t)^{-1} \dot{g}(t), \ldots ,  \left.\frac{d^{k-1}}{dt^{k-1}}\right|_{t = 0} g(t)^{-1}\dot{g}(t)\right),
\end{equation}
respectively, where $t \mapsto g(t)$ is an arbitrary representative of $[g]_{e}^{(k)}$.

The reduced Lagrangian $\ell:k\mathfrak{g}\rightarrow\mathbb{R}$ is then defined as
\begin{equation}
\ell:= L|_{T_e^{(k)}G} \circ \alpha_k^{-1},
\end{equation}
where one uses the choice for $\alpha_k$ that is appropriate, namely \eqref{Alpha_k_right_inv} for a right-invariant Lagrangian $L$ and \eqref{Alpha_k_left_inv} for a left-invariant Lagrangian $L$.
Let $t \mapsto g(t) \in G$ be a curve on the Lie group. For every $t$ this curve defines an element in $T_{g(t)}^{(k)}G$, namely
\begin{equation}
  [g]_{g(t)}^{(k)}:= [h]_{g(t)}^{(k)}, \quad \mbox{where $h$ is the curve} \quad \tau \mapsto h(\tau):= g(t+\tau).
\end{equation}
Note that for the case $k=1$ we write, as usual, $\dot{g}(t) := [g]_{g(t)}^{(1)}$. The following lemma is a direct consequence of the definitions:

\begin{lemma}\label{Lemma_Invariant_Lagrangian}
   Let $t \mapsto g(t)$ be a curve in $G$ and $L: T^{(k)}G \rightarrow \mathbb{R}$ a right-, or left-invariant Lagrangian. Then the following equation holds for any time $t_0$,
\begin{equation}\label{Reduced_curve}
L\left([g]_{g(t_0)}^{(k)}\right) = \ell\left(\xi(t_0), \dot{\xi}(t_0), \ldots , \xi^{(k-1)}(t_0)\right),
\end{equation}
where $\xi := \dot{g} g^{-1}$, or $\xi := g^{-1} \dot{g}$ respectively.
\end{lemma}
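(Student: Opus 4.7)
The plan is to reduce to the identity via the right- (or left-) invariance and then apply $\alpha_k$ directly, observing that right-translation by a constant does not alter the logarithmic derivative $\dot g\,g^{-1}$. I will write out only the right-invariant case; the left-invariant case is obtained by the analogous substitution $h(\tau):=g(t_0)^{-1}g(t_0+\tau)$ and the observation that $h^{-1}\dot h=g^{-1}\dot g$.

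First, I fix $t_0$ and translate to a representative at $e$. The class $[g]_{g(t_0)}^{(k)}$ is the $k$-jet at $\tau=0$ of $\tau\mapsto g(t_0+\tau)$. Define
\[
h(\tau):=g(t_0+\tau)\,g(t_0)^{-1},
\]
so that $h(0)=e$ and, by the group structure on $T^{(k)}G$, $[h]_e^{(k)}=[g]_{g(t_0)}^{(k)}\,g(t_0)^{-1}$. Right-invariance of $L$ then gives
\[
L\!\left([g]_{g(t_0)}^{(k)}\right)=L|_{T_e^{(k)}G}\!\left([h]_e^{(k)}\right)=\ell\!\left(\alpha_k([h]_e^{(k)})\right).
\]

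The second step is the short computation that ties $\alpha_k([h]_e^{(k)})$ to the derivatives of $\xi$. Since $g(t_0)^{-1}$ is constant in $\tau$, one has
\[
\dot h(\tau)\,h(\tau)^{-1}=\dot g(t_0+\tau)\,g(t_0)^{-1}\cdot g(t_0)\,g(t_0+\tau)^{-1}=\dot g(t_0+\tau)\,g(t_0+\tau)^{-1}=\xi(t_0+\tau).
\]
Differentiating $j-1$ times at $\tau=0$ yields $\xi^{(j-1)}(t_0)$ for $j=1,\dots,k-1$. Combined with the initial entry $\dot h(0)=\dot g(t_0)g(t_0)^{-1}=\xi(t_0)$ (valid because $h(0)=e$), this identifies
\[
\alpha_k\!\left([h]_e^{(k)}\right)=\bigl(\xi(t_0),\dot\xi(t_0),\ldots,\xi^{(k-1)}(t_0)\bigr),
\]
and substituting into the previous display produces \eqref{Reduced_curve}.

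There really is no obstacle of substance: the whole content of the lemma is the identity $\dot h(\tau)h(\tau)^{-1}=\xi(t_0+\tau)$, so the only thing to watch is consistency of the multiplicative conventions in the group structure on $T^{(k)}G$ and in the definition \eqref{Alpha_k_right_inv} of $\alpha_k$. Once the translation $h(\tau)=g(t_0+\tau)g(t_0)^{-1}$ is set up, the proof is three lines of bookkeeping; the left-invariant statement follows verbatim by swapping the order of multiplication.
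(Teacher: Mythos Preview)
Your proof is correct. The paper does not actually write out a proof of this lemma---it only remarks that the statement ``is a direct consequence of the definitions''---and what you have written is precisely the unpacking of those definitions: translate to the identity by $h(\tau)=g(t_0+\tau)g(t_0)^{-1}$, observe that $\dot h\,h^{-1}=\xi(t_0+\cdot)$, and read off $\alpha_k([h]_e^{(k)})$.
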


This last equation will play a key role in the higher-order Euler-Poincar\'{e} reduction discussed in the next section.

\section{Higher-order Euler-Poincar\'e reduction}\label{hoEP-sec}
In this section we derive the basic $k^{th}$-order Euler-Poincar\'e equations by reducing the variational principle associated to the Euler-Lagrange equations on $T^{(k)}Q$. The equations adopt a factorized form, in which the Euler-Poincar\'e operator at $k=1$ is applied to the Euler-Lagrange operation acting on the reduced Lagrangian $\ell ( \xi, \dot \xi, \ddot{\xi},\dots,  \xi^{ (k-1) } ) : k \mathfrak{g}  \rightarrow \mathbb{R} $ at the given order, $k$. We then apply the $k^{th}$-order Euler-Poincar\'e equations to derive the equations for geometric $k$-splines. 

\subsection{Quotient map, variations and $\mathbf{k^{th}}$-order Euler-Poincar\'e equations}\label{k_order_EP_sec}

Let $L: T^{(k)}G \rightarrow \mathbb{R}$ be a right-, or left-invariant Lagrangian. Recall from \S\ref{sec 2.2} that the Euler-Lagrange equations are equivalent to the following variational problem: \newline
\textit{For given $h_i \in G$ and $[h]_i^{(k-1)} \in 
T^{(k-1)}_{h_i}G$, $i = 1, 2$, find a critical curve of the functional 
\[
\mathcal{J}[g] = \int_{t_1}^{t_2} L\left([g]_{g(t)}^{(k)}\right) dt
\]
among all curves $g: t \in [t_1, t_2] \mapsto g(t) \in G$ satisfying the endpoint condition
\begin{equation}\label{kth_order_endpoint_condition}
[g]_{g(t_i)}^{(k-1)} = [h]_i^{(k-1)}, \quad i= 1, 2.
\end{equation}}

\noindent The time derivatives of up to order $k-1$ are 
therefore fixed at the endpoints, i.e., $[g]_{g(t_i)}^{(j)} = [h]_i^{(j)}$, $j=0,\ldots,k-1$, are automatically verified. Let $g: t \mapsto g(t) 
\in G$ be a curve and $(\varepsilon,t) \mapsto 
g_\varepsilon(t) \in G$ a variation of $g$ respecting 
\eqref{kth_order_endpoint_condition}. We recall from 
Lemma \ref{Lemma_Invariant_Lagrangian} that, for any 
$\varepsilon$ and any $t_0$,
\begin{equation}\label{Reduced_curve_recall}
L\left([g_\varepsilon]_{g_\varepsilon(t_0)}^{(k)}\right) = \ell\left(\xi_\varepsilon(t_0), \ldots , \xi^{(k-1)}_\varepsilon(t_0)\right),
\end{equation}
where $\xi_\varepsilon := \dot{g_\varepsilon}g_\varepsilon^{-1}$, or $\xi_\varepsilon := g_\varepsilon^{-1} \dot{g_\varepsilon}$ respectively for the right-, or left-invariant Lagrangian $L$. The variation $\delta \xi$ induced by the variation $\delta g$ is given by 
\begin{equation}
\label{constr_var}
\delta \xi = \dot{\eta} \mp [\xi, \eta]
,
\end{equation}
where $\eta := (\delta g) g^{-1}$, or 
$\eta := g^{-1} (\delta g)$, respectively. It follows from the endpoint conditions \eqref{kth_order_endpoint_condition} that $\eta(t_i) = \dot{\eta}(t_i) = \ldots = \eta^{(k-1)}(t_i)= 0$ and therefore $\delta \xi(t_i) =\ldots = \partial_t^{k-2}\delta \xi(t_i) = 0$, for $i = 1, 2$. We are now ready to compute the variation of $\mathcal{J}$:
\begin{align*}
\delta \int_{t_1}^{t_2} L\left([g]_{g(t)}^{(k)}\right) dt &= \left.\frac{d}{d\varepsilon}\right|_{\varepsilon = 0} \int_{t_1}^{t_2} L\left([g_\varepsilon]_{g_\varepsilon(t)}^{(k)}\right) dt
  \stackrel{\eqref{Reduced_curve_recall}}{=} \left.\frac{d}{d\varepsilon}\right|_{\varepsilon = 0} \int_{t_1}^{t_2} \ell\left(\xi_\varepsilon, \ldots, \xi_\varepsilon^{(k-1)}\right) dt\\
  &= \sum _{ j=0}^ {k-1} \int_{t _1 } ^{ t _2 }\left\langle \frac{\delta \ell }{\delta  \xi^{ (j) }}, \delta  \xi^{ (j) } \right\rangle dt
  = \sum _{ j=0}^ {k-1} \int_{t _1 } ^{ t _2 }\left\langle \frac{\delta \ell }{\delta  \xi^{ (j) } }, \partial _t ^j \delta  \xi  \right\rangle dt\\
&= \int_{t _1 } ^{ t _2 }\left\langle \sum _{ j=0}^ {k-1} (-1)^j \partial _t ^j\frac{\delta \ell }{\delta  \xi^{ (j) } },  \delta  \xi \right\rangle dt\\
&= \int_{t _1 } ^{ t _2 }\left\langle \sum _{ j=0}^ {k-1} (-1)^j \partial _t ^j\frac{\delta \ell }{\delta  \xi^{ (j) } },  \partial _t \eta \mp [ \xi , \eta  ]  \right\rangle dt\\
&= \int_{t _1 } ^{ t _2 }\left\langle \left( - \partial _t \mp \operatorname{ad}^*_ \xi \right)  \sum _{ j=0}^ {k-1} (-1)^j \partial _t ^j\frac{\delta \ell }{\delta  \xi^{ (j) } },  \eta   \right\rangle dt,
\end{align*}
were we used the vanishing endpoint conditions $\delta \xi(t_i) =\ldots = \partial_t^{k-2}\delta \xi(t_i) = 0$ and $\eta(t_i) = 0$, for $i = 1, 2$, when integrating by parts. Therefore, the stationarity condition $\delta \mathcal{J} = 0$ implies the {\bfi $\mathbf{k^{th}}$-order Euler-Poincar\'e equation},
\begin{framed}
\begin{equation}\label{EP_k_dash}
\left(\partial _t  \pm \operatorname{ad}^*_ \xi \right)  
\sum _{ j=0}^ {k-1} (-1)^j \partial _t ^j  \frac{\delta \ell }{\delta  \xi^{ (j) } }=0.
\end{equation}
\end{framed}

\noindent Formula (\ref{EP_k_dash}) takes the following forms for various choices of $k=1,2,3$:\\
\noindent
If $k=1$:
\[
\left( \partial _t  \pm \operatorname{ad}^*_ \xi \right) \frac{\delta \ell }{\delta  \xi}=0
,\]
If $k=2$:
\begin{equation}\label{EP_2}
\left( \partial _t \pm \operatorname{ad}^*_ \xi \right) 
\left( \frac{\delta  \ell}{\delta \xi } - \partial _t  \frac{\delta \ell }{\delta \dot \xi } \right) =0
,\end{equation}
If $k=3$:
\[
\left(\partial _t  \pm \operatorname{ad}^*_ \xi \right) \left( \frac{\delta  \ell}{\delta \xi } 
- \partial _t  \frac{\delta \ell }{\delta \dot \xi } 
+ \partial _t ^2  \frac{\delta \ell }{\delta \ddot \xi }\right) =0
.\]
The first of these is the usual Euler-Poincar\'e equation. The others adopt a \emph{factorized} form in which the Euler-Poincar\'e operator $(\partial _t \pm \operatorname{ad}^*_ \xi )$ is applied to the Euler-Lagrange operation on the reduced Lagrangian $\ell ( \xi, \dot \xi, \ddot{\xi},... ) $ at the given order.

The results obtained above are summarized in the following theorem.

\begin{theorem}[$k^{th}$-order Euler-Poincar\'e reduction] Let $L:T^{(k)}G \rightarrow \mathbb{R}  $ be a $G$-invariant Lagrangian and let $\ell: k \mathfrak{g}  \rightarrow \mathbb{R} $ be the associated reduced Lagrangian. Let $g(t)$ be a curve in $G$ and $ \xi (t)= \dot g(t)g(t)^{-1}$, resp. $ \xi (t)= g(t)^{-1}\dot g(t)$ be the reduced curve in the Lie algebra $ \mathfrak{g} $. Then the following assertions are equivalent.
\begin{itemize}
\item[\rm (i)] The curve $g(t)$ is a solution of the $k^{th}$-order Euler-Lagrange equations for $L:T^{(k)}G \rightarrow \mathbb{R}$.
\item[\rm (ii)] Hamilton's variational principle
\[\delta \int_{t _1 }^{t _2 } L\left( g, \dot g, ..., g^{ (k) } \right) dt =0
\]
holds upon using variations $ \delta g$ such that $\delta g ^{ (j) }$ vanish at the endpoints for $j=0,...,k-1$.
\item[\rm (iii)] The $k^{th}$-order Euler-Poincar\'e equations for $\ell: k \mathfrak{g} \rightarrow \mathbb{R}$:
\begin{equation}
\left(\partial _t  \pm \operatorname{ad}^*_ \xi \right)  
\sum _{ j=0}^ {k-1} (-1)^j \partial _t ^j  \frac{\delta \ell }{\delta  \xi^{ (j) } }=0.
\end{equation}
\item[\rm (iv)] The constrained variational principle
\[
\delta \int_{t _1 }^{t _2 } \ell \left(  \xi , \dot \xi ,..., \xi ^ { (k) } \right) =0
\]
holds for constrained variations of the form $ \delta \xi = \partial _t \eta \mp [ \xi, \eta  ]$, where $ \eta $ is an arbitrary curve in $\mathfrak{g}  $ such that $\eta ^{ (j) }$ vanish at the endpoints, for all $j=0,..., k-1$.
\end{itemize}

\end{theorem}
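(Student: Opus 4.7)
The plan is to prove the cycle (ii) $\Rightarrow$ (iv) $\Rightarrow$ (iii) $\Rightarrow$ (iv) $\Rightarrow$ (ii), together with the classical equivalence (i) $\iff$ (ii) recorded in \S\ref{sec 2.2}. Most of the algebra for (ii) $\Rightarrow$ (iii) is already carried out in the calculation displayed just before the theorem statement, so my task is primarily to package that calculation into the stated four-fold equivalence and to justify the manipulations of variations and endpoint conditions.

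First, to pass from (ii) to (iv): given a variation $g_\varepsilon$ of $g$ with $\delta g^{(j)}(t_i)=0$ for $j=0,\ldots,k-1$ and $i=1,2$, I set $\eta:=(\delta g) g^{-1}$ in the right-invariant case (resp.\ $\eta:=g^{-1}\delta g$ in the left-invariant case) and $\xi_\varepsilon:=\dot g_\varepsilon g_\varepsilon^{-1}$ (resp.\ $\xi_\varepsilon:=g_\varepsilon^{-1}\dot g_\varepsilon$). Lemma~\ref{Lemma_Invariant_Lagrangian} rewrites $\int L\,dt$ as $\int\ell(\xi,\dot\xi,\ldots,\xi^{(k-1)})\,dt$. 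Two facts then need to be verified: (a) the standard identity $\delta\xi=\dot\eta\mp[\xi,\eta]$, obtained by interchanging $\partial_t$ and $\partial_\varepsilon$ on $\xi_\varepsilon$ and expanding with the Leibniz rule; and (b) the inductive claim $\eta^{(j)}(t_i)=0$ for $j=0,\ldots,k-1$, which follows by differentiating the relation $\eta=(\delta g)g^{-1}$ repeatedly in $t$ and using the endpoint assumptions on $\delta g^{(j)}$ together with the smoothness of inversion in $G$.

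Next, (iv) $\Rightarrow$ (iii) is precisely the integration-by-parts computation displayed before the theorem: substitute $\delta\xi=\dot\eta\mp[\xi,\eta]$ into $\delta\int\ell\,dt$, integrate each term $\partial_t^{\,j}$ times using the endpoint conditions on $\eta$ to discard boundary contributions, and invoke arbitrariness of $\eta$ in the interior to extract the $k^{th}$-order Euler-Poincar\'e equation \eqref{EP_k_dash}. The converse (iii) $\Rightarrow$ (iv) follows by reading the same chain of equalities backwards and noting that the pairing is nondegenerate. Finally, (iv) $\Rightarrow$ (ii) is obtained by observing that the assignment $\delta g\mapsto\eta=(\delta g)g^{-1}$ is a bijection between variations respecting the endpoint condition \eqref{kth_order_endpoint_condition} and curves $\eta$ in $\mathfrak{g}$ with $\eta^{(j)}(t_i)=0$ for $j=0,\ldots,k-1$; stationarity of the two actions is therefore equivalent.

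The main obstacle is the careful bookkeeping of endpoint conditions, which is subtler here than in the first-order case. I need the vanishing of $\eta^{(j)}(t_i)$ up to order $k-1$ in order to move between the two variational problems, yet the induced variation $\delta\xi=\dot\eta\mp[\xi,\eta]$ only vanishes at the endpoints up to order $k-2$; it is exactly this shift by one that makes the integrations by parts produce no boundary contributions from each of the $k$ terms $\partial_t^{\,j}\bigl(\delta\ell/\delta\xi^{(j)}\bigr)$, $j=0,\ldots,k-1$. Verifying that the order of vanishing in \eqref{kth_order_endpoint_condition} is precisely what is required for every boundary term to drop, and checking compatibility of this bookkeeping with the Leibniz expansion used to prove (b) above, is the delicate step; once this is confirmed, the proof reduces to the calculation already given.
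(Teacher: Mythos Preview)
Your proposal is correct and follows essentially the same approach as the paper: the paper carries out exactly the computation you describe---using Lemma~\ref{Lemma_Invariant_Lagrangian} to pass from $L$ to $\ell$, invoking the constrained variation formula \eqref{constr_var}, deducing the endpoint conditions $\eta^{(j)}(t_i)=0$ and $\partial_t^{j}\delta\xi(t_i)=0$, and then integrating by parts---in the paragraphs immediately preceding the theorem, which it then ``summarizes'' as the four-fold equivalence. Your explicit packaging of the argument into the logical cycle, and in particular your observation that the bijection $\delta g\leftrightarrow\eta$ matches the two sets of endpoint conditions, makes the structure cleaner than the paper's presentation but is not a different method.
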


\medskip

\begin{remark} {\rm The quotient map \eqref{Alpha_k_right_inv}, respectively \eqref{Alpha_k_left_inv}, can be used for any Lie group $G$. In the case of matrix groups, one might consider the alternative quotient map of the form
\begin{equation}
\left(  g, \dot g,..., g^{(k)}\right) \rightarrow \left(  \nu _1,..., \nu _k\right) ,\quad \nu _j:= g^{(j) } g^{-1}\quad\text{respectively}\quad  \nu _j:= g^{-1}g^{(j) }.
\label{quotmap.alt}
\end{equation}
One may easily pass from the variables $\left( \xi, \dot \xi ,..., \partial _t ^{ (k-1) } \xi \right)$ to the variables $\left(  \nu _1,..., \nu _k\right)$. For example:
\begin{align} 
\xi &= \nu _1 \nonumber\\
\dot \xi &= \partial _t ( \dot g g ^{-1} ) = \ddot g g ^{-1} - \dot g g^{-1} \dot g g^{-1} = \nu _2 - \nu _1 \nu _1\\
\ddot \xi &= \nu _3 - 2\nu _2 \nu _1 + 2 \nu _1 \nu _1 \nu _1- \nu _1 \nu _2,
\nonumber
\end{align}
and so forth, by using the rule $ \dot \nu _j =\nu _{j+1}- \nu _j \nu _1 .$
Here all concatenations mean matrix multiplications. One can easily derive the constrained variations and the $k^{th}$-order Euler-Poincar\'e equations associated to this quotient map in a similar way as above. }
\end{remark}

\subsection{Example: Riemannian cubics}\label{2_splines}

In this section we apply the $k^{th}$-order Euler-Poincar\'e reduction to the particular case of $2$-splines on Lie groups. Fix a right-, respectively left-invariant Riemannian metric $\gamma $ on the Lie group $G$. We denote by
\[
\|v_g\|^2_g:=\gamma_g(v_g,v_g)
\]
the corresponding squared norm of a vector $v_g\in T_gG$.
The inner product induced on the Lie algebra $\mathfrak{g}$ is also denoted by $ \gamma: \mathfrak{g}  \times \mathfrak{g}  \rightarrow \mathbb{R}  $ and its squared norm by
\[
\|\xi\|^2_{\mathfrak{g}}:=\gamma(\xi,\xi).
\]
We recall that the associated isomorphisms
 \begin{equation}\label{music-iso-def_dash}
\flat: \mathfrak{g}\to \mathfrak{g}^*, \quad \xi \mapsto \xi^\flat,\quad\text{and}\quad  \sharp: \mathfrak{g}^*\to \mathfrak{g},\quad \mu \mapsto \mu^\sharp,
\end{equation}
are defined by
\begin{equation}
\left\langle\xi^\flat, \eta\right\rangle=\gamma(\xi, \eta), \quad \text{for all}\quad \xi, \eta \in \mathfrak{g},\quad\text{and}\quad \sharp:=\flat^{-1},
\end{equation}
where $\left\langle\,,\right\rangle$ denotes the dual pairing between $\mathfrak{g}^*$ and $\mathfrak{g}$.

\begin{proposition}\label{Prop-equiv_dash} Consider the Lagrangian $L: T^ { (2)} G\rightarrow \mathbb{R}$ for geometric $2$-splines, given by
\begin{equation}
\label{un-reduced-lag-ell_dash}
L(g, \dot{ g}, \ddot{ g})=\frac{1}{2} \left \| \frac{D}{Dt}\dot g\right\|_g^2,
\end{equation}
where $\|\cdot \|$ is the norm of a right-, respectively left-invariant metric on $G$. Then $L$ is right-, respectively left-invariant and induces the reduced Lagrangian $\ell: 2 \mathfrak{g}  \rightarrow \mathbb{R}  $ given by
\begin{equation}
\label{reduced-lag-ell_dash}
\ell( \xi , \dot{ \xi })= \frac{1}{2}\left \| \dot{ \xi } \pm \operatorname{ad}^\dagger_ \xi \xi \right \|_{\mathfrak{g}} ^2,
\end{equation}
where $\operatorname{ad}^\dagger$ by $\operatorname{ad}^\dagger_\xi\eta := \left(\operatorname{ad}^*_\xi(\eta^\flat)\right)^\sharp$, for any $\xi, \eta \in \mathfrak{g}$.
\end{proposition}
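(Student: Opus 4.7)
\medskip

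\noindent\textbf{Proof proposal.} The plan is to handle the two claims in turn: first the invariance of $L$, and then the explicit reduction formula producing $\ell$.

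For the invariance, I would use that a right- (resp.\ left-) invariant metric makes every $R_h$ (resp.\ $L_h$) an isometry of $(G,\gamma)$, so its Levi-Civita connection is preserved: if $g_h(t):=g(t)h$, then $\dot g_h(t)=TR_h(\dot g(t))$ and $\frac{D}{Dt}\dot g_h=TR_h\!\left(\frac{D}{Dt}\dot g\right)$, and taking the (invariant) norm gives $L(gh,\dot gh,\ddot gh)=L(g,\dot g,\ddot g)$. The left-invariant case is identical. This part is essentially formal, so it should not be the bottleneck.

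For the reduction I would work in the right-trivialization (treating the left-invariant case symmetrically at the end). Write $\dot g(t)=X_{\xi(t)}(g(t))$, where $X_\xi$ denotes the right-invariant vector field extending $\xi\in\mathfrak{g}$, and use the product/Leibniz rule for $\frac{D}{Dt}$ along $g(t)$ to get
\[
\frac{D}{Dt}\dot g \;=\; X_{\dot\xi}(g)\;+\;\bigl(\nabla_{X_\xi}X_\xi\bigr)(g).
\]
Then I would compute $\nabla_{X_\xi}X_\xi$ at $e$ via the Koszul formula. Because the metric is right-invariant and the $X_\xi$ are right-invariant, $\gamma(X_\xi,X_\eta)$ is constant, so the three ``derivative'' terms in Koszul vanish; using $[X_\xi,X_\eta]=-X_{[\xi,\eta]}$ for right-invariant fields, the three bracket terms collapse (via the definition of $\operatorname{ad}^\dagger$ and its interaction with $\flat,\sharp$) to
\[
\nabla_{X_\xi}X_\xi\big|_e \;=\; \operatorname{ad}^\dagger_\xi\xi.
\]
Since $\nabla_{X_\xi}X_\xi$ is itself right-invariant, this propagates to $\nabla_{X_\xi}X_\xi|_g=R_{g*}(\operatorname{ad}^\dagger_\xi\xi)$, so altogether $\frac{D}{Dt}\dot g=R_{g*}(\dot\xi+\operatorname{ad}^\dagger_\xi\xi)$. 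Applying the right-invariant norm finishes the right-invariant case:
\[
L(g,\dot g,\ddot g)=\tfrac12\bigl\|\dot\xi+\operatorname{ad}^\dagger_\xi\xi\bigr\|_{\mathfrak{g}}^{2}.
\]
For the left-invariant case the analogous computation uses $[X_\xi,X_\eta]=+X_{[\xi,\eta]}$ for left-invariant fields, which flips the sign and yields $\dot\xi-\operatorname{ad}^\dagger_\xi\xi$. The two cases together give the $\pm$ in \eqref{reduced-lag-ell_dash}.

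The main obstacle I expect is strictly bookkeeping: keeping track of the sign in the bracket of right- vs.\ left-invariant vector fields, and of the sign conventions in Koszul, so that the three bracket terms combine into exactly $2\,\operatorname{ad}^\dagger_\xi\xi$ (with the appropriate sign) once the symmetric slot $\eta=\xi$ is used and $\operatorname{ad}^\dagger$ is read off via $\langle\operatorname{ad}^\dagger_\xi\eta,\zeta\rangle=\gamma(\eta,[\xi,\zeta])$. No substantive analytic input is required beyond the Koszul formula and the invariance properties already recalled in \S\ref{geomset-sec}.
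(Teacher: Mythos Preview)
Your proof is correct and follows essentially the same route as the paper: both compute $\frac{D}{Dt}\dot g$ in right- (resp.\ left-) trivialized form and identify it with $TR_g(\dot\xi+\operatorname{ad}^\dagger_\xi\xi)$ (resp.\ $TL_g(\dot\xi-\operatorname{ad}^\dagger_\xi\xi)$), then apply the invariance of the norm. The only differences are presentational: the paper quotes the ready-made Levi-Civita formula for one-sided invariant metrics on $G$ (from \cite{KrMi1997}) instead of deriving $\nabla_{X_\xi}X_\xi$ via Koszul, and it deduces the invariance of $L$ \emph{a posteriori} from the reduced expression rather than arguing it first via naturality of $\nabla$ under isometries.
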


\rem{\begin{proposition}\label{Prop-equiv}
Consider the Lagrangian $\ell: T\mathfrak{g}\to\mathbb{R}$ given by the squared norm,
\begin{equation}
\label{reduced-lag-ell}
\ell( \xi , \dot{ \xi })= \frac{1}{2}\left \| \dot{ \xi }^\flat \pm \operatorname{ad}^*_ \xi \xi^\flat \right \| ^2.
\end{equation}
This $\ell$ is the reduced Lagrangian associated to the second order Lagrangian $L: T^ { (2)} G\rightarrow \mathbb{R}$ given by
\begin{equation}
\label{un-reduced-lag-ell}
L(g, \dot{ g}, \ddot{ g})=\frac{1}{2} \left \| \frac{D}{Dt}\dot g\right\|^2,
\end{equation}
where $\|\cdot\|$ is the norm associated the $G$-invariant Riemannian metric induced by $\gamma $, and
\begin{equation}
\label{DDt-g-dot}
\frac{D}{Dt}\dot g= \ddot{ g}+ \Gamma (g) ( \dot{ g}, \dot{ g})
\end{equation}
locally.
\end{proposition}\noindent}

\begin{proof}
Let us recall the expression of the Levi-Civita covariant derivative associated to a right (respectively left) 
$G$-invariant Riemannian metric on $G$. For 
$X\in \mathfrak{X}(G)$ and $v _g \in T_gG$, we have 
(e.g., \cite{KrMi1997}, Section 46.5)
\begin{align}
\label{cov_left}
\nabla_{v_g}X(g)&=TR_g\left(\mathbf{d}f(v_g)+
\frac{1}{2}\operatorname{ad}^\dagger_vf(g)+
\frac{1}{2}\operatorname{ad}^\dagger_{f(g)}v-
\frac{1}{2}[v,f(g)]\right),\quad v:=v_g g ^{-1}\\
\label{cov_right}
\text{resp.}\quad \nabla_{v_g}X(g)&=
TL_g\left(\mathbf{d}f(v_g)-
\frac{1}{2}\operatorname{ad}^\dagger_vf(g)-
\frac{1}{2}\operatorname{ad}^\dagger_{f(g)}v+
\frac{1}{2}[v,f(g)]\right), \quad v:=g ^{-1}v_g
\end{align}
where $f\in \mathcal{F} (G; \mathfrak{g})$
is uniquely determined by the condition $X(g)=TR_g(f(g))$
for right-, respectively $X(g)=TL_g(f(g))$
for left $G$-invariance.
Therefore, we have
\[
\frac{D}{Dt} \dot g(t)= \nabla _{\dot g} \dot {g}
=TR_g\left(  \dot{ \xi } +\frac{1}{2}\operatorname{ad}^\dagger_\xi \xi +\frac{1}{2}\operatorname{ad}^\dagger_{\xi }\xi -\frac{1}{2}[\xi ,\xi ]\right) 
=TR_g\left(  \dot{ \xi } +\operatorname{ad}^\dagger_\xi \xi \right),
\]
respectively
\[
\frac{D}{Dt} \dot g(t)= \nabla _{\dot g} \dot {g}
=TL_g\left(  \dot{ \xi } -\frac{1}{2}\operatorname{ad}^\dagger_\xi \xi -\frac{1}{2}\operatorname{ad}^\dagger_{\xi }\xi +\frac{1}{2}[\xi ,\xi ]\right) 
=TL_g\left(  \dot{ \xi } -\operatorname{ad}^\dagger_\xi \xi \right),
\]
where we used $X(g)= \dot{ g}$, $v_g= \dot{ g}$, so $f(g)= \dot{ g} g ^{-1} = \xi $ (respectively, $f(g) = g^{-1} \dot{g} = \xi$) and $ \mathbf{d} f( v_g)= \dot{ \xi }$.

Thus we obtain, due to the right-, or left-invariance of the metric $\gamma$,
  \begin{equation}
    L(g, \dot{g}, \ddot{g}) = \frac{1}{2} \left\|\frac{D}{Dt}\dot{g}\right\|_g^2 = \frac{1}{2}\left\|\dot{\xi} \pm \operatorname{ad}^\dagger_\xi\xi\right\|_{\mathfrak{g}}^2,
  \end{equation}
which depends only on the right invariant quantity
$\xi = \dot{g} g^{-1}$, respectively the left invariant
quantity $\xi = g^{-1} \dot{g}$.
 Accordingly, $L$ is right-, or left-invariant, and  the group-reduced Lagrangian is
\[
\ell( \xi , \dot{ \xi })= \frac{1}{2} \left\|\dot{ \xi } \pm\operatorname{ad}^\dagger_\xi \xi\right\|_{\mathfrak{g}}^2
\]
which completes the proof.
\end{proof}

\begin{remark} {\rm The above considerations generalize to geometric $k$-splines for $k > 2$. Indeed, iterated application of formulas \eqref{cov_left}, \eqref{cov_right} yields
\[
\frac{D^{k}}{Dt^{k}} \dot{g} = TR_g\left(\eta_{k}\right), \quad \text{respectively} \quad  \frac{D^{k}}{Dt^{k}} \dot{g} = TL_g\left(\eta_{k}\right),
\]
where the quantities $\eta_k \in \mathfrak{g}$ are defined by the recursive formulae
\begin{align}
\eta_1 = \dot{\xi} \pm\operatorname{ad}^\dagger_\xi \xi , \quad \text{and} \quad \eta_{k} = \dot{\eta}_{k-1} \pm \frac{1}{2} \left(\operatorname{ad}^\dagger_\xi \eta_{k-1} + \operatorname{ad}^\dagger_{\eta_{k-1}}\xi + \operatorname{ad}_{\eta_{k-1}}\xi\right),
\end{align}
for $\xi = \dot{g} g^{-1}$, respectively $\xi = g^{-1} \dot{g}$.
Therefore, the Lagrangian \eqref{k_splines_Lagr}  for geometric $k$-splines on a Lie group $G$ with right-, respectively left-invariant Riemannian metric,
\[
L_k\left( g,\dot g,..., g^{(k)} \right) = \frac{1}{2} \left\|\frac{D^{k-1}}{Dt^{k-1}}\dot g \right\|^2_g,
\]
is right-, respectively left-invariant, and the reduced Lagrangian is
\begin{equation}\label{Reduced_k_spline_Lagrangian}
\ell(\xi, \dot{\xi}, \ldots , \xi^{(k-1)}) = \frac{1}{2} \left\|\eta_{k-1}\right\|^2_{\mathfrak{g}}.
\end{equation}}
\end{remark}

\paragraph{Computing the second-order Euler-Poincar\'e equations for splines.}
Let us compute the Euler-Poincar\'e equations for $k=2$. The required variational derivatives of the reduced Lagrangian \eqref{reduced-lag-ell_dash} are given by
\begin{equation}
\label{var-der-lag-ell}
\frac{\delta \ell }{\delta  \dot{ \xi }}  =\dot{ \xi }^\flat \pm \operatorname{ad}^*_ \xi \xi^\flat=: \eta^\flat 
\quad\text{and}\quad  
\frac{\delta  \ell}{\delta \xi  }=\mp \left( \operatorname{ad}^*_{\eta} \xi ^\flat
+ \left(\operatorname{ad}_{\eta} \xi \right)^\flat \right) \in \mathfrak{g}  ^\ast .
\end{equation}
\begin{framed}\noindent
From formula \eqref{EP_2} with $k=2$ one then finds the $2^{nd}$-order Euler-Poincar\'e equation
\begin{equation}
\label{2nd-EPeqns1}
\left(\partial _t \pm \operatorname{ad}^*_ \xi \right) \left(\partial _t  \eta^\flat 
\pm  \operatorname{ad}^*_{ \eta} \xi ^\flat \pm \left(\operatorname{ad}_{\eta} \xi \right)^\flat\right) =0
, \quad \hbox{with}\quad
 \eta^\flat := \dot{ \xi }^\flat \pm \operatorname{ad}^* _ \xi \xi^\flat
,
\end{equation}
or, equivalently,
\begin{equation}
\label{2nd-EPeqns2}
\left(\partial _t \pm \operatorname{ad}^\dagger_ \xi \right) \left(\partial _t \eta
 \pm  \operatorname{ad}^\dagger_{ \eta} \xi  \pm \operatorname{ad}_{\eta} \xi \right) =0
 ,\quad \hbox{with}\quad
 \eta := \dot{ \xi } \pm \operatorname{ad}^\dagger _ \xi \xi.
\end{equation}
\end{framed}
These are the reduced equations for geometric $2$-splines associated to a left-, or right-invariant Riemannian metric on the Lie group $G$.

In an analogous fashion one can derive the Euler-Poincar\'{e} equations for geometric $k$-splines, using the reduced Lagrangian \eqref{Reduced_k_spline_Lagrangian}.

When the metric is left-, \emph{and} right-invariant (bi-invariant) further simplifications arise.

\paragraph{Example 1: Bi-invariant metric and the NHP equation.}
In the case of a \emph{bi-invariant} Riemannian metric, we have $ \operatorname{ ad}^\dagger _\xi \eta = - \operatorname{ ad}_ \xi \eta $ and therefore $ \eta^\flat= \dot{ \xi} ^\flat$, so that $ \eta= \dot{ \xi}$ and  the equations 
\eqref{2nd-EPeqns2} become
\begin{equation}
\left(\partial _t \pm \operatorname{ad}^*_ \xi \right) \ddot{\xi}^\flat =0\quad\text{or}\quad \left(\partial _t \pm \operatorname{ad}^\dagger_ \xi \right)  \ddot{\xi}=0\quad\text{or}\quad \dddot{\xi}\mp \left[\xi , \ddot{ \xi }\right]=0,
\label{CrSLe-commutator}
\end{equation}
as in \cite{CrSL1995}. Note that in this case, the reduced Lagrangian \eqref{reduced-lag-ell_dash} is simply given by $\ell( \xi , \dot \xi )=\frac{1}{2} \|\dot \xi \|^2$.
We also remark that since the metric is bi-invariant, one may choose to reduce the system either on the right \emph{or} on the left. This choice will determine which sign appears in \eqref{CrSLe-commutator}.

Taking $G=SO(3)$, we recover the NHP equation \eqref{Noakes-eqn} of \cite{NoHePa1989}:
\begin{equation}\label{NHP_equation}
\dddot{\mathbf{\Omega}} = \pm \mathbf{\Omega} \times \ddot{\mathbf{\Omega}}
\,.
\end{equation}
In \cite{NoHePa1989}, the unreduced equations in the general case are also derived, but the symmetry reduced equation is given only for $SO(3)$ with bi-invariant metric.

\begin{remark}{\rm [Conventions for $\mathfrak{so}(3)$ and $\mathfrak{so}(3)^*$]\newline
In equation \eqref{NHP_equation} and throughout the paper we use vector notation for the Lie algebra $\mathfrak{so}(3)$ of the Lie group of rotations $SO(3)$, as well as for its dual $\mathfrak{so}(3)^*$. One identifies $\mathfrak{so}(3)$ with $\mathbb{R}^3$ via the familiar isomorphism
\begin{equation}\label{hat_map}
\,\widehat{\,}:  \mathbb{R}^3\rightarrow \mathfrak{so}(3),\quad\mathbf{\Omega}=\left(
\begin{array}{c}
a\\
b\\
c\\
\end{array}\right) \mapsto \Omega:=\widehat{\mathbf{\Omega}}=\left(
\begin{array}{ccc}
0&-a&b\\
a&0&-c\\
-b&c&0

\end{array}
\right),
\end{equation}
called the \emph{hat map}.
This is a Lie algebra isomorphism when the vector cross product $\times$ is used as the Lie bracket operation on 
$\mathbb{R}^3$. The identification of $\mathfrak{so}(3)$ with $\mathbb{R}^3$ induces an isomorphism of the dual spaces $\mathfrak{so}(3)^* \cong \left(\mathbb{R}^3\right)^* \cong \mathbb{R}^3$.}
\end{remark}

\paragraph{Example 2: Elastica.}
Another example of the $2^{nd}$-order Euler-Poincar\'e equation arises in the case of \emph{elastica} treated in \cite{HuBl2004a}, whose Lagrangian is
\[
L(g, \dot{ g}, \ddot{ g})=\frac{\tau^{2} }{2} \| \dot{ g}\|_g^2+ \frac{1}{2} \left \| \frac{D}{Dt}\dot g\right\|_g^2,
\]
and whose reduced Lagrangian is
\begin{equation}\label{red_lagr_elastica}
\ell( \xi , \dot{ \xi })= \frac{\tau^{2}  }{2} \| \xi \|_\mathfrak{g} ^2 +\frac{1}{2} \|\dot{ \xi } \pm\operatorname{ad}^\dagger_\xi \xi\|_\mathfrak{g}^2.
\end{equation}
Using the $2^{nd}$-order Euler-Poincar\'e equation \eqref{EP_2} one easily obtains the reduced equations
\begin{equation}
\left(\partial _t \pm \operatorname{ad}^\dagger_ \xi \right) \left(\partial _t \eta
 \pm  \operatorname{ad}^\dagger_{ \eta} \xi  \pm \operatorname{ad}_{\eta} \xi - \tau^{2} \xi \right) =0
 ,\quad \hbox{with}\quad
 \eta := \dot{ \xi } \pm \operatorname{ad}^\dagger _ \xi \xi,
\end{equation}
which simplify to
\[
\left(\partial _t \pm \operatorname{ad}^\dagger_ \xi \right) \left(\partial _t^2 \xi- \tau^{2} \xi \right) =0
\]
in the bi-invariant case.

\begin{remark}{\rm We now consider the particular case $G=SO(3)$. 
Let $\mathbf{I}$ be a $3\times 3$ symmetric positive definite matrix (inertia tensor) and consider the inner product $\gamma(\mathbf{\Omega}_1,\mathbf{\Omega}_2)= \mathbf{I}\mathbf{\Omega}_1\cdot \mathbf{\Omega}_2$ on $\mathbb{R}^3$.
The Lagrangian for the elastica on $SO(3)$ reads
\[
L( \Lambda  , \dot \Lambda  , \ddot \Lambda  )
=
\frac{\tau^{2}}{2}\left\| \dot \Lambda \right\|_{\Lambda}^2 
+ 
\frac{1}{2} \left\| \frac{D}{Dt} \dot \Lambda\right\|_{\Lambda}^2
,\]
where $ \|\cdot\|_{\Lambda}$ is the right-, respectively 
left-invariant metric induced the inner product $\gamma$. Relative to this inner product we have 
\[
\operatorname{ad}^\dagger_{\mathbf{\Omega}_1}\mathbf{\Omega}_2=  \mathbf{I}  ^{-1} ( \mathbf{I}\,\mathbf{\Omega}_2 \times \mathbf{\Omega }_1)
,
\] 
so the reduced Lagrangian \eqref{red_lagr_elastica} reads
\begin{align} 
\ell (\boldsymbol{\Omega}, \dot{ \boldsymbol{\Omega}})
&= 
\frac{\tau^{2}}{2} \| \boldsymbol{\Omega} \| ^2
+ 
\frac{1}{2} \| \boldsymbol{\dot{\Omega}}\pm \mathbf{I} ^{-1} ( \mathbf{I}  \boldsymbol{\Omega} \times \boldsymbol{\Omega}) \| ^2
\nonumber\\
&=
\frac{\tau^{2}}{2} \boldsymbol{\Omega} \cdot \mathbf{I} \boldsymbol{\Omega}
+ \frac{1}{2}  \left( \mathbf{I} \boldsymbol{\dot{\Omega}}\pm   \mathbf{I} \boldsymbol{\Omega} \times \boldsymbol{\Omega}\right)  \cdot \mathbf{I}^{-1}    \left( \mathbf{I} \boldsymbol{\dot{\Omega}}\pm   \mathbf{I}   \boldsymbol{\Omega} \times \boldsymbol{\Omega}\right).
\label{reduced-lag}
\end{align}
If $\tau=0$, this expression can be interpreted as the Lagrangian for geometric $2$-splines of a rigid body.

If $\mathbf{I} $ is the identity, the Lagrangian in (\ref{reduced-lag})simplifies to
\[
\ell(\boldsymbol{\Omega}, \dot{ \boldsymbol{\Omega}})=\frac{\tau^{2}}{2} \boldsymbol{\Omega} \cdot \boldsymbol{\Omega}
+ \frac{1}{2}  \boldsymbol{\dot{\Omega}} \cdot  \boldsymbol{\dot{\Omega}}
\]
and the Lagrangian of the NHP equation is recovered when $\tau=0$.
}
\end{remark}

\paragraph{Example 3: $L^2$-splines.} One can consider $L^2$ geometric $2$-splines on the diffeomorphism group of a manifold $ \mathcal{D} $ as follows. Fix a Riemannian metric $g$ on $ \mathcal{D} $ and consider the associated $L^2$ right-invariant Riemannian metric on $G= \operatorname{Diff}( \mathcal{D} )$ and its induced second-order Lagrangian
\[
L( \eta , \dot \eta , \ddot \eta )= \frac{1}{2} \left\| \frac{D}{Dt} \dot \eta \right\|_\eta^2
\]
on $T^{(2)} \operatorname{Diff}( \mathcal{D} )$. The reduced Lagrangian on $ 2 \mathfrak{g}  = 2 \mathfrak{X}  ( \mathcal{D} )$ reads
\[
\ell( \mathbf{u} , \dot{\mathbf{u}})= \frac{1}{2} \| \dot{\mathbf{u}} + \operatorname{ad}^\dagger_ \mathbf{u} \mathbf{u} \|^2,
\]
where $\operatorname{ad}^\dagger$ denotes the transpose with respect to the $L^2$ inner product, given by  $\operatorname{ad}^\dagger_{\mathbf{u}} \mathbf{v} = \nabla_\mathbf{u} \mathbf{v} + \left(\nabla \mathbf{u}\right)^\mathsf{T} \cdot \mathbf{v} + \mathbf{v} \operatorname{div} \mathbf{u} $. In this case, the $ \operatorname{ad}^\dagger$ and $ \operatorname{ad}$ terms  in \eqref{2nd-EPeqns2} combine to produce the spline equation
\[
\left( \partial _t + \operatorname{ad}_ {\mathbf{u}} ^\dagger \right) \left( \partial _t \mathbf{v} +2\mathsf{S}\mathbf{v}\cdot \mathbf{u} + \mathbf{u} \operatorname{div} \mathbf{v} \right) =0,\quad \mathbf{v} = \partial _t \mathbf{u} + 2\mathsf{S}\mathbf{u}\cdot \mathbf{u}+ \mathbf{u} \operatorname{div} \mathbf{u},
\]
where $ \mathsf{S}\mathbf{u} := \left( \nabla \mathbf{u} + \nabla \mathbf{u}^\mathsf{T} \right) /2$ is the strain-rate tensor.

In the incompressible case, that is when $G= \operatorname{Diff}_{vol}( \mathcal{D} )$, the transpose of $\operatorname{ad}$ relative to the $L^2$ inner product on divergence free vector field is denoted by $\operatorname{ad}^+$ and is related to $\operatorname{ad}^\dagger$ by the formula
\[
\operatorname{ad}^+_ \mathbf{u} \mathbf{v} =\mathbb{P}\left(\operatorname{ad}^\dagger_ \mathbf{u} \mathbf{v}\right)=\mathbb{P}\left( \nabla _ \mathbf{u}\mathbf{v} + \left(\nabla \mathbf{u}\right)^\mathsf{T} \cdot \mathbf{v} \right),
\]
where $ \mathbb{P}$ denotes the Hodge projector onto the divergence free vector fields. In this case \eqref{2nd-EPeqns2} reads
\[
\left( \partial _t + \operatorname{ad}^+_\mathbf{u} \right) \left( \partial _t \mathbf{v} +\mathbb{P}\left( \nabla_ \mathbf{v}  \mathbf{u}+ \left(\nabla \mathbf{v}\right)^\mathsf{T} \cdot \mathbf{u} \right)+ \nabla _ \mathbf{u} \mathbf{v} - \nabla _ \mathbf{v} \mathbf{u}  \right) =0,\quad \mathbf{v} = \partial _t \mathbf{u} +2 \mathbb{P}\left( \mathsf{S} \mathbf{u} \cdot \mathbf{u} \right),\quad \operatorname{div} \mathbf{u} =0.
\]
Remarkably, using the formula $\operatorname{ ad}^\dagger _ \mathbf{u} \nabla p= \nabla ( \nabla p \cdot \mathbf{u} )$ for $\operatorname{div} \mathbf{u} =0$, all the gradient terms arising from the Hodge projector can be assembled in a single gradient term, thereby producing the incompressible $2$-spline equations
\[
\left( \partial _t + \operatorname{ad}^\dagger_\mathbf{u} \right) \left( \partial _t \mathbf{w} +2\mathsf{S} \mathbf{w} \cdot \mathbf{u}  \right)=- \nabla p,\quad \mathbf{w} = \partial _t \mathbf{u} +2  \mathsf{S} \mathbf{u} \cdot \mathbf{u},\quad \operatorname{div}\mathbf{u} =0,
\]
where $ \operatorname{ad}^\dagger$ (and not $\operatorname{ad}^+$) is used.

\paragraph{Example 4: $H^1$-splines.} One can alternatively consider splines relative to the right-invariant metric induced by the $H^1$ inner product $\left\langle Q\mathbf{u} , \mathbf{u} \right\rangle $, where $Q=(1- \alpha ^2 \Delta ) $. In this case, the $2$-spline equation reads
\begin{equation}\label{H1_splines}
\left( \partial _t + \operatorname{ad}^\dagger_ \mathbf{u} \right) \left(\partial _t Q\mathbf{v}+ 2 \mathsf{S}( Q \cdot \operatorname{ad}_ \mathbf{v} ) \cdot \mathbf{u} \right) =0,\quad Q \mathbf{v} = \partial _t Q \mathbf{u} + \operatorname{ad}^\dagger_ \mathbf{u} Q \mathbf{u} 
\end{equation}
where $\mathsf{S}(L):= \frac{1}{2} (L+L^*)$.

\begin{remark}
{\rm Note that in order to obtain the simple expression
\[
\ell( \xi , \dot \xi )= \frac{1}{2} \| \xi \|_1^2+ \frac{1}{2} \| \dot\xi\|_2 ^2
\]
(instead of \eqref{red_lagr_elastica}) for $\ell$ by reduction, one needs to modify the spline Lagrangian as
\[
L( g, \dot g, \ddot g)= \frac{1}{2} \left\| \dot g\right\|_1^2 + \frac{1}{2} \left\| \frac{D}{Dt} \dot g \pm \operatorname{ad}^\dagger_{\dot g}\dot g\right\| ^2_2 ,
\]
where $\|\cdot\|_i$, $i=1,2$, are two norms associated to two $G$-invariant Riemannian metrics on $G$ and
$\operatorname{ad}^\dagger$ is extended as a bilinear map $T_gG \times T_gG \rightarrow T_gG$ by $G$-invariance.

The associated $2^{nd}$ order Euler-Poincar\'e equations are simpler than the one associated to $2$-splines. For example, the reduced Lagrangian $\ell( \mathbf{u} , \dot{\mathbf{u}} )= \frac{1}{2} \left\langle ( 1- \alpha ^2 \Delta ) \mathbf{u} , \mathbf{u} \right\rangle + \frac{1}{2} \| \dot{ \mathbf{u} }\|^2 $ produces the following modification of the EPDiff equation:
\begin{equation}\label{modified_H1_splines}
\left( \partial _t + \operatorname{ad}^\dagger_ \mathbf{u} \right) \left(  \mathbf{u} - ( \alpha ^2 \Delta \mathbf{u} + \mathbf{u} _{tt}) \right) =0.
\end{equation}
}
\end{remark}

\subsection{Parameter dependent Lagrangians}\label{subsec_param_dep_Lagr}
In many situations, such as the heavy top or the compressible fluid, the Lagrangian of the system is defined on the tangent bundle $TG$ of the configuration Lie group $G$, but it is not $G$-invariant. In these cases, the Lagrangian depends parametrically on a quantity $ q _0$ in a manifold $Q$ on which $G$ acts and that breaks the symmetry of the Lagrangian $L=L_{ q _0 }$. We refer to \cite{HoMaRa1998}, \cite{GBRa2009} for the case of (affine) representation on vector spaces, relation with semidirect products and many examples. This theory was extended to arbitrary actions on manifolds in \cite{GBTr2010} for applications to symmetry breaking phenomena. We now briefly present the extension of this theory to the case of higher order Lagrangians.

Consider a $k^{th}$-order Lagrangian $L_{q_0 } : T^{(k)}G \rightarrow \mathbb{R}  $, depending on a parameter $ q _0$ in a manifold $Q$. We suppose that $G$ acts on the manifold $Q$ and that the Lagrangian $L$ is $G$-invariant under the action of $G$ on both $T^{(k)}G$ and $Q$, where we now see $L$ as a function defined on $T^{(k)}G \times Q$. Concerning the action of $G$ on $T^{(k)}G \times Q$, there are several variants that one needs to consider, since they all appear in applications. 

\medskip

$(1)$ First, one has the right, respectively the left action
\begin{eqnarray*}
(g, \dot g, ..., g^{(k)}, q _0 ) &\mapsto& \left( gh, \dot gh, ..., g^{(k)}h, h^{-1}q _0 \right)
,\\
\text{respectively}\quad(g, \dot g, ..., g^{(k)}, q _0 ) &\mapsto& \left( hg, h\dot g, ...,h g^{(k)}, q _0 h^{-1}\right).
\end{eqnarray*}
The reduced variables are $( \xi , q) =( \dot g g ^{-1}, g q _0  )$, respectively $( \xi , q) =( g ^{-1} \dot g , q _0 g )$. In this case, the $k^{th}$-order Euler-Lagrange equations for $L_{ q _0 }$ on $T^{(k)}G$ (where $ q _0 \in Q$ is a fixed parameter) are equivalent to the $k^{th}$-order Euler-Poincar\'e equations together with the advection equation
\begin{framed}\begin{equation}\label{EP_with_q_RL}
\left(\partial _t  \pm \operatorname{ad}^*_ \xi \right)  
\left(
\sum _{ j=0}^ {k-1} (-1)^j \partial _t ^j  \frac{\delta \ell }{\delta  \xi^{ (j) } }
\right)
=
\mathbf{J} \left( \frac{\delta   \ell}{ \delta  q} \right),\quad \partial _t q - \xi _Q (q)=0,
\end{equation}
\end{framed}\noindent
with initial condition $q _0$. Here $ \xi _Q(q)$ denotes the infinitesimal generator of the $G$ action on $Q$ and $ \mathbf{J} : T^*Q \rightarrow \mathfrak{g}  ^\ast $ defined by $ \left\langle \mathbf{J} ( \alpha _q ) , \xi \right\rangle := \left\langle \alpha _q , \xi _Q (q) \right\rangle $ denotes the momentum map associated to the $G$ action on $T^*Q$.

The associated variational principle reads
\[
\delta \int_{t _1 }^{t _2 }\ell\left(  \xi , \dot \xi , ..., \xi ^{(k-1)}, q \right)dt =0
,\]
relative to the constrained variations \eqref{constr_var} and constrained variations of $q$ given by $ \delta q =\eta  _Q(q)$, where $ \eta = (\delta g )g ^{-1} $, respectively
$\eta = g^{-1}(\delta g)$. Equations \eqref{EP_with_q_RL} and their variational formulation can be obtained by an easy generalization of the approach used in Section \S\ref{k_order_EP_sec}.

\medskip

$(2)$ Secondly, one can consider the right, respectively the left action
\begin{eqnarray*}
(g, \dot g, ..., g^{(k)}, q _0 ) &\mapsto& \left( gh, \dot gh, ..., g^{(k)}h, q _0h \right) 
,\\
\text{respectively}\quad(g, \dot g, ..., g^{(k)}, q _0 ) &\mapsto& \left( hg, h\dot g, ...,h g^{(k)}, hq _0\right).
\end{eqnarray*}
The reduced variables are $( \xi , q) =( \dot g g ^{-1},  q _0g ^{-1}   )$, respectively $( \xi , q) =( g ^{-1} \dot g ,g ^{-1}  q _0 )$ and one gets the reduced equations
\begin{framed}\begin{equation}\label{EP_with_q_RR}
\left(\partial _t  \pm \operatorname{ad}^*_ \xi \right)  
\left(\sum _{ j=0}^ {k-1} (-1)^j \partial _t ^j  \frac{\delta \ell }{\delta  \xi^{ (j) } }\right)
=
-\mathbf{J} \left( \frac{\delta   \ell}{ \delta  q} \right),
\qquad \partial _t q + \xi _Q (q) =0,
\end{equation}
\end{framed}\noindent
with initial condition $q _0$.

\medskip

If $Q=V^*$ is the dual of a $G$-representation space so
which $G$ acts on $V ^\ast$ by the dual representation, the above equations reduce to 
\begin{equation}
\left(\partial  _t \pm \operatorname{ad}^*_ \xi \right)  
\left(\sum _{ j=0}^ {k-1} (-1)^j \partial _t ^j  \frac{\delta \ell }{\delta  \xi^{ (j) } }
\right)
=\frac{\delta \ell}{\delta  a}\diamond a,
\end{equation}
where the diamond operation $\diamond: V \times V ^\ast \rightarrow \mathfrak{g}  ^\ast $ is defined by 
\[
 \left\langle v \diamond a, \xi \right\rangle = \left\langle a, \xi _V(v) \right\rangle,
 \]
 and therefore $ \mathbf{J} (a,v)=- v \diamond a$. These are the higher order version of the Euler-Poincar\'e equations with advected quantities studied in \cite{HoMaRa1998}.

\paragraph{Example : Rate-dependent fluid models.}
Rate-dependent fluid models are usually defined using Lagrangians that depend on the strain-rate tensor ${\sf S}:=(\nabla\mathbf{u}+ (\nabla\mathbf{u})^\mathsf{T})/2$ and its higher \emph{spatial} derivatives \cite{BeFoHoLi1988}. A related class of spatially-regularized fluid models have been introduced as turbulence models \cite{FoHoTi2001}.  

Yet another class of rate-dependent fluid models may be defined, e.g., as $2nd$ order Euler-Lagrange equations $T^{(2)} \operatorname{Diff}( \mathcal{D} )$ for a parameter dependent Lagrangian $L_{a_0}$. The group reduced representation of the equations of motion for such rate-dependent fluids is found from the previous manipulations,
namely
\begin{equation}\label{rate_dep_fluid}
\left(\partial _t \pm \pounds _\mathbf{u}  \right) 
\left( \frac{\delta  \ell}{\delta \mathbf{u}  } - \partial _t \frac{\delta \ell }{\delta \dot{ \mathbf{u} } } \right) 
= \frac{\delta \ell }{\delta  a} \diamond a
,\qquad 
\left( \partial _t + \pounds _ \mathbf{u} \right) a=0.
\end{equation}
One has the $+$ sign for right and the $-$ sign for left invariance. The usual Eulerian fluid representation is right-invariant and so takes the $+$ sign. Physically, these fluid models penalize the flow for producing higher temporal frequencies. Therefore, these models might be considered as candidates for \emph{frequency-regularized} models for fluid turbulence. The Kelvin theorem for these fluids involves circulation of the higher time derivatives. For right-invariant higher-order Lagrangians, the Kelvin theorem becomes
\[
\frac{d}{dt}\oint_{c(\mathbf{u} )}
\frac{1}{D}
\left( \frac{\delta  \ell}{\delta \mathbf{u}  } - \partial _t \frac{\delta \ell }{\delta \dot{ \mathbf{u} } } \right) 
=
\oint_{c(\mathbf{u} )}
\frac{1}{D}
\frac{\delta \ell }{\delta  a} \diamond a
,\]
where the density $D$ satisfies the continuity equation $( \partial _t + \pounds _ \mathbf{u}) D=0$. Consequently, the integrands in the previous formula are 1-forms and thus may be integrated around the closed curve $c(\mathbf{u} )$ moving with the fluid velocity, $\mathbf{u}$. This is the statement of the  
Kelvin-Noether theorem \cite{HoMaRa1998} for $k$-splines.

\subsection{Splines with constraints}\label{subsec_constraints}
Suppose that one wants to minimize the action $\int_{t _0 }^{ t _1 }L(q, \dot q, \ddot q)dt $ over curves $q(t)\in Q$ subject to the condition
\[
\left\langle \omega _i (q), \dot q \right\rangle =k _i ,\;\; i=1,...,k,
\]
where $ \omega _i $ are $1$-forms and $ k _i \in \mathbb{R}  $. One uses the variational principle
\begin{equation}\label{variational_principle_constraints}
\delta \int_{t _0 }^{ t _1 }\left( L(q, \dot q, \ddot q)+ \sum_{i=1}^k\lambda _i \left( \left\langle \omega _i (q), \dot q \right\rangle- k _i \right)  \right) dt=0,
\end{equation}
for arbitrary variations $\delta\lambda_i$ of the curves $\lambda_i(t)$, $i=1,...,k$, and for variations $\delta q$ vanishing at the endpoints. Variations relative to $q$ yield the equation
\[
\frac{d^2 }{dt^2 } \frac{\partial  L}{\partial  \ddot q}-\frac{d}{dt} \frac{\partial  L}{\partial  \dot q}+ \frac{\partial  L}{\partial  q} = \sum_{i=1}^k\left(  \lambda _i \mathbf{i} _{ \dot q} \mathbf{d} \omega _i + \dot \lambda _i \omega _i \right),
\]
whereas variations relative to $\lambda_i$ yield the constraint.
For example, for the Lagrangian \eqref{2_splines_Lagr} this yields the equations
\[
\frac{D^3}{Dt^3}\dot q(t)+ R \left( \frac{D}{Dt}\dot q(t),\dot q(t) \right) \dot q(t)= \tau ^2 \frac{D}{Dt}\dot q (t)+\sum_{i=1}^k\left(  \lambda _i \mathbf{i} _{ \dot q} \mathbf{d} \omega _i + \dot \lambda _i \omega _i \right),
\]
as in \cite{BlCr1993}, \cite{CrSL1995}, \cite{HuBl2004}, where $ \omega _i = X _i ^\flat$ for given linearly independent vector fields $X_1, \ldots X_k \in 
\mathfrak{X}(Q)$.

\begin{remark}\rm
The variational principle \eqref{variational_principle_constraints} is equivalent to
\begin{equation}\label{variational_principle_constraints_simplified}
\delta \int_{t _0 }^{ t _1 }\left( 
L(q, \dot q, \ddot q)
+ 
\sum_{i=1}^k\lambda _i  \left\langle \omega _i (q), \dot q\, \right\rangle \right) dt
=0
\quad\text{and}\quad \left\langle \omega _i (q), \dot q \right\rangle =k _i ,\;\; i=1,...,k,
\end{equation}
where only variations of $q$ are involved and the term containing $k_i$ is suppressed.
\end{remark}

We now consider the special case $Q=G$ and we suppose that the one-forms $ \omega _i$, $i=1,...,k$, are $G$-invariant. That is, we can write
\[
\left\langle \omega_i (g),v _g \right\rangle = \left\langle \omega _i (g) g^{-1} , v _g g ^{-1} \right\rangle = \left\langle \zeta_i  , v _g g ^{-1}  \right\rangle \quad\text{resp.}\quad \left\langle \omega_i(g),v _g \right\rangle= \left\langle \zeta_i  ,  g ^{-1}v _g \right\rangle,
\]
where
\[
\zeta _i := \omega _i (e)\in \mathfrak{g}  ^\ast .
\]
The reduction of the variational principle \eqref{variational_principle_constraints} yields the constrained variational principle
\[
\delta \int_{t _0 }^{ t _1 } \left( \ell( \xi , \dot \xi ) + \sum_{i=1}^k \lambda _i \left( \left\langle \zeta _i , \xi \right\rangle - k _i \right) \right) dt=0,
\]
for arbitrary variations $\delta\lambda_i$ of the curves $\lambda_i(t)$, $i=1,...,k$, and variations of $\xi(t)$ satisfying the constraints \eqref{constr_var}. Equivalently, using \eqref{variational_principle_constraints_simplified} we rewrite the stationarity condition as
\[
\delta \int_{t _0 }^{ t _1 } \left( \ell( \xi , \dot \xi ) + \left\langle z, \xi \right\rangle \right) dt=0\quad\text{and}\quad \left\langle \zeta _i ,\xi  \right\rangle =k _i,\;\; i=1,...,k,
\]
for variations of $\xi$ satisfying \eqref{constr_var} and where we have defined $z:= \sum_{i=1}^k\lambda _i\zeta _i\in \mathfrak{g}  ^\ast $. We obtain the equations
\begin{framed}{
\[
\left( \partial _t \pm \operatorname{ad}^*_ \xi \right) \left( \frac{\delta  \ell}{\delta \xi  } - \partial _t \frac{\delta  \ell}{\delta \dot \xi  }+ z \right) =0,\quad \left\langle \zeta _i ,\xi  \right\rangle =k _i .
\]}
\end{framed}
With the Lagrangian \eqref{reduced-lag-ell_dash} for $2$-splines, we find the reduced equations
\[
\left(\partial _t \pm \operatorname{ad}^\dagger_ \xi \right) \left(\partial _t \eta
\pm  \operatorname{ad}^\dagger_{ \eta} \xi  \pm \operatorname{ad}_{\eta} \xi -z\right) =0,\quad \hbox{with}\quad\eta := \dot{ \xi } \pm \operatorname{ad}^\dagger _ \xi \xi
\]
and for bi-invariant metrics, we get
\[
\left(\partial _t \pm \operatorname{ad}^\dagger_ \xi \right) \left(\ddot \xi 
-z\right) =0,
\quad\text{i.e.,}\quad 
\dddot \xi \mp [ \xi , \ddot \xi -z]- \dot z=0, \quad \left\langle \zeta _i ,\xi  \right\rangle =k _i,
\]
which coincides with equation (39) in \cite{CrSL1995}. See also \cite{BlCr1996a}.

We can also consider higher-order constraints, with associated variational principle
\[
\delta \int_{t _0 }^{ t _1 } \left( \ell( \xi , \dot \xi , ..., \xi ^{( k-1)})  + \left\langle z _0 , \xi \right\rangle +...+ \left\langle z _{k-1}, \xi ^{ (k-1) }\right\rangle \right) dt= 0.
\]
In this case, one obtains the equations
\[
\left(\partial _t  \pm \operatorname{ad}^*_ \xi \right)  
\sum _{ j=0}^ {k-1} (-1)^j \partial _t ^j \left(  \frac{\delta \ell }{\delta  \xi^{ (j) } }+ z _j \right) =0.
\]
For example, with $k=2$, and a bi-invariant metric we have
\[
\left(\partial _t \pm \operatorname{ad}^\dagger_ \xi \right) \left(\ddot \xi + \dot z _1 
-z_0 \right) =0.
\]

\section{Clebsch-Pontryagin optimal control}\label{Clebsch-sec}

Here we develop the $k^{th}$-order Euler-Poincar\'e 
equations from an optimal control approach. The ideas 
in \cite{GBRa2010} for $k=1$ generalize easily to higher order.

\begin{definition}\label{def_Clebsch} Let $\Phi$ be a (right or left) action of a Lie group $G$ on a manifold $Q$. For a Lie algebra element $\xi \in\mathfrak{g}$ let
\[
\left.\xi _Q(q):=\frac{d}{dt}\right|_{t=0}\Phi_{\operatorname{exp}(t \xi )}(q)
,\]
denote the corresponding infinitesimal generator of the action. Given a cost function $\ell:k\mathfrak{g}\rightarrow\mathbb{R}$, the \textbf{Clebsch-Pontryagin optimal control problem} is, by definition,
\begin{equation}\label{Clebsch_optimal_control}
\min_{\xi (t) } \int_0^T\ell \left( \xi , \dot \xi ,....,  \xi ^{ (k-1) }\right) dt
\end{equation}
subject to the following conditions:
\begin{itemize}
\item[\rm (A)]\ $\dot q=\xi _Q(q)\quad$ or $\quad({\rm A})'$\ \ $\dot
q=-\xi _Q(q)$;
\item[\rm (B)]\ $q(0)=q_0$ \quad  and \quad  $q(T)=q_T$;
\item[\rm (C)]\ $\xi^{(j)}(0)=\xi^j_0 \quad$ and $\quad \xi^{(j)}(T)=\xi^j_T,\quad j=0,...,k-2$,
\end{itemize}
where $q_0, q_T\in Q$ and $\xi^j_0, \xi^j_T\in\mathfrak{g}$,  $j=0,...,k-2$, are given.
\end{definition}

\paragraph{Variational equations.} We suppose that condition (A) of Definition \ref{def_Clebsch} holds.  (The calculation for case ${\rm (A)}'$ is similar.) The resolution of this problem uses the Pontryagin maximum principle which, under sufficient smoothness conditions, implies that its solution necessarily satisfies the variational principle
\[
\delta \int _{0}^{T}\left(  \ell\left( \xi , \dot \xi ,...., \xi ^{ (k-1) }\right)+\langle\alpha,\dot{q}-\xi  _Q (q)\rangle \right) dt=0,
\]
for curves $ t \mapsto \xi (t)\in \mathfrak{g}  $ and $ t \mapsto \alpha (t)\in T_{q(t)}^*Q$. This variational principle 
yields the conditions
\begin{equation}\label{stationarity_conditions_Clebsch}
\mathbf{J} ( \alpha  (t))=\sum _{ j=0}^ {k-1} (-1)^j \partial _t ^j\frac{\delta \ell }{\delta  \xi^{ (j) } }\quad\text{and}\quad \dot\alpha=\xi _{T^*Q}(\alpha),
\end{equation}
in which $ \mathbf{J} : T^\ast Q\rightarrow \mathfrak{g}^* $ is the cotangent bundle momentum map, as above, and $ \xi _{T^*Q}$ denotes the infinitesimal generator of the cotangent lifted action, denoted 
$\Phi^{T^*}:G \times T^*Q \rightarrow T^*Q$.

If $G$ acts on the right (respectively left), a solution $\alpha(t)$ of $\dot\alpha=\xi _{T^*Q}(\alpha)$ is necessarily of the form $\alpha(t)=\Phi_{g(t)}^{T^*}(\alpha(0))$, where $g(0)=e$ and $\dot g(t)g(t)^{-1}=\xi (t)$ (respectively 
$g(t)^{-1}\dot g(t)=\xi (t)$). The above conditions imply coadjoint motion,
\[
\mathbf{J}(\alpha(t))=\mathbf{J}\left(\Phi^{T^*}_{g(t)}(\alpha(0))\right)=\operatorname{Ad}_{g(t)}^*\mathbf{J}(\alpha(0)),\quad\text{respectively}\quad \mathbf{J}(\alpha(t))=\operatorname{Ad}_{g(t)^{-1} }^*\mathbf{J}(\alpha(0))
,
\]
and by differentiating relative to time, we obtain the left (right) Euler-Poincar\'e equations:
\[
\frac{d}{dt}\mathbf{J}(\alpha(t))=\operatorname{ad}^*_{g(t)^{-1}\dot{g}(t)}\mathbf{J}(\alpha(t))=\operatorname{ad}^*_{\xi (t)}\mathbf{J}(\alpha(t)),
\]
respectively
\[
\frac{d}{dt}\mathbf{J}(\alpha(t))=-\operatorname{ad}^*_{\dot g(t)g(t)^{-1}}\mathbf{J}(\alpha(t))=-\operatorname{ad}^*_{\xi (t)}\mathbf{J}(\alpha(t)).
\]
Upon using the first condition in \eqref{stationarity_conditions_Clebsch}, we recover the $k^{th}$-order Euler-Poincar\'e equations,
\begin{equation}\label{EP_clebsch}
\left(\partial _t \mp \operatorname{ad}^*_ \xi \right)  \sum _{ j=0}^ {k-1} (-1)^j \partial _t ^j\frac{\delta \ell }{\delta  \xi^{ (j) } }=0.
\end{equation}

\paragraph{Example 1: Clebsch approach to the NHP equations.} The NHP equations can be obtained from the Clebsch approach by considering the action of $SO(3)$ on $ \mathbb{R}  ^3 $. The Clebsch-Pontryagin control problem is
\[
\min_{ \xi (t)} \int_0^T \| \dot{ \mathbf{\Omega} }\| ^2 dt,\quad\text{subject to}\quad   \dot {\mathbf{q}}= \mathbf{\Omega} \times \mathbf{q},\;\; \mathbf{q}(0)=\mathbf{q} _0 ,\;\; \mathbf{q}(T)= \mathbf{q}_T,\;\;\mathbf{\Omega}(0)=\mathbf{\Omega}_0,\;\;\mathbf{\Omega}(T)=\mathbf{\Omega}_T.
\]
The stationarity conditions \eqref{stationarity_conditions_Clebsch} read $ \mathbf{q} \times \mathbf{p}=- \ddot{\mathbf{\Omega}} $, $ \dot {\mathbf{q}}= \mathbf{\Omega} \times \mathbf{q}$ and $\dot {\mathbf{p}}= \mathbf{\Omega} \times \mathbf{p}$. One directly observes that they imply the NHP equations.

\paragraph{Example 2: Clebsch approach to $H^1$-splines.} We let the diffeomorphism group $ \operatorname{Diff}_{vol}( \mathcal{D} )$ act on the left on the space of embeddings 
$\operatorname{Emb}(S,\mathcal{D})$ of a manifold $S$ in
$D$. The associated Clebsch-Pontryagin control problem is
\begin{align*}
\min_{ \mathbf{u} (t)} \int_0^T \left\| \dot{\mathbf{u}} + \operatorname{ad}_ \mathbf{u} ^\dagger \mathbf{u} \right\|^2_{H^1} dt,\quad\text{subject to}\quad &\dot{ \mathbf{Q} }= \mathbf{u} \circ \mathbf{Q} ,\;\; \mathbf{Q} (0)= \mathbf{Q} _0 ,\;\; \mathbf{Q} (T)= \mathbf{Q} _T,\\
&\mathbf{u}(0)=\mathbf{u}_0,\;\;\mathbf{u}(T)=\mathbf{u}_T.
\end{align*}
The condition
\[
\mathbf{J} ( \mathbf{Q} , \mathbf{P} )=\partial _t Q\mathbf{v}+ 2 \mathsf{S}( Q \cdot \operatorname{ad}_ \mathbf{v} ) \cdot \mathbf{u} ,
\]
where $Q \mathbf{v} = \partial _t Q \mathbf{u} + \operatorname{ad}^\dagger_ \mathbf{u} Q \mathbf{u} $, together with the Hamilton equations on $T^* \operatorname{Emb}(S, \mathcal{D} )$ imply the $H^1$ spline equations \eqref{H1_splines}. In the case of equations \eqref{modified_H1_splines} the condition \eqref{stationarity_conditions_Clebsch} reads
\[
\mathbf{J}  ( \mathbf{Q} , \mathbf{P} )= Q \mathbf{u} - \mathbf{u} _{tt}.
\]

\paragraph{Additional $q$-dependence in the Lagrangian.} 
One can easily include a $q$-dependence in the cost function $\ell$ of the Clebsch-Pontryagin optimal control problem \eqref{Clebsch_optimal_control}. In this case, the stationarity conditions \eqref{stationarity_conditions_Clebsch} become
\begin{equation}\label{stationarity_conditions_Clebsch_q}
\mathbf{J} ( \alpha  (t))=\sum _{ j=0}^ {k-1} (-1)^j \partial _t ^j\frac{\delta \ell }{\delta  \xi^{ (j) } }\quad\text{and}\quad \dot\alpha=\xi _{T^*Q}(\alpha)+ \operatorname{Ver}_\alpha \frac{\delta  \ell}{\delta  q},
\end{equation}
where  for $\alpha, \beta \in T ^\ast _qQ$, the \textit{vertical lift of $\beta $ relative to $\alpha$} is defined by
\[
\operatorname{Ver}_ \alpha \beta : = \left.\frac{d}{ds}\right|_{s=0} ( \alpha+ s \beta) \in T_{ \alpha} ( T ^\ast Q).
\]
In this case, the differential equation \eqref{EP_clebsch} for the control $ \xi (t) $ generalizes to 
\begin{equation}\label{q-dep-eqn}
\left(\partial _t \mp \operatorname{ad}^*_ \xi \right)  \sum _{ j=0}^ {k-1} (-1)^j \partial _t ^j\frac{\delta \ell }{\delta  \xi^{ (j) } }=\mathbf{J} \left( \frac{\delta  \ell}{\delta  q} \right) ,
\end{equation}
where $ \mathbf{J} : T^\ast Q\rightarrow \mathfrak{g}^* $ is again the cotangent bundle momentum map.

\begin{remark}\rm
[Recovering Euler-Poincar\'e equations of \S\ref{subsec_param_dep_Lagr}]\label{left_right_remark} 
$\,$\\  
Equations \eqref{q-dep-eqn} for the control recover the $k^{th}$-order Euler-Poincar\'e equations \eqref{EP_with_q_RL}.
Note that a right, respectively left action of $G$ on $Q$ produces the left, respectively  right Euler-Poincar\'e equations in \eqref{EP_clebsch} consistently with the results in \S\ref{subsec_param_dep_Lagr}. In order to obtain the  Euler-Poincar\'e equations \eqref{EP_with_q_RR} one needs to impose condition ${\rm (A)}'$ instead of ${\rm (A)}$ on the dynamics on $q$.
\end{remark}

\section{Higher-order template matching problems}
\label{hoTemplateMatching-sec}

In this section we generalize the methods of \cite{BrFGBRa2010} to higher order because the added smoothness provided by higher-order models makes them attractive for longitudinal data interpolation, in particular in Computational Anatomy \textit{(CA)}.

We first give a brief account of the previous work done on longitudinal data interpolation in \textit{CA}. Then we derive the equations that generalize \cite{BrFGBRa2010}. After making a few remarks concerning the gain in smoothness, we provide a qualitative discussion of two Lagrangians of interest for \textit{CA}. Finally, we close the section by demonstrating the spline approach to template matching for the finite dimensional case of fitting a spline through a sequence of orientations on $SO(3)$.

\subsection{Previous work on longitudinal data interpolation in CA} \textit{CA} is concerned with modeling and quantifying diffeomorphic evolutions of shapes, as presented in 
\cite{MillerARBE2002,MillerIJCV2001}. Usually one aims at finding a geodesic path, on the space of shapes, between given initial and final data. This approach can be adapted for longitudinal data interpolation; that is, interpolation through a sequence of data points. One may interpolate between the given data points in such a way that the path is piecewise-geodesic, \cite{BegKhanISBI08,durlemann}. It was, however, argued in \cite{TrVi2010} that higher order models, i.e., models that provide more smoothness than the piecewise-geodesic one, are better suited as growth models for typical biological evolutions. As an example of such a higher-order model, spline interpolation on the Riemannian manifold of landmarks was studied there.  In the next paragraph, we will consider another class of models of interest for \textit{CA} that are inspired by an optimal control viewpoint. Indeed, the time-dependent vector field can be seen as a control variable acting on the template and the penalization on this control variable will be directly defined on the Lie algebra. Finally, we underline that this class of models is an interesting alternative to the shape splines model presented in \cite{TrVi2010}.

\subsection{Euler-Lagrange equations for higher-order template matching}\label{some-analytical-remarks-sec}

Let $G$ be a Lie group with Lie algebra $\mathfrak{g}$, and let
\begin{equation}\label{left_representation}
G\times V \rightarrow V , \quad (g, I) \mapsto gI
\end{equation}
be a left representation of $G$ on $V$. Let $\|\cdot\|_V$ be a norm on $V$. We consider minimization problems of the following abstract form:\\

\noindent\textit{Given a Lagrangian $\ell: (k-1) \mathfrak{g} \rightarrow \mathbb{R}$, $\sigma, t_1, \ldots, t_l \in \mathbb{R}$, $T_0, I_{t_1}, \ldots , I_{t_l} \in V$, and $\xi^0_0, \ldots, \xi^{k-2}_0  \in \mathfrak{g}$, minimize the functional
\begin{equation}\label{LongitudinalInterpolation_dash}
E[\xi] := \int_0^{t_l} \ell(\xi(t), \ldots, \xi^{(k-1)}(t))  dt + \frac{1}{2\sigma^2}\sum_{i = 1}^l \left\|g^\xi(t_i)T_0 - I_{t_i}\right\|^2_V
\end{equation}
subject to the conditions $\xi^{(j)}(0) = \xi^j_0$, $j = 0, \ldots, k-2$, where $g^\xi(t_i)$ is the flow of $\xi(t)$ evaluated at time $t_i$. The minimization is carried out over the space $$\mathcal{P}_{k-1} := \left\{ \xi \in C^{k-2}([0,t_l],\mathfrak{g}) \mid \xi^{(k-1)} \in C^{\infty}_{pcw}\left( ([t_i,t_{i+1}])_{i=0}^{l-1} \right) \right\}$$ where $C^{\infty}_{pcw}\left( ([t_i,t_{i+1}])_{i=0}^{l-1} \right)$ denotes the set of piecewise smooth curves whose only discontinuities would be at the $t_i$, $i = 1, \ldots, l-1$, i.e. $$C^{\infty}_{pcw}\left( ([t_i,t_{i+1}])_{i=0}^{l-1} \right) := \left\{ f \in L^2([0,T],\mathfrak{g}) \mid \forall i=0, \ldots, l-1 \; \exists \, f^i \in C^{\infty}([t_i,t_{i+1}],\mathfrak{g}) \text{ s.t. } f = f^i_{|(t_i,t_{i+1})} \right\}.$$
}

\rem{
\color{blue}
\todo{TUDOR asks FX via DARRYL: There are two possible definitions. The one above and the one in this box:
\[
\mathcal{P}_{k-1}=\left\{\xi\in C^{k-2}([0,t_l]) \mid \xi|_{(t_{i-1},t_i)}\in C^{\infty}(t_{i-1},t_i), i=1,...,l-1\right\}.
\]
There is a mathematical difference that is serious:
In the first one, we assume that the derivative of
order $k-1$ exists; concretely it means that this
$k-1$ order derivative exists at the points $t_i$.
This derivative is not continuous, in general.

In this box, the formula preferred by Francois, there
is no derivative of order $k-1$ at the points $t_i$,
so this derivative may not even exist, let alone
be continuous.

Francois notes that the proof goes through with this weaker definition; I agree. But the text states
in words that we need the derivative of order $k-1$.
So which is it?

My feeling, like Francois', is that we should stick 
with the weakest possible hypotheses in order to have
the maximal flexibility for future applications.}
\color{red}
\todo{FX Answer:
Thank you for this interesting remark on the smoothness assumptions. I rewrote the definition of the space $\mathcal{P}_{k-1}$.

To state Theorem \ref{ComputationalAnatomyHigherOrderTheorem} below, we need to impose the existence of the right and left limits for the derivatives of the path at higher order as needed in equation (\ref{left-right}).  With your second definition, we don't see how to be sure the boundary terms in Theorem \ref{ComputationalAnatomyHigherOrderTheorem}  are well-defined.

As you say, let's stick with the weakest possible hypotheses in order to then obtain a wider range of applications.

If you do see a way through with the weaker hypothesis, would you please make the appropriate changes in the proof of Theorem \ref{ComputationalAnatomyHigherOrderTheorem} below?

Thanks!

}
\color{blue}
\todo{F: I agree that we need $\xi\in C^{k-2}([0,t_l])$ and that we need the existence of the right and left limits for the higher derivatives, that is, we need $\xi^{(k-2)}\in C^\infty_{pcw}$. But why do we need the existence of $\xi^{(k-1)}$ at the point $t_i$?\\
T: Indeed, in formula (5.12) $\xi^{(k-1)}(t_i)$ does not seem
to appear. Or am I overlooking something?
\\
\color{red}
DH to FX and DM: OK, please read the two  sentences after eqn (5.12). Is it enough? If it's enough, let's  just erase all these boxes and we'll be finished! 
\\
\color{blue}
FX to Darryl and DM: I commented out the two sentences after eqn (5.12). They were supposed to help the reader to understand the hypothesis that 
$\xi^{(k-1)}\in C^\infty_{pcw}$. However, as Tudor and Francois struggle to understand them, I propose to replace by sthg smoother.
In fact, from what Francois has written above, I think we all agree on the fact that we don't need the existence of $\xi^{(k-1)}$ at points $t_i$.
And also he wrote that he agrees that we need $\xi\in C^{k-2}([0,t_l])$ (which is not satisfied if we only assume (what he says after) that $\xi^{(k-2)}\in C^\infty_{pcw}$)
A simple hypothesis to fulfill these two needs is $\xi^{(k-1)}\in C^\infty_{pcw}$. This is what I proposed and it satisfies what what Francois asked for.
}
}

\color{black}

More precisely, given such a curve $\xi(t)$ in the Lie algebra $\mathfrak{g}$, its \textit{flow} $g^\xi: t \mapsto g^{\xi}(t) \in G$ is a continuous curve defined by the conditions
\begin{equation}\label{def_flow}
g^{\xi}(0) = e, \quad \text{and} \quad \frac{d}{dt} g^{\xi}(t) = \xi(t) g^{\xi}(t)\,,
\end{equation}
whenever $t$ is in one of the open intervals $(0, t_1), \ldots, (t_{l-1}, t_l)$. Here we used the notation $\xi(t) g^{\xi}(t) :=  TR_{g^\xi(t)} \xi(t)$.
 
We typically think of $(I_{t_1}, \ldots , I_{t_l})$ as the time-sequence of data, indexed by time points $t_j$, $j = 1, \ldots l$, and $T_0$ is the template (the source image). 
Moreover, $\xi: t \in [0, t_l] \mapsto \xi(t) \in \mathfrak{g}$ is typically a time-dependent vector field (sufficiently smooth in time) that generates a flow of diffeomorphisms $g^\xi: t \in [0, t_l] \mapsto g^\xi(t) \in G$. Note that, in this case, the Lie group $G$ is infinite dimensional and a rigorous framework to work in is the \textit{large deformations by diffeomorphisms} setting thoroughly explained in \cite{ty05}.
We will informally refer to this case as the \emph{diffeomorphisms case} or \emph{infinite dimensional case}. 
The expression $g^\xi(t_i)T_0$ represents the template at time $t_i$, as it is being deformed by the flow of diffeomorphisms. Inspired by the second-order model presented in \cite{TrVi2010}, this subsection thus generalizes the work of \cite{BrFGBRa2010} in two directions. First, we allow for a \emph{higher-order} penalization on the time-dependent vector field given by the first term of the functional \eqref{LongitudinalInterpolation_dash}; second, the similarity measure (second term in \eqref{LongitudinalInterpolation_dash}) takes into account \emph{several} time points in order to compare the deformed template with the time-sequence target.

Staying at a general level, we will take the geometric viewpoint of \cite{BrFGBRa2010} in order to derive the Euler-Lagrange equations satisfied by any minimizer of $E$. We suppose that the norm on $V$ is induced by an inner product $\left\langle\,,\right\rangle_V$ and denote by $\flat$ the isomorphism
\[
\flat: V \rightarrow V^*, \quad \omega \mapsto \omega^\flat
\]
that satisfies
\[
\left\langle I, J\right\rangle_V = \left\langle I^\flat, J\right \rangle \quad \mbox{for all } I, J \in V,
\]
where we wrote $\left\langle\,,\right\rangle$ for the duality pairing between  $V$ and its dual $V^*$. The action \eqref{left_representation} of $G$ on $V$ induces an action on $V^*$,
\[
G \times V^*\rightarrow V^*, \quad (g, \omega) \mapsto g\omega
\]
that is defined by the identity
\begin{equation}\label{DualAction_dash}
\left\langle g \omega, I\right\rangle = \left\langle\omega, g^{-1} I\right\rangle \quad \mbox{for all } I \in V,\; \omega \in V^*,\; g\in G.
\end{equation}
The cotangent-lift momentum map $\diamond: V \times V^* \rightarrow \mathfrak{g}^*$ for the action of $G$ on $V$ is defined by the identity
\begin{equation}\label{MomentumMap_dash}
  \left\langle I\diamond \omega, \xi \right\rangle = \left\langle \omega, \xi I\right\rangle, \quad \text{for all} \quad I \in V, \; \omega \in V^*,  \;\xi \in \mathfrak{g},
\end{equation}
where the brackets on both sides represent the duality pairings of the respective spaces for $\mathfrak{g}$ and $V$, and where $\xi I$ denotes the infinitesimal action of $\mathfrak{g}$ on $V$, defined as $\xi\, I :=\left.\frac{d}{dt}\right|_{t=0}g(t)I \in V$ for any $C^1$ curve $g :[-\varepsilon,\varepsilon] \to G$ that satisfies $g(0)=e$ and 
$\left.\frac{d}{dt}\right|_{t=0} g(t) = \xi\in\mathfrak{g}$.
 Note that equations \eqref{DualAction_dash} and 
 \eqref{MomentumMap_dash} imply
\begin{equation}\label{ActionAndMomentumMap_dash}
\operatorname{Ad}^*_{g^{-1}}(I \diamond \omega) = gI \diamond g\omega.
\end{equation}

For the flow defined in \eqref{def_flow}, we also introduce the notation
\begin{equation}\label{Flow_notation}
g^\xi_{t, s} := g^\xi(t)\left(g^\xi(s)\right)^{-1}.
\end{equation}

Lemma 2.4 in \cite{BrFGBRa2010}, which is an adaptation from \cite{via09} and \cite{bmty05}, gives the derivative of the flow at a given time with respect to a variation $(\varepsilon, t) \mapsto \xi(t) + \varepsilon \delta \xi (t) \in \mathfrak{g}$ of a smooth curve $\xi = \xi_0$. Namely,
\begin{equation}\label{variation_gts}
\delta g_{t, s}^{\xi} := \left.\frac{d}{d\varepsilon}\right|_{\varepsilon = 0} g_{t, s}^{\xi_\varepsilon} = g^\xi_{t, s}  \int_s^t \left(\operatorname{Ad}_{g^\xi_{s, r}} \delta \xi(r)\right) dr  \in T_{g^\xi_{t, s}}G.
\end{equation}
Importantly, formula \eqref{variation_gts} also holds for the diffeomorphisms case in a non-smooth setting, as shown in \cite{ty05}, where the assumption is $\xi \in L^2([0,t_l],V)$. Moreover, this proof can be adapted to the case of a finite dimensional Lie group. In particular, formula \eqref{variation_gts} can be used for $\xi \in \mathcal{P}_{k-1}$, whether one works with finite dimensional Lie groups or diffeomorphism groups.

Formula \eqref{variation_gts} and equation 
\eqref{ActionAndMomentumMap_dash} are the key ingredients needed in order to take variations of the similarity measure in \eqref{LongitudinalInterpolation_dash}. With these preparations it is now straightforward to adapt the calculations done in the proof of Theorem 2.5 of \cite{BrFGBRa2010} to our case, in order to show that the following theorem holds.
\begin{theorem} 
\label{ComputationalAnatomyHigherOrderTheorem}
  A curve $\xi\in \mathcal{P}_{k-1}$ is an extremal for the functional $E$, i.e., $\delta E = 0$ if and only if ${\rm (I)}$, ${\rm (II)}$, and ${\rm (III)}$ below hold:
\begin{enumerate}[\rm (I)]
\item For $t$ in any of the open intervals $(0, t_1), \ldots, (t_{l-1}, t_l)$,
\begin{equation}
\sum _{ j=0}^ {k-1} (-1)^j \frac{d^j}{dt^j}  \frac{\delta \ell }{\delta  \xi^{ (j) } }
      = - \sum_{i=1}^l \chi_{[0, t_i]}(t) \left(g^\xi_{t, 0}  T_0 \diamond g^\xi_{t,t_i}  \pi^{i} \right),
\label{mommap-def1}
\end{equation}
where $\pi^i$ is defined by
\[
\pi^i : = \frac{1}{\sigma^2}\left(g^\xi_{t_i, 0} T_0 - I_{t_i}\right)^\flat \in V ^\ast,
\]
and $\chi_{[0, t_i]}$ is the characteristic function of the interval $[0, t_i]$.
  \item For $i = 1, \ldots, l-1$ and $r = 0, \ldots, k-2$,
    \begin{equation}\label{TM_theorem_2}
      \lim_{t\rightarrow t_i^-}\sum_{j \geq r + 1}^{k-1} (-1)^{j - r -1} \frac{d^{j-r -1}}{dt^{j-r -1}} \frac{\delta \ell}{\delta \xi^{(j)}} (t) = \lim_{t\rightarrow t_i^+}  \sum_{j \geq r + 1}^{k-1} (-1)^{j - r -1} \frac{d^{j-r -1}}{dt^{j-r -1}} \frac{\delta \ell}{\delta \xi^{(j)}} (t)\,.
    \end{equation}
  \item For $r = 0, \ldots, k-2$,
    \begin{equation} \label{TM_theorem_3}
       \sum_{j \geq r + 1}^{k-1} (-1)^{j - r -1} \frac{d^{j-r -1}}{dt^{j-r -1}} \frac{\delta \ell}{\delta \xi^{(j)}} (t_l) = 0\,.
    \end{equation}
  \end{enumerate}
\end{theorem}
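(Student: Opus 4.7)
The strategy is to compute $\delta E$ along admissible variations $\xi \mapsto \xi + \varepsilon \, \delta\xi$ with $\delta\xi \in \mathcal{P}_{k-1}$ and $\delta\xi^{(j)}(0) = 0$ for $j = 0, \ldots, k-2$ (to respect the fixed initial data), express $\delta E$ as a bulk integral paired with $\delta\xi$ plus boundary terms paired with $\delta\xi^{(r)}(t_i)$ and $\delta\xi^{(r)}(t_l)$, and then isolate (I), (II), (III) by choosing variations localized in different places. The if-and-only-if statement then follows from suitable fundamental lemmas.

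\noindent\textbf{Running cost.} On each open subinterval $(t_{i-1}, t_i)$ on which $\xi$ is smooth, integrate $\int \langle \delta\ell/\delta\xi^{(j)}, \delta\xi^{(j)}\rangle$ by parts $j$ times, as already done in \S\ref{k_order_EP_sec}. Summing over $j = 0, \ldots, k-1$ gives the bulk integrand $\sum_j (-1)^j \partial_t^j (\delta\ell/\delta\xi^{(j)})$ paired with $\delta\xi$, together with endpoint contributions of the form $\sum_{r=0}^{k-2} \big\langle \sum_{j \ge r+1} (-1)^{j-r-1} \partial_t^{j-r-1} \frac{\delta\ell}{\delta\xi^{(j)}},\, \delta\xi^{(r)} \big\rangle$. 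After summing over all subintervals: the contribution at $t_0 = 0$ vanishes by the fixed initial conditions; at each interior breakpoint $t_i$ the continuity of $\delta\xi^{(r)}$ for $r \le k-2$ (built into the definition of $\mathcal{P}_{k-1}$) lets one factor $\delta\xi^{(r)}(t_i)$ out, producing the jump $(\cdot)|_{t_i^-} - (\cdot)|_{t_i^+}$; at $t_l$ the limit from the left $(\cdot)|_{t_l^-}$ is paired with the free $\delta\xi^{(r)}(t_l)$.

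\noindent\textbf{Similarity term.} For each $i$, expand $\delta\big[\tfrac{1}{2\sigma^2}\|g^\xi(t_i)T_0 - I_{t_i}\|_V^2\big] = \langle \pi^i, \delta(g^\xi(t_i)\,T_0)\rangle$ and invoke \eqref{variation_gts} at $s=0$, giving $\delta g^\xi(t_i) = g^\xi(t_i) \int_0^{t_i}\mathrm{Ad}_{g^\xi(r)^{-1}} \delta\xi(r)\, dr$. Push $g^\xi(t_i)$ onto $\pi^i$ via the dual action \eqref{DualAction_dash}, rewrite the resulting infinitesimal action by the momentum map identity \eqref{MomentumMap_dash}, transfer $\mathrm{Ad}_{g^\xi(r)^{-1}}$ to $\mathrm{Ad}^*_{g^\xi(r)^{-1}}$ on the other factor, and apply the equivariance relation \eqref{ActionAndMomentumMap_dash}. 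This rewrites the variation as $\int_0^{t_i} \langle g^\xi(r)\,T_0 \diamond g^\xi_{r, t_i} \pi^i,\, \delta\xi(r)\rangle\, dr$. Summing over $i$ and extending each integral to $[0, t_l]$ by inserting $\chi_{[0, t_i]}$ matches the right-hand side of \eqref{mommap-def1} (with a sign).

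\noindent\textbf{Assembly.} Collecting, $\delta E = 0$ reads as the vanishing of a single bulk integrand on each subinterval plus the vanishing of the interior jump coefficients and of the $t_l$ coefficients. Test first against $\delta\xi$ supported inside a single subinterval to obtain \eqref{mommap-def1} by the du Bois-Reymond lemma; then against variations producing an arbitrary $\delta\xi^{(r)}(t_i)$ at one interior $t_i$ (with all other boundary derivatives vanishing) to obtain \eqref{TM_theorem_2}; and finally against variations with nonzero $\delta\xi^{(r)}(t_l)$ to obtain \eqref{TM_theorem_3}. The converse direction is immediate: the calculation above shows $\delta E = 0$ whenever (I)--(III) hold. \textbf{The main obstacle} is the careful bookkeeping of the iterated integration by parts on piecewise-smooth intervals, together with the verification that \eqref{variation_gts} still applies to variations in $\mathcal{P}_{k-1}$ (granted in the paragraph following \eqref{variation_gts}); once these are handled, the rest is an algebraic reorganization of boundary terms.
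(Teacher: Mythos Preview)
Your proposal is correct and follows essentially the same approach as the paper: integrate by parts on each subinterval to split the running cost into a bulk term and boundary/jump terms, compute the variation of the similarity measure via \eqref{variation_gts} together with \eqref{DualAction_dash}--\eqref{ActionAndMomentumMap_dash}, and then assemble. Your account is in fact more explicit than the paper's in two respects---you spell out the momentum-map manipulation for the similarity term (which the paper defers to \cite{BrFGBRa2010}) and you state clearly how localized test variations isolate (I), (II), (III)---but the strategy is identical.
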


Note that there is no condition at $t_0=0$ analogous to (III) because of the fixed end point conditions $\xi^{(j)}(0) = \xi^j_0$, for $j = 0, \ldots, k-2$.

\begin{proof}
Set $t_0=0$ for convenience.
A series of partial integrations taking into account the fixed end point conditions $\xi^{(j)}(0) = \xi^j_0$, $j = 0, \ldots, k-2$, leads to
\begin{align}
\delta \int_0^{t_l}\ell  dt 
&= \sum_{i=0}^{l-1} \int_{t_i}^{t_{i+1}} 
\sum_{j=0}^{k-1}\left<\frac{\delta \ell}{\delta \xi^{(j)}}, \delta \xi^{(j)}\right>\, dt \nonumber  \\
&= \sum_{i=0}^{l-1} \int_{t_i}^{t_{i+1}} \left<\sum_{j=0}^{k-1}(-1)^j\frac{d^j}{dt^j} \frac{\delta \ell}{\delta \xi^{(j)}}(t) , \delta \xi(t)\right>\, dt \nonumber \\
&\quad + \sum_{i = 1}^{l-1} \sum_{r=0}^{k-2} \left(
\left< \sum_{j\geq r+1}^{k-1} (-1)^{j-r-1}\left(\frac{d^{j-r-1}}{dt^{j-r-1}} \frac{\delta \ell}{\delta \xi^{(j)}}(t_i^-) -\frac{d^{j-r-1}}{dt^{j-r-1}} \frac{\delta \ell}{\delta \xi^{(j)}}(t_i^+)\right) , \delta \xi^{(r)}(t_i)\right> 
\right. \nonumber \\
&\qquad \qquad \qquad \left. + \left< \sum_{j\geq r+1}^{k-1} (-1)^{j-r-1}\frac{d^{j-r-1}}{dt^{j-r-1}} \frac{\delta \ell}{\delta \xi^{(j)}}(t_l), \delta \xi^{(r)}(t_l)\right> \right)\,.
\label{left-right}
\end{align}
Note that the hypothesis $\xi \in \mathcal{P}_{k-1}$ is sufficient to give meaning to the previous formula.

On the other hand, using formula \eqref{variation_gts} and mimicking the computations done in \cite{BrFGBRa2010}, one finds for the variation of the similarity measure
\begin{equation}
  \delta \left(\frac{1}{2\sigma^2}\sum_{i = 1}^l \left\|g^\xi(t_i)T_0 - I_{t_i}\right\|^2_V\right) = \int_0^{t_l}\left<\sum_{i=1}^l \chi_{[0, t_i]}(t) \left(g^\xi_{t, 0}  T_0 \diamond g^\xi_{t,t_i}  \pi^{i} \right), \delta \xi(t)\right>\, dt\,.
\end{equation}
Assembling the two contributions to $\delta E$, we arrive at
\begin{align}
\delta E 
&= \sum_{s=0}^{l-1} \int_{t_s}^{t_{s+1}} 
\left<\sum_{j=0}^{k-1} (-1)^j\frac{d^j}{dt^j} 
\frac{\delta \ell}{\delta \xi^{(j)}}(t) 
+ \sum_{i=1}^l \chi_{[0, t_i]}(t) \left(g^\xi_{t, 0}  T_0 \diamond g^\xi_{t,t_i}  \pi^{i} \right) , \delta \xi(t)\right>\, dt \nonumber \\
&\quad + \sum_{i = 1}^{l-1} \sum_{r=0}^{k-2}\left(
\left< \sum_{j\geq r+1}^{k-1} (-1)^{j-r-1}
\left(\frac{d^{j-r-1}}{dt^{j-r-1}} 
\frac{\delta \ell}{\delta \xi^{(j)}}(t_i^-) 
-\frac{d^{j-r-1}}{dt^{j-r-1}} 
\frac{\delta \ell}{\delta \xi^{(j)}}(t_i^+)\right) , 
\delta \xi^{(r)}(t_i)\right> \right. \nonumber \\
&\qquad \qquad\qquad \left. 
+ \sum_{r=0}^{k-2} \left< \sum_{j\geq r+1}^{k-1} (-1)^{j-r-1}\frac{d^{j-r-1}}{dt^{j-r-1}} \frac{\delta \ell}{\delta \xi^{(j)}}(t_l), \delta \xi^{(r)}(t_l)\right>\right)\,.
\end{align}
Stationarity $\delta E = 0$ therefore leads to equations \eqref{mommap-def1} - \eqref{TM_theorem_3}.
\end{proof}
\begin{remark}\label{HOEP-remark}{\rm 
The right-hand side of equation \eqref{mommap-def1} follows coadjoint motion on every open interval $(0, t_1), \ldots, (t_l, t_{l-1})$. Therefore,
\begin{equation}\label{TM_HO_EP}
  \left(\frac{d}{dt} + \operatorname{ad}^*_{\xi(t)}\right) \sum _{ j=0}^ {k-1} (-1)^j \frac{d^j}{dt^j}  \frac{\delta \ell }{\delta  \xi^{ (j) } }= 0,
\end{equation}
in which we once again recognize the higher-order Euler-Poincar\'{e} equation \eqref{EP_k_dash}.}
\end{remark}

\subsection{Two examples of interest for computational anatomy}\label{two_examples_CA}
Regarding potential applications in \textit{CA}, an interesting property of higher-order models is the gain in smoothness of the optimal path $T: t \in [0, t_l] \mapsto g^\xi(t)T_0 \in V$, in comparison with first-order models. For instance, in the case of piecewise-geodesic (i.e., first-order) interpolation, where $\ell(\xi) := \frac{1}{2}\|\xi\|_{\mathfrak{g}}^2$,
equation  \eqref{mommap-def1} reads
\begin{equation}
  \xi(t) =  - \sum_{i=1}^l \chi_{[0, t_i]}(t) \left(g^\xi_{t, 0} T_0 \diamond g^\xi_{t,t_i}  \pi^{i} \right) .
\end{equation}
In general therefore, $\xi$ will be discontinuous at each time point $t_i$ for $i < l$, which implies non-differentiability of $T$ at these points. In contrast, for the Lagrangian $\ell_1(\dot{\xi}) := \frac{1}{2} \|\dot{\xi}\|_{\mathfrak{g}}^2$, equation  \eqref{mommap-def1} becomes
\begin{equation}
  \ddot{\xi}(t) = \sum_{i=1}^l \chi_{[0, t_i]}(t) \left(g^\xi_{t, 0} T_0 \diamond g^\xi_{t,t_i}  \pi^{i} \right) .
\end{equation}
Now the curves $\xi(t)$ and $T(t)$ are $C^1$ and $C^2$ on $[0, t_l]$, respectively.
Note that the inexact interpolation we consider here yields a $C^2$ curve $T$, whereas the exact interpolation method presented in the example of Section \ref{sec 2.2} leads to $C^1$ solutions.
Note also that the minimization of the functional $E$ for $\ell_1$ when $l=1$ produces Lie-exponential solutions on $G$.
More precisely, if the Lie-exponential map is surjective and the action of $G$ on $V$ is transitive, then there exists $\xi_0 \in \mathfrak{g}$ such that $\exp(t_1 \xi_0) T_0 = I_{t_1}$. Hence, the constant curve $\xi \equiv \xi_0$ is a minimizer of the functional $E$, with $E[\xi] = 0$. 
The Lie-exponential has been widely used in \textit{CA}, for instance in \cite{ArsignyMRM06,AshburnerNI2007}.

Another Lagrangian of interest for \textit{CA} is $\ell_2(\xi,\dot{\xi}) := \frac{1}{2}\|\dot{\xi} + \operatorname{ad}^\dagger_{\xi} \xi\|_{\mathfrak{g}}^2$, which measures the acceleration on the Lie group for the right-invariant metric induced by the norm $\|\cdot \|_{\mathfrak{g}}$.
The Lagrangian $\ell_2$ may therefore have more geometrical meaning than $\ell_1$. However, $\ell_1$ is worth studying since it is simpler from both {\color{black} the} computational and {\color{black} the} analytical point of view: The existence of a minimizer for $\ell_1$ {\color{black} can} be obtained straightforwardly following the strategy of \cite{ty05}. In contrast, a deeper analytical study is required for $\ell_2$, since analytical issues arise in infinite dimensions.

\subsection{Template matching on the sphere}

Consider as a finite-dimensional example $G = SO(3)$ with norm $\|\mathbf{\Omega}\|_{\mathfrak{so}(3)} = 
\sqrt{\mathbf{\Omega} \cdot \mathbf{I} \mathbf{\Omega}}$ on the Lie algebra $\mathfrak{so}(3)$, where $\mathbf{I}$ is a symmetric positive-definite matrix (the moment of inertia tensor). Let $V = \mathbb{R}^3$ with $\|\cdot\|_{\mathbb{R}^3}$ the Euclidean distance. We would like to interpolate a time sequence of points on the unit sphere $S^2 \subset \mathbb{R}^3$ starting from the template $T_0 = \left(
  \begin{array}{c}
    1\\
    0\\
    0\\
  \end{array}\right)$
. Choose the times to be $t_i = \frac{1}{5} i$ for $i = 1, \ldots ,5$, and
\begin{equation}
I_{t_1} = \left(
  \begin{array}{c}
    0\\
    1\\
    0\\
  \end{array}\right), \quad I_{t_2} = \left(
  \begin{array}{c}
    0\\
    0\\
    1\\
  \end{array}\right), \quad I_{t_3} =\frac{1}{\sqrt{2}}\left(
  \begin{array}{c}
    1\\
    0\\
    1\\
  \end{array}\right), \quad I_{t_4} =  \frac{1}{\sqrt{2}}\left(
  \begin{array}{c}
    1\\
    1\\
    0\\
  \end{array}\right), \quad I_{t_5} = \frac{1}{\sqrt{3}} \left(
  \begin{array}{c}
    1\\
    1\\
    1\\
  \end{array}\right).
\end{equation}

The associated minimization problem for a given Lagrangian $\ell(\mathbf{\Omega}, \ldots, \mathbf{\Omega}^{(k-1)})$ is:\\
Minimize
\begin{equation}\label{splines_SO3}
E[\mathbf{\Omega}] := \int_0^1 \ell(\mathbf{\Omega}(t), \ldots, \mathbf{\Omega}^{(k-1)}(t))  dt + \frac{1}{2\sigma^2}\sum_{i = 1}^5 \left\|\Lambda^\mathbf{\Omega}(t_i)T_0 - I_{t_i}\right\|^2_{\mathbb{R}^3},
\end{equation}
subject to the conditions $\mathbf{\Omega}^{(j)}(0)= \mathbf{\Omega}^j_0$, $j= 0,\ldots, k-2$, where $\Lambda^{\mathbf{\Omega}}(t)$ is a continuous curve defined by
\[\Lambda^{\mathbf{\Omega}}(0) = e\,, \quad \text{and} \quad \frac{d}{dt}\Lambda^{\mathbf{\Omega}}(t)= \Omega(t)\Lambda^{\mathbf{\Omega}}(t)\,,
\]
whenever $t$ is in one of the open intervals $(0, t_1), \ldots, (t_4, t_5)$. As we mentioned in Section \ref{two_examples_CA}, an important property of higher-order models is the increase in smoothness of the optimal path when compared with first-order models. We illustrate this behavior in Figures \ref{FirstOrderTemplateMatching} and \ref{HOTemplateMatching}:

Figure \ref{FirstOrderTemplateMatching} shows the interpolation between the given points $I_{t_1}, \ldots, I_{t_5}$ for the first order Lagrangian
\begin{equation}\label{First_order_so3_Lagrangian}
\ell(\mathbf{\Omega}) = \frac{1}{2} \mathbf{\Omega} \cdot \mathbf{I} \mathbf{\Omega}.
\end{equation}
We contrast this with the second order model
\begin{equation}\label{Second_order_so3_Lagrangian}
\ell(\mathbf{\Omega}, \dot{\mathbf{\Omega}}) = \frac{1}{2} \left(\dot{\mathbf{\Omega}} + \mathbf{I}^{-1}(\mathbf{\Omega} \times \mathbf{I}\mathbf{\Omega})\right) \cdot \mathbf{I}  \left(\dot{\mathbf{\Omega}} + \mathbf{I}^{-1}(\mathbf{\Omega} \times \mathbf{I}\mathbf{\Omega})\right)\,.
\end{equation}
Note that this is the reduced Lagrangian for splines on $SO(3)$, as we discussed in Section \ref{hoEP-sec}, and for $\mathbf{I}=e$  we recognize equation \eqref{TM_HO_EP} to be the NHP equation \eqref{NHP_equation}.

Figure \ref{HOTemplateMatching} visualizes the resulting interpolation for two different choices of the moment of inertia tensor $\mathbf{I}$, namely
\begin{equation}
  \mathbf{I}_1 :=\left(
      \begin{array}{ccc}
        1 & 0 & 0\\
        0&1&0\\
        0&0&1\\
     \end{array}\right) \quad \mbox{and} \quad \mathbf{I}_2 :=\frac{1}{\sqrt{2}}\left(
      \begin{array}{ccc}
        1 & 0 & 0\\
        0&2&0\\
        0&0&1\\
     \end{array}\right).
\end{equation}
In order to compare the two cases we have normalized $\mathbf{I}_2$ in such a way that it has the same norm as $\mathbf{I}_1$ with respect to the norm $\|\mathbf{I}\|^2 =   \operatorname{tr}(\mathbf{I}^{\mathsf{T}}\mathbf{I})$. The figures were obtained by minimizing the functional $E$ using the downhill simplex algorithm $fmin\_tnc$ that is included in the $optimize$ package of SciPy, \cite{SciPy}.

\begin{figure}[h!t]
\begin{center}
\subfigure[$\sigma = 0.18$]{\begin{overpic}[scale=0.4]%
{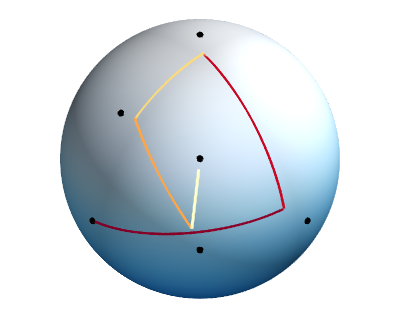}
\put(13, 20){$T_0$}
\end{overpic}
} \hspace{0pt}
\subfigure[$\sigma = 0.01$]{\begin{overpic}[scale=0.4]%
{MI1-1-1sigma0dot01firstorder.png}
\put(13, 20){$T_0$}
\end{overpic}}
\end{center}
\caption{\footnotesize{First order template matching results are shown for the Lagrangian \eqref{First_order_so3_Lagrangian} with $\mathbf{I} = e$, for two different values of tolerance $\sigma$. These values have been chosen so that the sum of the mismatch penalties is similar in size to the one obtained in the second order template matching shown in Figure \ref{HOTemplateMatching}. As might be expected,  when the tolerance is smaller, the first order curves pass nearer their intended target points. These first order curves possess jumps in tangent directions at the beginning of each new time interval. }}
\label{FirstOrderTemplateMatching}
\end{figure}


\begin{figure}[h!t]
\begin{center}
  \subfigure[$\sigma = 0.05$, $\mathbf{I} = \mathbf{I}_1$]{\begin{overpic}[scale=0.4]%
{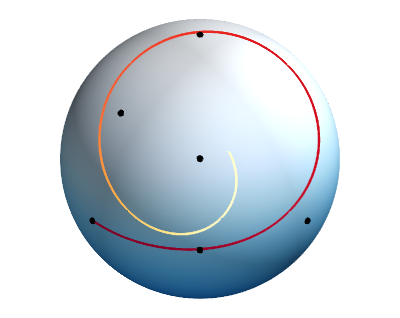}
\put(13, 20){$T_0$}
\end{overpic}
} \hspace{0pt}
\subfigure[$\sigma = 0.001$, $\mathbf{I} = \mathbf{I}_1$]{\begin{overpic}[scale=0.4]%
{MI1-1-1sigma0dot001.png}
\put(13, 20){$T_0$}
\end{overpic}} \\
  \subfigure[$\sigma = 0.05$, $\mathbf{I} = \mathbf{I}_2$]{\begin{overpic}[scale=0.4]%
{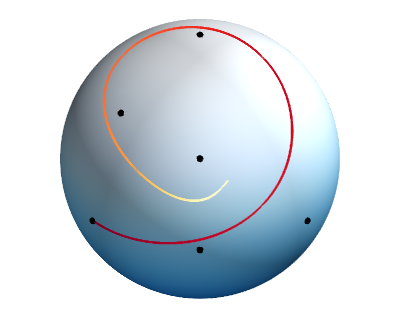}
\put(13, 20){$T_0$}
\end{overpic}
} \hspace{0pt}
\subfigure[$\sigma = 0.001$, $\mathbf{I} = \mathbf{I}_2$]{\begin{overpic}[scale=0.4]%
{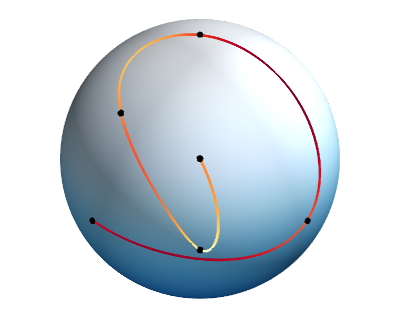}
\put(13, 20){$T_0$}
\end{overpic}}
\end{center}
\caption{\footnotesize{{The pictures in the top row show the template matching for the Lagrangian \eqref{First_order_so3_Lagrangian} with  $\mathbf{I}_1$ with two different values of tolerance, $\sigma$. The bottom row represents the corresponding matching results for $\mathbf{I}_2$. One observes that the quality of  matching increases as the tolerance decreases. This is due to the increased weight on the penalty term in \eqref{LongitudinalInterpolation_dash}. The color of the curves represents the magnitude of the velocity vector of the curve on the sphere (red is large, white is small). We fixed the initial angular velocity $\mathbf{\Omega}(0)=\frac{5 \pi}{2}\left(0, 0, 1\right)$.  On comparing these figures with those in the first order case, one observes that the second-order method produces smoother curves.}}}
\label{HOTemplateMatching}
\end{figure}

\clearpage

\begin{remark}{\rm 
  Standard variational calculus arguments ensure the existence of a minimizer to to the functional \eqref{LongitudinalInterpolation_dash} with Lagrangian \eqref{Second_order_so3_Lagrangian}. In Theorem \ref{ComputationalAnatomyHigherOrderTheorem}, we chose to fix $\xi^{(j)}(0) = \xi^j_0$, $j = 0, \ldots, k-2$, which reduces in this case to fixing $\mathbf{\Omega}(0)$. We might, however, also want to optimize over this initial velocity. Unfortunately, examples can be exhibited where there does not exist any solution to the minimization of $E$ if one also minimizes over $\mathbf{\Omega}(0)$. One possibility to restore well-posedness while retaining the minimization over $\mathbf{\Omega}(0)$ is to modify $E$ by adding a penalty on the norm $\|\mathbf{\Omega}(0)\|$.
  
The situation in infinite dimensions is similar, however proving existence results would require much deeper analytical study than in the finite dimensional case.}
\end{remark}
In this section we presented higher-order methods that increase the smoothness in interpolating through a sequence of data points. In future work these methods will be compared to the shape spline model introduced in \cite{TrVi2010}.
Also of interest for \textit{CA} is the metamorphosis approach that is discussed briefly in Section \ref{optimizDyn-sec}.

\section{Optimization with penalty} \label{optimizDyn-sec}

This section adapts the optimization approach of \cite{GBHoRa2010} to higher-order Lagrangians. 
As in the case of the Clebsch-Pontryagin approach, one considers the (right or left) action $\Phi:G\times Q\rightarrow Q$ of a Lie group $G$ on a manifold $Q$. The basic idea is to replace the constraints in the Clebsch optimal control problem with a penalty function added to the cost function and to obtain in this way a classical (unconstrained) optimization problem. The penalty term is expressed with the help of a Riemannian metric $\gamma$ on the manifold $Q$. Given a cost function $\ell: k \mathfrak{g} \times Q \rightarrow \mathbb{R} $, $\sigma>0$ and the elements $n_0, n_T\in Q$, $\xi^j_0, \xi^j_T\in\mathfrak{g}$, $j=0,...,k-2$,  one minimizes
\begin{equation}\label{optimization_penalty}
\int _{ 0}^{T}\left(\ell\left(\xi,\dot{\xi},\ldots, 
\xi^{(k-1)}, n\right)
+\frac{1}{2\sigma^2}\|\dot n-\xi_Q(n)\|^2 \right) dt,
\end{equation}
over curves $t\mapsto n(t)\in Q$ and $ t\mapsto \xi(t) \in \mathfrak{g}$ such that
\[
n(0)=n_0,\quad n(T)=n_T,\quad \xi^{(j)}(0)=\xi^j_0,\quad \xi^{(j)}(T)=\xi^j_T,\;\; j=0,...,k-2,
\]
where $\|\cdot\|$ is the norm on $TQ$ induced by the metric  $\gamma$ and, as in the Clebsch-Pontryagin case in Section \ref{Clebsch-sec}, $\xi_Q(n)$ denotes the infinitesimal generator of the $G$-action associated to 
$\xi\in \mathfrak{g}$, evaluated at $n\in Q$.
The corresponding stationarity conditions are found to be:
\begin{equation}\label{stationarity_cond_pi}
\sum_{j=0}^{k-1}(-1)^j\partial _t ^j \frac{\delta\ell}{\delta\xi^{ (j) }}=\mathbf{J}(\pi),\quad \dot n=\xi_Q(n)+\sigma^2\pi^\sharp,\quad \frac{D}{Dt}\pi=- \langle\pi,\nabla\xi_Q\rangle+\frac{\partial \ell}{\partial n},
\end{equation}
where the notation
\[
\pi:=\frac{1}{\sigma^2}\nu_n^\flat=\frac{1}{\sigma^2}\left(\dot n-\xi_Q(n)\right)^\flat\in T^*Q
\]
has been used and the covariant derivatives $D/Dt$ and $ \nabla $ are associated to the Riemannian metric $ \gamma $ on $Q$.

These equations should be compared with the stationarity conditions \eqref{stationarity_conditions_Clebsch} associated to the Clebsch approach, which can be rewritten, with the help of a Riemannian metric, as
\begin{equation}\label{stationarity_cond_alpha}
\sum_{j=0}^{k-1}(-1)^j\partial _t ^j \frac{\delta\ell}{\delta\xi^{ (j) }}=\mathbf{J}(\alpha),\quad \dot q=\xi_Q(q),\quad \frac{D}{Dt}\alpha=- \langle\alpha,\nabla\xi_Q\rangle+\frac{\partial \ell}{\partial q}.
\end{equation}

Before proceeding further, we will pause to define some additional notation that will be convenient later. 

\begin{definition}\label{def_F_nabla} Consider a Lie group $G$ acting on a Riemannian manifold $(Q, \gamma)$. We define the $\mathfrak{g}^*$-valued $(1,1)$ tensor field $\mathcal{F}^\nabla:T^*Q\times TQ\rightarrow\mathfrak{g}^*$ associated to the Levi-Civita connection $\nabla$ by
\begin{equation}\label{definition_F_nabla}
\left\langle\mathcal{F}^\nabla(\alpha_q,u_q),\eta\right\rangle:=\left\langle\alpha_q,\nabla_{u_q}\eta_Q(q)\right\rangle,
\end{equation}
for all $u_q\in T_qQ$, $\alpha_q\in T^*_qQ$, and $\eta\in\mathfrak{g}$.
\end{definition}

The main properties of the tensor field $\mathcal{F}^\nabla$ are discussed in \cite{GBHoRa2010}, where one also finds the proofs of the following two lemmas about the properties of $\mathcal{F}^\nabla$. The first lemma below relates $\mathcal{F}^\nabla$ to the connectors of the covariant derivatives on $TQ$ and $T^*Q$. The second lemma explains that $\mathcal{F}^\nabla$ is antisymmetric under transposition in the inner product defined by the Riemannian metric $\gamma$ when $G$ acts by isometries.

\begin{lemma}\label{property_F_connector} 
For all $\alpha_q\in T^*_qQ$, $u_q\in T_qQ$, and $\xi\in\mathfrak{g}$, 
\[
\left\langle \mathcal{F}^\nabla(\alpha_q,u_q), \xi \right\rangle =\left\langle\alpha_q,K(\xi_{TQ}(u_q))\right\rangle=-\left\langle K(\xi_{T^*Q}(\alpha_q)),u_q\right\rangle,
\]
where $K$ denotes the connectors of the covariant derivatives on $TQ$ and $T^*Q$, respectively.
\end{lemma}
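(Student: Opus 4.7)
The plan is to prove the two equalities separately. The first identifies the defining expression $\langle \alpha_q, \nabla_{u_q} \eta_Q(q)\rangle$ of $\mathcal{F}^\nabla$ with the connector $K$ acting on the tangent-lifted generator $\xi_{TQ}(u_q)$; the second then follows by exploiting the $G$-invariance of the natural duality pairing between $TQ$ and $T^*Q$.

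For the first equality, the key step is to establish the pointwise identity
\[
\xi_{TQ}(u_q) = T_q\xi_Q(u_q) \in T_{u_q}(TQ).
\]
This is obtained directly from the definition: writing $u_q = \dot\gamma(0)$ for a curve $\gamma$ in $Q$ and $g(t)=\exp(t\xi)$, one has
\[
\xi_{TQ}(u_q) = \frac{d}{dt}\bigg|_{t=0} T\Phi_{g(t)}\bigl(\dot\gamma(0)\bigr) = \frac{d}{dt}\bigg|_{t=0}\frac{d}{ds}\bigg|_{s=0} \Phi_{g(t)}(\gamma(s)),
\]
and interchanging the partial derivatives yields $\frac{d}{ds}|_{s=0}\xi_Q(\gamma(s)) = T_q\xi_Q(u_q)$. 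The characterizing property of the connector, namely $K\circ TY(u_q) = \nabla_{u_q}Y$ for any vector field $Y$, applied with $Y = \xi_Q$, then gives $K(\xi_{TQ}(u_q)) = \nabla_{u_q}\xi_Q$. Pairing with $\alpha_q$ and comparing with \eqref{definition_F_nabla} proves the first equality.

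For the second equality, consider the natural duality pairing
\[
P : T^*Q \times_Q TQ \to \mathbb{R}, \qquad P(\alpha_q,u_q) = \langle \alpha_q, u_q\rangle.
\]
The cotangent lift is defined precisely so that $P$ is $G$-invariant under the simultaneous action $(\Phi^{T^*}_g, \Phi^{TQ}_g)$. Applying $P$ along the one-parameter family $\bigl(\alpha(t),u(t)\bigr) := \bigl(\Phi^{T^*}_{\exp(t\xi)}\alpha_q,\, \Phi^{TQ}_{\exp(t\xi)}u_q\bigr)$ therefore yields a function of $t$ that is constant. Its derivative at $t=0$ can, on the other hand, be computed using the Leibniz rule for covariant derivatives along curves,
\[
\frac{d}{dt}\langle \alpha(t), u(t)\rangle = \left\langle \tfrac{D\alpha}{dt}, u(t)\right\rangle + \left\langle \alpha(t), \tfrac{Du}{dt}\right\rangle = \left\langle K(\dot\alpha(t)), u(t)\right\rangle + \left\langle \alpha(t), K(\dot u(t))\right\rangle,
\]
where compatibility of the connector on $T^*Q$ with that on $TQ$ via the duality pairing (a general property of the induced connection, automatic for the Levi-Civita case) is used. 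Evaluating at $t=0$ with $\dot\alpha(0)=\xi_{T^*Q}(\alpha_q)$ and $\dot u(0)=\xi_{TQ}(u_q)$ and setting the result to zero gives
\[
\langle K(\xi_{T^*Q}(\alpha_q)), u_q\rangle + \langle \alpha_q, K(\xi_{TQ}(u_q))\rangle = 0,
\]
which is the second equality when combined with the first.

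The main subtlety I anticipate is the careful handling of the two distinct connectors denoted by the same symbol $K$: one acts on $TTQ\to TQ$ and the other on $TT^*Q \to T^*Q$. Verifying that the Leibniz-type identity for the pairing holds for this pair requires invoking compatibility of the Levi-Civita connection with the metric-induced isomorphism $\flat:TQ\to T^*Q$, which is standard but must be stated explicitly. Everything else reduces to interchanging orders of differentiation and applying the defining property of the connector.
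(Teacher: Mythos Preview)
Your argument is correct. The paper itself does not supply a proof of this lemma but merely refers to Lemma~3.5 of \cite{GBHoRa2010}; your two-step argument---first identifying $\xi_{TQ}(u_q)$ with $T\xi_Q(u_q)$ via exchange of mixed partials and applying the defining property $K\circ TY=\nabla Y$ of the connector, then differentiating the $G$-invariant duality pairing along the flow of $\exp(t\xi)$ and invoking the Leibniz rule for the dual connection---is precisely the standard route and is essentially what one finds in that reference.
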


\begin{proof} See the proof of Lemma 3.5 in \cite[\S3]{GBHoRa2010}.
\end{proof}

\begin{remark}{\rm 
A detailed treatment of connectors and their associated linear connections for covariant derivatives can be found in \cite[\S13.8]{Michor2008}. We also refer to \cite{GBHoRa2010} for useful properties of the connector $K$ of relevance to the present paper. 
In infinite dimensions one needs to assume that the given weak Riemannian metric has a smooth geodesic spray  $S \in \mathfrak{X}(TG)$, but such analytical issues will not be of concern to us here. }
\end{remark}

\begin{lemma}\label{antisymmetry} If $G$ acts by isometries, then $\mathcal{F}^\nabla$ is antisymmetric, that is
\[
\mathcal{F}^\nabla(\alpha_q,u_q)=-\mathcal{F}^\nabla(u_q^\flat,\alpha_q^\sharp),
\]
for all $u_q\in T_qQ$, $\alpha_q\in T^*_qQ$.
\end{lemma}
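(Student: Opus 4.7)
The plan is to reduce the claimed antisymmetry to the Killing equation for the infinitesimal generators $\eta_Q$, which are Killing vector fields precisely because $G$ acts by isometries. Using the musical isomorphisms $\flat, \sharp$ associated to $\gamma$, I will rewrite the defining pairing of $\mathcal{F}^\nabla$ in terms of $\gamma$, apply the Killing identity to swap the roles of the ``vector'' and ``covector'' slots, and then re-identify the result as $-\mathcal{F}^\nabla(u_q^\flat, \alpha_q^\sharp)$.

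Concretely, first I would fix $\eta \in \mathfrak{g}$ and start from
\[
\bigl\langle \mathcal{F}^\nabla(\alpha_q, u_q), \eta \bigr\rangle
= \bigl\langle \alpha_q, \nabla_{u_q}\eta_Q(q) \bigr\rangle
= \gamma_q\bigl(\alpha_q^\sharp,\, \nabla_{u_q}\eta_Q\bigr),
\]
where the second equality uses the definition of $\sharp$. Next, because $G$ acts by isometries, $\eta_Q \in \mathfrak{X}(Q)$ is a Killing vector field, so for any $X, Y \in \mathfrak{X}(Q)$ one has
\[
\gamma\bigl(\nabla_X \eta_Q, Y\bigr) + \gamma\bigl(X, \nabla_Y \eta_Q\bigr) = 0.
\]
Applying this pointwise with $X = u_q$ and $Y = \alpha_q^\sharp$ gives
\[
\gamma_q\bigl(\alpha_q^\sharp, \nabla_{u_q}\eta_Q\bigr)
= -\, \gamma_q\bigl(u_q, \nabla_{\alpha_q^\sharp}\eta_Q\bigr)
= -\, \bigl\langle u_q^\flat, \nabla_{\alpha_q^\sharp}\eta_Q(q) \bigr\rangle.
\]
Finally, the right-hand side is, by the very definition \eqref{definition_F_nabla}, equal to $-\langle \mathcal{F}^\nabla(u_q^\flat, \alpha_q^\sharp), \eta\rangle$. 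Since this holds for every $\eta \in \mathfrak{g}$, nondegeneracy of the pairing between $\mathfrak{g}$ and $\mathfrak{g}^*$ yields the desired identity.

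The only genuine input is the Killing equation, so there is no real obstacle beyond checking that the isometry hypothesis is used in the correct form; everything else is an unpacking of definitions and a single application of $\flat$/$\sharp$. One minor point worth being explicit about is that the Killing identity is a statement about vector fields, whereas here we apply it pointwise at $q$: this is legitimate since both sides are tensorial in $X$ and $Y$, so the identity holds for arbitrary tangent vectors at a point, independently of any extension to local vector fields.
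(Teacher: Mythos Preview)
Your proof is correct and follows essentially the same approach as the paper: both rely on the Killing equation for the infinitesimal generators $\eta_Q$, which the paper states tersely as $(\nabla\xi_Q)^T = -\nabla\xi_Q$ (equivalently, $\boldsymbol{\pounds}_{\xi_Q}\gamma = 0$). You have simply unpacked this one-line argument explicitly via the musical isomorphisms and the pointwise form of the Killing identity, which is a helpful elaboration but not a different method.
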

\begin{proof}
Since $G $ acts by isometries, $\boldsymbol{\pounds}_{ \xi_Q}g = 0$; which implies  $(\nabla\xi_Q)^T=-\nabla\xi_Q$.
\end{proof}

The tensor field $\mathcal{F}^\nabla$ arises naturally in computing the equations of motion associated to the stationarity conditions (\ref{stationarity_cond_pi}) for optimization with penalty. A computation, similar to the one given in \cite[\S3]{GBHoRa2010} in the first order case, yields 
\begin{equation}\label{unconstrained_A}
\left\{
\begin{array}{l}
\vspace{0.2cm}\displaystyle 
\left(\partial _t \mp \operatorname{ad}^*_ \xi \right) \sum_{j=0}^{k-1}(-1)^j\partial _t ^j \frac{\delta\ell}{\delta\xi^{ (j) }}= \mathbf{J}\left(\frac{\partial \ell}{\partial n}\right)
+\frac{1}{\sigma^2}\mathcal{F}^\nabla(\nu_n^\flat,\nu_n),\\
\displaystyle \frac{D}{Dt}\nu_n^\flat=-\langle\nu_n^\flat,\nabla\xi_Q\rangle+\sigma^2\frac{\partial \ell}{\partial n}, \quad \nu_n : = \dot n-\xi_Q(n)
,
\end{array}\right.
\end{equation}
where in $(\mp)$ one chooses $-$ (resp. $+$) when $G$ acts on $Q$ by a right (resp. left) action, consistently with \eqref{q-dep-eqn}.

As a consequence of Lemma \ref{antisymmetry}, if $G$ acts on $(Q, \gamma )$ by isometries, then the term $\frac{1}{\sigma^2}\mathcal{F}^\nabla(\nu_n^\flat,\nu_n)$ vanishes so that the optimization problem \eqref{optimization_penalty} produces the $k^{th}$ order Euler-Poincar\'e equations.

\paragraph{Example: The NHP equation via optimization.} In this case, since $SO(3)$ acts by isometries on $ \mathbb{R}  ^3 $, the minimization problem
\[
\min\int_{t _0 }^{ t _1 } \left( \frac{1}{2} \| \dot {\mathbf{\Omega}} \|^2+ \frac{1}{2 \sigma ^2 } \| \dot {\mathbf{q}}- \mathbf{\Omega} \times \mathbf{q} \| ^2 \right) dt
\]
produces the NHP equations \eqref{Noakes-eqn}.

\paragraph{Metamorphosis and Lagrange-Poincar\'e reductions.}
Equations \eqref{unconstrained_A} may also be obtained by a generalization of the metamorphosis reduction developed  in \cite{GBHoRa2010}, as follows. For simplicity, we only treat the case of a right action of $G$ on $Q$.

Consider a $G$-invariant Lagrangian $L=L\left( g, \dot{ g}, ..., g^{ (k) }, q, \dot{ q} \right) :T^{(k)}G\times TQ\rightarrow \mathbb{R}$ relative to the action of $h\in G$ given by
\[
\left( g, \dot{ g}, ..., g^{ (k) }, q, \dot{ q} \right) \mapsto \left( hg, h\dot{ g}, ..., hg^{ (k) }, qh ^{-1} , \dot{ q} h ^{-1}  \right) 
\]
and consider the quotient map
\begin{equation}\label{metamorphosis_quotient}
\left( g, \dot{ g}, ..., g^{ (k) }, q, \dot{ q} \right)\mapsto \left( \xi , \dot{ \xi },...,  \xi ^{ (k-1) },n, \nu \right) \in k \mathfrak{g}  \times TQ,\quad \xi = g ^{-1} \dot g ,\; n= q g , \;\nu = \dot q g.
\end{equation}
The equations of motion for the reduced Lagrangian $\ell_{M}$ induced by $L$ on $k \mathfrak{g}  \times TQ$ can be obtained by a direct generalization of the method used in \cite{GBHoRa2010} for $k=1$. If $L$ has the particular form
\[
L\left( g, \dot{g}, ..., g^{ (k) }, q, \dot{q} \right) =\mathcal{L} \left( g, \dot{g}, ..., g^{ (k) },q\right) + \frac{1}{2\sigma ^2 }\|\dot q g \|^2,
\]
where $ \mathcal{L} $ is the $G$-invariant Lagrangian associated to the function $\ell$ in \eqref{optimization_penalty}, then we recover equations \eqref{unconstrained_A} (with the upper sign chosen).

Instead of the so-called metamorphosis quotient map \eqref{metamorphosis_quotient} one may also use Lagrange-Poincar\'e reduction with the quotient map
\begin{equation}\label{LP_quotient}
\left( g, \dot{ g}, ..., g^{ (k) }, q, \dot{ q} \right)\mapsto \left( \xi , \dot{ \xi },..., \xi^{ (k-1) } ,n,\dot n\right) \in k \mathfrak{g} \times TQ ,\quad \xi = g ^{-1} \dot{g} ,\quad n= q g.
\end{equation}
The reduced equations of motion for metamorphosis with geometric splines that arise in the Lagrange-Poincar\'e approach are
\begin{equation} 
\left\{
\begin{array}{l}
\displaystyle \vspace{0.2cm}   \frac{\delta \ell _{LP}}{\delta  n}-\frac{d}{dt} \frac{\delta \ell _{LP}}{\delta  \dot{n}}=0,\\
\displaystyle\left(\partial _t \mp \operatorname{ad}^*_ \xi \right) \sum_{j=0}^{k-1}(-1)^j\partial _t ^j \frac{\delta\ell_{LP}}{\delta\xi^{ (j) }} =0,
\end{array}
\right.
\end{equation} 
(with the upper sign chosen) where $\ell_{LP}$ is the reduced Lagrangian associated to the same unreduced Lagrangian $L$ as before, but using the quotient map \eqref{LP_quotient} instead of \eqref{metamorphosis_quotient}.

Note that the Lagrange-Poincar\'e approach generalizes easily to higher-order Lagrangians in $q$
such as  $L:=L \left( g, \dot{g}, ..., g^{ (k) }, q, \dot{q}, ..., q^{ (k) } \right) : T^{ (k) }(G\times Q) \rightarrow \mathbb{R}  $. The equations of motions are then simply
\begin{equation}\label{LP_higher_order}
\left\{
\begin{array}{l}
\displaystyle \vspace{0.2cm}   \sum_{j=0}^{k-1}(-1)^j\frac{\delta\ell_{LP}}{\delta n^{ (j) }} =0,\\
\displaystyle\left(\partial _t  \pm \operatorname{ad}^*_ \xi  \right) \sum_{j=0}^{k-1}(-1)^j\partial _t ^j \frac{\delta\ell_{LP}}{\delta\xi^{ (j) }} =0.
\end{array}
\right.
\end{equation}

The metamorphosis reduction approach also generalizes to higher higher-order Lagrangians in $q$. In this case, one uses the quotient map
\begin{equation}\label{metamorphosis_quotient2}
\left( g, \dot{ g}, ..., g^{ (k) }, q, \dot{q}, ..., q^{ (k) } \right) \mapsto \left( \xi , \dot{ \xi },..., \xi^{ (k-1) } ,n, \nu_1,..., \nu _k  \right) \in k \mathfrak{g}  \times T^{ (k) }Q,
\end{equation}
where $\xi = g ^{-1} \dot g$ and  $ \left( n, \nu_1,..., \nu _k \right) = \Phi ^{ (k) }_g  \left( q, \dot{q}, ..., q^{ (k) } \right)$, $ \Phi ^{ (k) }$ being the natural induced action of $G$ on $T^{ (k) }Q$. However for $k \geq 2$, the associated reduced equations are quite complex on general Riemannian manifolds so one may prefer to use the equivalent Lagrange-Poincar\'e formulation \eqref{LP_higher_order}.

\begin{remark}{\rm 
The idea of metamorphosis with splines may apply in imaging as in \cite{HoTrYo2009} by using, e.g., $L( g_t, \dot{ g} _t , \ddot{ g} _t , \eta _t , \dot{\eta }_t)$, $L( g_t, \dot{ g} _t  , \eta _t , \dot{\eta }_t,  , \ddot{ \eta } _t)$, or $L( g_t, \dot{ g} _t ,\ddot{ g} _t, \eta _t , \dot{\eta }_t, \ddot{ \eta } _t)$, instead of $L( g_t, \dot{ g} _t, \eta _t , \dot{\eta }_t)$.}
\end{remark}

\section{Clebsch and Lie-Poisson-Ostrogradsky formulations}\label{Ham-forms-sec}

In this Section we present two Hamiltonian formulations associated to the higher order Euler-Poincar\'e equations \eqref{EP_with_q_RL} with $q$-dependence. (The case of equations \eqref{EP_with_q_RR} may be obtained by making obvious modifications.) The first is a canonical Hamiltonian formulation that generalizes to higher order the canonical Clebsch formulation of Euler-Poincar\'e dynamics. The second is a generalization of the Lie-Poisson formulation (with $q$-dependence) to higher order, that uses Ostrogradsky momenta. We now recall these formulations in the first order case.

\paragraph{Clebsch canonical formulation.} This is associated to the optimal control formulation described in \S\ref{Clebsch-sec}. In the case $k=1$ the canonical Hamiltonian formulation is already given by the Pontryagin approach. Indeed, if $ \xi \mapsto \frac{\delta \ell }{\delta \xi  }$ is a diffeomorphism we consider the function $h: \mathfrak{g}  \times Q \rightarrow \mathbb{R}$ defined by
\[
h(\mu ,q):= \left\langle \mu , \xi \right\rangle - \ell( \xi ,q),\quad \frac{\delta\ell}{\delta \xi}= \mu
\]
and the collective Hamiltonian $H:T^*Q\rightarrow\mathbb{R}$ given by $H( \alpha _q):= h(\mathbf{J}(\alpha_q ),q)$.
If $ \alpha_q(t)$ is a solution of  Hamilton's canonical equations for $H$ on $T^*Q$, then $(\mu(t), q(t))$, where
$\mu(t):=\mathbf{J}(\alpha_q(t))$,
is a solution of the Euler-Poincar\'e equations
\begin{equation}\label{EP_eq_q}
\left( \partial _t  \pm \operatorname{ad}^*_ \xi \right) \frac{\delta\ell}{\delta \xi} = \mathbf{J} \left( \frac{\delta\ell}{\delta q } \right).
\end{equation}

This canonical formulation of the Euler-Poincar\'e equations recovers some important examples such as the Clebsch variables for the ideal fluid \cite{MaWe1983}, singular solutions of the Camassa-Holm equations \cite{HoMa2004}, and double bracket equations, as explained in \cite{GBRa2009}.

\paragraph{Lie-Poisson formulation.} This is obtained by reduction of the Hamiltonian $H_{q _0 }:T^*G \rightarrow \mathbb{R}  $ associated to $L_{ q _0 }$ by Legendre transformation. If $L$ is $G$-invariant as a function defined on $TG \times Q$, then $H:T^*G \times Q \rightarrow \mathbb{R}$ is $G$-invariant and therefore induces the Hamiltonian $h$ given above. By Poisson reduction of the manifold $T^*G \times Q$, where $Q$ is endowed with the trivial Poisson structure, one obtains the Lie-Poisson equations
\[
\left( \partial _t  \pm \operatorname{ad}^*_ { \frac{\delta h }{\delta  \mu} } \right) \mu  = -\mathbf{J} \left( \frac{\delta h}{\delta q } \right),\quad \partial _t  q-\left(\frac{\delta h}{\delta
\mu}\right)_Q(q)=0,
\]
together with the associated Poisson structure
\begin{equation}\label{red_Poisson}
\{f,g\}(\mu,q)=\pm \left\langle\mu,\left[\frac{\delta
f}{\delta\mu},\frac{\delta g}{\delta\mu}\right]\right\rangle+\left\langle
\mathbf{J}\left(\frac{\delta f}{\delta q}\right),\frac{\delta g}{\delta\mu}\right\rangle-\left\langle
\mathbf{J}\left(\frac{\delta g}{\delta q}\right),\frac{\delta f}{\delta\mu}\right\rangle
\end{equation}
on $\mathfrak{g}^*\times Q$; see \cite{GBTr2010}. These equations are equivalent to their Lagrangian counterpart \eqref{EP_eq_q}.

\subsection{Higher order Clebsch formulations}

\subsubsection*{Second order case}

Recall from Defintion \ref{def_Clebsch} that the Clebsch-Pontryagin variational formulation of the second order Euler-Poincar\'e equations reads
\[
\delta \int _{ t_0 }^{t _1 }\left(  \ell( \xi , \dot \xi,q)+\langle\alpha,\dot{q}-\xi  _Q (q)\rangle \right) dt=0,
\]
over curves $\xi (t)\in \mathfrak{g}$ and $\alpha (t)\in T_{q(t)}^*Q$ and under conditions $(A), (B), (C)$. If $ \dot \xi \mapsto \pi  :={\delta  \ell}/{\delta \dot  \xi }$ is a diffeomorphism,  we define $h( \xi , \pi  ,q):= \langle \mu , \dot \xi \rangle - \ell( \xi , \dot \xi ,q)$ and the Pontryagin variational principle may be written equivalently as
\[
\delta \int _{ t_0 }^{t _1 }\left(  \langle \pi  , \dot \xi \rangle - h\left( \xi , \pi  ,q\right)+\langle\alpha,\dot{q}-\xi  _Q (q)\rangle \right) dt=0,\quad \hbox{where}\quad \pi :=\frac{\delta  \ell}{\delta \dot\xi  } 
\]
over curves $\xi (t)\in \mathfrak{g}  $ and $\alpha (t)\in T^*_{q(t)}Q$. Equivalently, this can be reformulated as
\[
\delta \int _{ t_0 }^{t _1 }\left(  \langle \pi  , \dot \xi \rangle - h\left( \xi , \pi ,q \right)+\langle\alpha,\dot{q}-\xi  _Q (q)\rangle \right) dt=0,
\]
over curves $ \xi (t)\in \mathfrak{g}$, $\pi  (t)\in \mathfrak{g}^\ast$, and $\alpha (t)\in T^*_{q(t)}Q$, 
where $\pi  (t)$ is now an independent curve. The relation $\pi ={\delta  \ell}/{\delta \dot\xi  } $ is recovered by variations of  $ \mu (t)$. One observes that this is simply the usual Hamilton Phase Space Variational Principle (i.e., not a Pontryagin Maximum Principle) on the phase space $T^*(Q\times \mathfrak{g}  )$
\[
\delta \int _0^1H(\alpha, \xi , \pi )- \left\langle (\alpha , \pi  ),(\dot q, \dot \xi ) \right\rangle
dt=0, 
\]
for the Hamiltonian
\[
H: T^*(Q\times \mathfrak{g}  ) \rightarrow \mathbb{R}  , \quad H(\alpha_q  ,\xi , \pi  ):=h( \xi , \pi ,q )+ \left\langle \mathbf{J} ( \alpha_q  ), \xi \right\rangle.
\]
We thus have proved the following result.

\begin{theorem}\label{thm_can_form} Let $\ell: 2 \mathfrak{g}\times Q  \rightarrow \mathbb{R} $, $\ell=\ell( \xi , \dot \xi,q )$ be a cost function such that $ \dot \xi \mapsto \pi  :={\delta  \ell}/{\delta \dot  \xi }$ is a diffeomorphism and define the function
\begin{equation}\label{xi_dot_leg_transform}
h( \xi , \pi ,q ):= \langle \pi  , \dot \xi \rangle - \ell( \xi , \dot \xi,q ).
\end{equation}
Then the stationarity conditions \eqref{stationarity_conditions_Clebsch_q} for the $2^{nd}$ order Clebsch-Pontryagin optimal control problem \eqref{Clebsch_optimal_control} with cost function $\ell$ are given by the canonical Hamilton equations on $T^*(Q\times \mathfrak{g} )$ relative to the Hamiltonian $H( \xi , \mu , \alpha )=h( \xi , \pi ,q )+ \left\langle \mathbf{J} ( \alpha ), \xi \right\rangle $.
\end{theorem}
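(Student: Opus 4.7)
The plan is to rewrite the Clebsch--Pontryagin variational principle for the $2^{nd}$-order problem as a Hamilton phase-space variational principle on the cotangent bundle $T^*(Q\times\mathfrak{g})$, identify the appropriate Hamiltonian, and then check that the resulting canonical Hamilton equations are equivalent to the stationarity conditions \eqref{stationarity_conditions_Clebsch_q}.

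First, starting from Definition \ref{def_Clebsch} specialized to $k=2$, I would write down the Pontryagin-type action
\[
\mathcal{S}[\xi,q,\alpha]=\int_{t_0}^{t_1}\bigl(\ell(\xi,\dot\xi,q)+\langle\alpha,\dot q-\xi_Q(q)\rangle\bigr)\,dt.
\]
Since $\dot\xi\mapsto \pi=\delta\ell/\delta\dot\xi$ is assumed to be a diffeomorphism, I would perform the partial Legendre transform \eqref{xi_dot_leg_transform} in the variable $\dot\xi$ only, which produces $\ell(\xi,\dot\xi,q)=\langle\pi,\dot\xi\rangle-h(\xi,\pi,q)$ with $\pi=\delta\ell/\delta\dot\xi$. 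Substituting yields
\[
\mathcal{S}=\int_{t_0}^{t_1}\bigl(\langle\pi,\dot\xi\rangle+\langle\alpha,\dot q\rangle-h(\xi,\pi,q)-\langle\alpha,\xi_Q(q)\rangle\bigr)\,dt.
\]

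Second, I would promote $\pi$ from a dependent to an independent curve, noting that varying the (now free) $\pi$ simply re-imposes the Legendre relation $\dot\xi=\delta h/\delta\pi$, which is equivalent to $\pi=\delta\ell/\delta\dot\xi$. Using the momentum map identity $\langle\alpha,\xi_Q(q)\rangle=\langle\mathbf{J}(\alpha),\xi\rangle$ from the definition of $\mathbf{J}$, the integrand becomes
\[
\langle\pi,\dot\xi\rangle+\langle\alpha,\dot q\rangle-H(\xi,\pi,\alpha_q),\qquad H(\xi,\pi,\alpha_q):=h(\xi,\pi,q)+\langle\mathbf{J}(\alpha_q),\xi\rangle,
\]
which is exactly the canonical Hamilton phase-space action on $T^*(Q\times\mathfrak{g})$ with canonical coordinates $(q,\xi;\alpha,\pi)$ and Hamiltonian $H$.

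Third, I would read off the canonical Hamilton equations:
\begin{align*}
\dot q&=\frac{\delta H}{\delta\alpha}=\xi_Q(q),\qquad
\dot\alpha=-\frac{\delta H}{\delta q}=-\frac{\delta h}{\delta q}-\mathbf{J}^*_q\text{-terms},\\
\dot\xi&=\frac{\delta H}{\delta\pi}=\frac{\delta h}{\delta\pi},\qquad
\dot\pi=-\frac{\delta H}{\delta\xi}=-\frac{\delta h}{\delta\xi}-\mathbf{J}(\alpha).
\end{align*}
and then translate back to Lagrangian variables using $h(\xi,\pi,q)=\langle\pi,\dot\xi\rangle-\ell(\xi,\dot\xi,q)$, which gives $\delta h/\delta\xi=-\delta\ell/\delta\xi$ and $\delta h/\delta q=-\delta\ell/\delta q$. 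The first two equations recover $\dot q=\xi_Q(q)$ and the coadjoint-type evolution $\dot\alpha=\xi_{T^*Q}(\alpha)+\operatorname{Ver}_\alpha(\delta\ell/\delta q)$ (using the fact, established in \S\ref{Clebsch-sec}, that $\dot\alpha=\xi_{T^*Q}(\alpha)$ is the Hamiltonian flow of $\langle\mathbf{J}(\alpha),\xi\rangle$ together with the vertical lift arising from $\partial h/\partial q$). The last two equations combine into $\pi=\delta\ell/\delta\dot\xi$ together with $\dot\pi-\delta\ell/\delta\xi=-\mathbf{J}(\alpha)$, i.e.
\[
\mathbf{J}(\alpha)=\frac{\delta\ell}{\delta\xi}-\frac{d}{dt}\frac{\delta\ell}{\delta\dot\xi}=\sum_{j=0}^{1}(-1)^j\partial_t^j\frac{\delta\ell}{\delta\xi^{(j)}},
\]
which is precisely the first equation in \eqref{stationarity_conditions_Clebsch_q} for $k=2$. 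Conversely, any solution of \eqref{stationarity_conditions_Clebsch_q} furnishes $(\xi,\pi,q,\alpha)$ satisfying Hamilton's equations, so the two formulations are equivalent.

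The only mildly delicate step is the bookkeeping in the translation between $h$ and $\ell$ and the careful identification of $\dot\alpha=\xi_{T^*Q}(\alpha)+\operatorname{Ver}_\alpha(\delta\ell/\delta q)$ with the canonical equations for $\dot\alpha$; this was already worked out in \S\ref{Clebsch-sec} for the vertical-lift piece, so no new computation is required. The remainder is the standard observation that partial Legendre transforms convert first-order-in-$\dot\xi$ Pontryagin principles into Hamilton phase-space principles, and no structural obstacle appears.
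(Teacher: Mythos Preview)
Your proposal is correct and follows essentially the same approach as the paper: rewrite the Clebsch--Pontryagin action via the partial Legendre transform in $\dot\xi$, promote $\pi$ to an independent variable, and recognize the result as the Hamilton phase-space variational principle on $T^*(Q\times\mathfrak{g})$ for $H(\alpha_q,\xi,\pi)=h(\xi,\pi,q)+\langle\mathbf{J}(\alpha_q),\xi\rangle$. The only minor difference is that you also carry out the explicit computation of the canonical equations and match them against \eqref{stationarity_conditions_Clebsch_q}, whereas the paper presents that direct check as an alternative verification after the theorem rather than as part of the main argument.
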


One can alternatively prove this result by computing explicitly the canonical Hamilton equations for $H$ on $T^*(Q \times \mathfrak{g}  )$. We obtain
\begin{equation}\label{can_Ham_equ}
\dot \alpha = X_H( \alpha )= \xi _{T^* Q}( \alpha ) - \operatorname{Ver}_{ \alpha } \frac{\delta  h}{\delta  q} ,\quad 
\dot \xi = \frac{\delta  H}{\delta  \pi  } =\frac{\delta  h}{\delta  \pi  } ,\quad \dot \pi  =- \frac{\delta  H}{\delta \xi } =  \frac{\delta  \ell}{\delta \xi } - \mathbf{J} ( \alpha ).
\end{equation}
Clearly, these equations coincide with the stationarity conditions \eqref{stationarity_conditions_Clebsch}. In particular, the last equation reads
\[
\mathbf{J} ( \alpha ) =\frac{\delta  \ell}{\delta \xi }-\partial _t \frac{\delta  \ell}{\delta \dot \xi }.
\]

\subsubsection*{Example 1: Geodesic $2$-spline equation on Lie groups}
Recall from \S\ref{2_splines} that the reduced Lagrangian for $2$-splines on a Lie group $G$ with right $G$-invariant Riemannian metric reads
\begin{equation}\label{Lag-Riemann}
\ell(\xi,\dot{\xi})=\frac{1}{2} \|\eta\|^2=\frac{1}{2} \left \|\dot{\xi} + {\rm ad}^\dagger_\xi\xi\right\|^2.
\end{equation}
Here we denote 
\begin{equation}\label{ad-dagger}
\eta = \dot{\xi} + {\rm ad}^\dagger_\xi\xi
\quad\hbox{with}\quad
\operatorname{ ad}^\dagger_ \xi \nu =\left( \operatorname{ ad} ^\ast _\xi (\nu ^ \flat) \right) ^\sharp
\quad\hbox{for}\quad
\xi,\nu\in\mathfrak{g}.
\end{equation}
Then the quantity computed in equation \eqref{2nd-EPeqns2}
\begin{equation*}
\mu := \frac{\delta \ell}{\delta \xi} - \partial _t \frac{\delta \ell}{\delta \dot{\xi} }
=\left( 
\partial_t \eta
+ {\rm ad}_\eta \xi
+ {\rm ad}^\dagger_\eta \xi
\right) ^\flat
,\quad\hbox{with}\quad
\eta = \dot{\xi} + {\rm ad}^\dagger_\xi\xi
,\label{spline-mom}
\end{equation*}
satisfies the $2^{nd}$-order Euler-Poincar\'e equation,  $(\partial_t + {\rm ad}^*_\xi)\mu =0$, which is also the geometric 2-spline equation of \cite{CrSL1995}. We now consider the canonical formulation of $2$-splines.

\paragraph{Hamiltonian formulation of the geodesic $2$-spline equation on $T^*(Q \times \mathfrak{g})$.}
As we have seen, the Clebsch-Pontryagin approach of Section \ref{Clebsch-sec} allows the geodesic 2-spline equation to be recast as a set of canonical Hamilton equations for a Hamiltonian $H: T^*(Q \times \mathfrak{g}  ) \rightarrow  \mathbb{R}$. Note that in the case of $2$-splines, the variable $ \pi $ is
\[
\pi = \frac{\delta \ell }{\delta \dot \xi  } =\dot{\xi}^\flat + {\rm ad}^*_\xi\xi^ \flat= \eta ^\flat
\]
which proves that $\dot  \xi \mapsto \frac{\delta \ell }{\delta \dot \xi  }$ is a diffeomorphism. One thus obtains the Hamiltonian
\begin{align}\label{can_Ham_splines}
H( \alpha , \xi , \pi )&= \langle \pi , \dot \xi \rangle -\ell( \xi , \dot \xi )+ \left\langle \mathbf{J} ( \alpha ), \xi \right\rangle ,\qquad \qquad 
\pi = \frac{\delta \ell }{\delta \dot \xi  }\nonumber\\
&=\frac{1}{2} \| \pi \|^2- \left\langle  \pi, {\rm ad}^\dagger_\xi\xi \right\rangle +\left\langle \mathbf{J} ( \alpha ), \xi \right\rangle,
\end{align}
where $\|\cdot\|$ denotes the norm induced by $ \gamma $ on $ \mathfrak{g}  ^\ast $. The canonical Hamiltonian  formulation \eqref{can_Ham_equ} now yields the dynamical system
\begin{equation}\label{can_Ham_equ_2_splines}
\dot \alpha = \xi _{T^* Q}( \alpha )  ,\quad 
\dot \xi = \pi^\sharp - {\rm ad}^\dagger_\xi\xi ,\quad \dot \pi  = -{\rm ad}^*_{\pi^\sharp}\xi^\flat - \left({\rm ad}_{\pi^\sharp}\xi\right)^\flat-  \mathbf{J} ( \alpha  )
\end{equation}
As we have proved above, the {Euler-Poincar\'e} equation $(\partial_t + {\rm ad}^*_\xi)\mu =0$ is then established by noticing that $ \mu  =   \mathbf{J} (\alpha )$ is the cotangent-lift momentum map for the action of the Lie group $G$ on the manifold $Q$ 
and that the $\dot{\xi}$-equation implies $\pi^\sharp = \dot{\xi} + {\rm ad}^\dagger_\xi\xi  = \eta$.
Note that the solution for the momentum map $\mu =   \mathbf{J} ( \alpha )$ may be obtained entirely from the canonical Hamilton equations, without explicitly solving the Euler-Poincar\'e equation. For a bi-invariant metric, one has ${\rm ad}^\dagger_\xi\xi=0$ in the $\dot{\xi} $-equation and the last two terms cancel each other  in the $\dot{\pi} $-equation. Consequently, these two canonical equations simplify to $\dot{\xi} =   \pi^\sharp$ and $\dot{\pi} =  -  \mathbf{J} ( \alpha )$. From them, we find 
\begin{equation}\label{2_spline_bi_inv}
\ddot{\xi} = -  \mathbf{J} (\alpha  )^{\sharp}
\quad\hbox{and}\quad
\dddot{\xi} = -  {\rm ad}_\xi^\dagger\ddot{\xi}
,\end{equation}
in agreement with equation \eqref{CrSLe-commutator} and reference \cite{CrSL1995}.

\subsubsection*{Example 2: Geodesic $2$-spline equations on $SO(3)$}

We consider the particular case of the Lie group $G=SO(3)$ endowed with the bi-invariant metric induced by the standard  Ad-invariant inner product
\[
\gamma ( \Omega , \Gamma )= - \frac{1}{2} \operatorname{Tr}(\Omega \Gamma ).
\]
We identify the dual $\mathfrak{so}(3)^*$ with $\mathfrak{so}(3)$ using $ \gamma $ so that $ \Omega^\flat = \Omega $. Using the hat map  $\,\widehat{\,}: \mathfrak{so}(3)\to\mathbb{R}^3$ (see \eqref{hat_map}), the Euler-Poincar\'e equation in \eqref{2_spline_bi_inv} reads
\begin{equation}
\dddot{\mathbf{\Omega}} - \mathbf{\Omega}\times \ddot{\mathbf{\Omega}} = 0 
,
\label{EP-eqn-ddu}
\end{equation}
which was first found in \cite{NoHePa1989}. The difference in sign from that paper arises here from the choice of reduction by right-invariance instead of left-invariance.

\paragraph{Canonical Hamilton equations on $T^*\mathbb{R}^3\times T^*\mathbb{R}^3$ for the NHP equation.} The Hamiltonian formulation of the NHP equation \eqref{EP-eqn-ddu} for geometric splines on $SO(3)$ with a bi-invariant metric may be obtained  in canonical variables $(\mathbf{\Omega},\boldsymbol{\pi},\mathbf{q}, \mathbf{p})\in T^*\mathbb{R}^3\times T^*\mathbb{R}^3$ from the Hamiltonian (see \eqref{can_Ham_splines}),
\begin{equation}
H(\mathbf{\Omega},\boldsymbol{\pi},\mathbf{q}, \mathbf{p}) = \frac{1}{2} \|\boldsymbol{\pi}\|^2 + \mathbf{\Omega}\cdot {\mathbf{q}}\times{\mathbf{p}}
.
\label{NHP-Ham}
\end{equation}
This corresponds to the choice $Q= \mathbb{R}  ^3 $ on which $SO(3)$ acts by matrix multiplication.

This Hamiltonian produces canonical equations of the form,
\begin{eqnarray*}
\dot{\mathbf{q}} = \frac{\delta H}{\delta \mathbf{p}} =   \mathbf{\Omega}\times \mathbf{q} 
,
&&   
\dot{\mathbf{p}} = - \frac{\delta H}{\delta \mathbf{q}} =   \mathbf{\Omega}\times \mathbf{p} 
,\\
\dot{\mathbf{\Omega}} =  \frac{\delta H}{\delta \boldsymbol{\pi}} =  \boldsymbol{\pi}
,
&& 
\dot{\boldsymbol{\pi}} = - \frac{\delta H}{\delta \mathbf{\Omega}}
= -  {\mathbf{q}}\times{\mathbf{p}}
.
\end{eqnarray*}
The $(\mathbf{q},\mathbf{p})$-equations here imply that $\boldsymbol{\mu}  =   \mathbf{J} (\mathbf{q},\mathbf{p} ) =  {\mathbf{q}}\times{\mathbf{p}}$ 
obeys the {Euler-Poincar\'e} equation  for right invariance,
\[
\left(\partial _t  + {\rm ad}^*_\mathbf{\Omega}\right)\boldsymbol{\mu} =0=\dot{\boldsymbol{\mu} }-\mathbf{\Omega}\times \boldsymbol{\mu}
,\]
which  results in the NHP equation (\ref{EP-eqn-ddu}) when we substitute $\boldsymbol{\mu} =-\ddot{\mathbf{\Omega}}$. 

The canonical Hamiltonian formulation of the NHP equation provides some insight into the interpretation of its constants of motion. For example, the Hamiltonian (\ref{NHP-Ham}) Poisson commutes with $|\mathbf{q}|^2,|\mathbf{p}|^2,(\mathbf{q}\cdot \mathbf{p})$, and $|\mathbf{q}\times \mathbf{p}|^2$, although only the last of these Poisson commutes with all the others. The Hamiltonian (\ref{NHP-Ham}) also Poisson commutes with the vector $\mathbf{K}=\mathbf{\Omega} \times \boldsymbol{\pi}- \mathbf{q} \times  \mathbf{p}\in\mathbb{R}^3$. Although the components of $\mathbf{K}$ satisfy Poisson bracket relations $\{K_1,K_2\}=K_3$ and cyclic permutations with each other, their sum of squares $K^2=K_1^2+K_2^2+K_3^2$ again Poisson commutes with all the others. The presence of the two constants of motion $|\mathbf{q}\times \mathbf{p}|^2$ and $K^2$ in Poisson involution allows symplectic reduction from six degrees of freedom to four, but the reduced system is still far from being 
integrable. The Hamiltonian conservation laws may be expressed in terms of $(\mathbf{\Omega},\dot{\mathbf{\Omega}},\ddot{\mathbf{\Omega}})\in T^{(2)}\mathbb{R}^3$ as 
\[
|\mathbf{q}\times \mathbf{p}|^2=|\ddot{\mathbf{\Omega}}|^2
,\quad
\mathbf{K}=\mathbf{\Omega} \times \dot{\mathbf{\Omega}} + \ddot{\mathbf{\Omega}}
,\quad
K^2=|\mathbf{\Omega} \times \dot{\mathbf{\Omega}}|^2 + 2 (\mathbf{\Omega} \times \dot{\mathbf{\Omega}})\cdot \ddot{\mathbf{\Omega}} + |\ddot{\mathbf{\Omega}}|^2
.\]
All of these conservation laws were known in the literature, but had previously not been given a Hamiltonian interpretation. 
The Hamiltonian interpretation of the NHP equation (\ref{EP-eqn-ddu}) in this setting is that the rotations act on the cross product $\mathbf{m}=\mathbf{q}\times \mathbf{p}$ diagonally in $\mathbf{q}$ and $\mathbf{p}$, so that $\dot{\mathbf{m}}=\mathbf{\Omega}\times \mathbf{m}$ for $\dot{\mathbf{q}} = \mathbf{\Omega}\times \mathbf{q}$ and $\dot{\mathbf{p}} = \mathbf{\Omega}\times \mathbf{p}$. This is also the essence of the symmetric representation of rigid body motion discussed, e.g., in \cite{BlCr1996}. 

\subsubsection*{Higher order case}

The canonical Clebsch formulation presented above can be adapted to higher order cost functions $\ell=\ell( \xi , \dot \xi , ..., \xi^{(k-1)} ,q)$ as follows. If $\xi^{(k-1)}\mapsto \frac{\delta  \ell}{\delta  \xi^{(k-1)}}$ is a diffeomorphism,  we define the function
\begin{align}\label{kth_order_h}
&h\left(  \xi , \dot \xi ,...,\xi^{(k-2)},  \pi _2,...,\pi _k,q \right) \nonumber\\
&\qquad := \langle \pi _2 , \dot \xi \rangle + \langle \pi _3 , \ddot \xi \rangle +...+ \left\langle \pi _k , \xi^{(k-1)} \right\rangle- \ell( \xi , \dot \xi ,...,\xi^{(k-1)} ,q),\qquad \pi _k = \frac{\delta \ell }{\delta  \xi^{(k-1)}  }
\end{align} 
and we consider the Hamiltonian $H: T^*(Q \times  (k-1) \mathfrak{g} )\rightarrow \mathbb{R}  $ given by
\[
H\left( \alpha _q , \xi ,..., \xi ^{(k-2)}, \pi _2 , ...,\pi _k \right) :=h\left(  \xi , \dot \xi ,...,\xi^{(k-2)},  \pi _2,...,\pi _k,q \right)+ \left\langle \mathbf{J} ( \alpha _q ), \xi \right\rangle.
\]
A straightforward computation shows that the canonical Hamilton equations on $T^*(Q \times (k-1) \mathfrak{g}  )$ for $H$ produce the stationarity condition \eqref{thm_can_form} of the $k^{th}$-order Clebsch-Pontryagin optimal control with cost function $\ell$ and therefore imply the $k^{th}$-order Euler-Poincar\'e equations \eqref{EP_with_q_RL}. We thus obtain the generalization of Theorem \ref{thm_can_form} for $k^{th}$-order cost functions.

\medskip

\subsection{Ostrogradsky-Lie-Poisson reduction}

The procedure of Lie-Poisson reduction of the Hamilton-Ostrogradsky theory parallels that for higher-order Euler-Poincar\'e reduction and produces a different Hamiltonian formulation of the higher-order dynamics that applies to 
$k\ge2$. At first, we will discuss the Hamilton-Ostrogradsky approach for the higher-order Hamiltonian formulation based purely on Lie group reduction, i.e., without introducing the action of the Lie group $G$ on the manifold $Q$. Then we will remark on how $q$-dependence may be easily incorporated.

\paragraph{Second order.}
Consider a $G$-invariant second order Lagrangian
$L:T^{(2)}G \rightarrow \mathbb{R}  $, $L=L(g, \dot g, \ddot g)$. The Ostrogradsky momenta are defined by the fiber derivatives,
\[
p _1 = \frac{\partial  L}{\partial  \dot g} - \partial _t  \frac{\partial  L}{\partial  \ddot g} , \qquad p _2 = \frac{\partial  L}{\partial  \ddot g}
\]
and produce the Legendre transform $(g, \dot g, \ddot g, \dddot{g})\in T^{(3)}G \mapsto (g, \dot g, p _1 , p _2 )\in T^*(TG)$. We refer to \cite{deLRo1985} for the intrinsic definition of the Legendre transform for higher order Lagrangians. See also \cite{BlCr1996a} for an application on $SO(3)$. 

When the Legendre transform is a diffeomorphism, the corresponding Hamiltonian $H:T^*(TG) \rightarrow \mathbb{R}  $ is defined by
\[
H(g, \dot g, p _1 , p_2 ):=\left\langle p _1 , \dot g \right\rangle + \left\langle p _2 , \ddot g \right\rangle -L(g, \dot g, \ddot g)
\]
and the canonical Hamilton equations for $H$ are equivalent to the $2^{nd}$-order Euler-Lagrange equations for $L$.

Applying reduction by symmetry to $H$ induces a Hamiltonian $h(\pi _1,\xi , \pi _2 )$ on $T^*(TG)/G\simeq \mathfrak{g} ^\ast \times  T^* \mathfrak{g} $, which is related to the symmetry-reduced Lagrangian $\ell(\xi , \dot \xi )$ by the corresponding Legendre transformation,
\begin{equation}
h( \pi _1 ,\xi ,  \pi _2 )=\Big\langle \pi  _1 , \xi  \Big\rangle 
+ \left\langle \pi  _2 , \dot \xi  \right\rangle 
-\ell(\xi , \dot \xi )
,\quad \frac{\delta  \ell}{\delta \dot \xi } = \pi _2.
\label{h-def-LegXform}
\end{equation}

By Reduction of the Hamilton-Ostrogradsky equations for $H$ on $T^*(TG)$ we obtain the \textit{Ostrogradsky-Lie-Poisson} equations for $h$
\begin{framed}
\begin{equation}\label{Ostro_LP}
\left\{\begin{array}{l}
\displaystyle \partial _t  \pi _1 \pm \operatorname{ad}^*_{\frac{\delta  h}{\delta \pi _1 } }\pi _1 =0,\\
\displaystyle\partial _t \xi = \frac{\delta  h}{\delta \pi _2  },\qquad 
\partial _t \pi _2  = -\frac{\delta  h}{\delta \xi },
\end{array}\right.
\end{equation}
\end{framed}\noindent
together with the non-canonical Poisson bracket given by
\begin{align}\label{tot_Poiss_2}
\{f,g\}( \pi _1, \xi ,  \pi_2)
&=\pm \left\langle \pi _1 , \left[ \frac{\delta f }{\delta \pi_1},\frac{\delta g }{\delta \pi_1} \right] \right\rangle +\left\langle \frac{\delta f }{\delta \xi  }, \frac{\delta  g}{\delta \pi _2  }  \right\rangle-\left\langle \frac{\delta g }{\delta \xi  }, \frac{\delta  f}{\delta \pi _2  }  \right\rangle\nonumber\\
& = \{f,g\}_{\pm}(\pi _1)+ \{f,g\}_{can}(\xi,\pi _2)
\end{align}
for functions $f,g$ depending on the variables $( \pi _1,\xi  , \pi _2 )$.
Note that this reduction process holds without assuming a preexisting Lagrangian formulation. Equations \eqref{Ostro_LP} together with their Hamiltonian structure can be obtained by Poisson reduction for cotangent bundles: $T^*Q \rightarrow T^*Q/G$ (the so called Hamilton-Poincar\'e reduction \cite{CeMaPeRa2003}) applied here to the special case $Q=TG$.

We now check directly that equations \eqref{Ostro_LP} are equivalent to the $2^{nd}$-order Euler-Poincar\'e equations
if the Hamiltonian \eqref{h-def-LegXform} is associated to 
$\ell$ by an invertible Legendre transform.
The derivatives of the symmetry-reduced Hamiltonian $h$ with respect to the momenta $\pi_1$ and $\pi_2$  imply formulas for the velocity and acceleration,
\begin{equation}
\frac{\delta  h}{\delta  \pi _1 } = \xi  ,\quad
\frac{\delta  h}{\delta  \pi _2 }  =\dot \xi 
,
\label{2ndOrderLP-eqns}
\end{equation}
so that the acceleration $\dot \xi$ may be expressed as a function of the velocity $\xi$ and the momenta $(\pi_1,\pi_2)$. 
The pair $(\xi,\pi _2)\in T^*\mathfrak{g}\simeq \mathfrak{g}\times \mathfrak{g}^*$ obeys canonical Hamilton equations, so the derivatives of $h$ in (\ref{h-def-LegXform}) with respect to velocity and acceleration  imply the momentum relations,
\begin{equation}
\dot \pi_2 = - \frac{\delta  h}{\delta \xi } = \frac{\delta \ell }{\delta \xi  } - \pi _1 
\quad\text{and}\quad 
\frac{\delta  h}{\delta \dot{\xi} }= 0 =
\pi _2 - \frac{\delta \ell }{\delta\dot \xi   } 
.
\label{2ndOrderVar-eqns}
\end{equation}
Solving these momentum relations for $\pi_1$ and $\pi_2$ in terms of derivatives of the reduced Lagrangian yields
\begin{equation}
\pi_1 = \frac{\delta  \ell}{\delta \xi  } - \partial _t  \frac{\delta \ell }{\delta \dot{ \xi } }
\quad\text{and}\quad 
\pi_2 = \frac{\delta  \ell}{\delta \dot \xi  }
.
\label{pi1-pi2-defs}
\end{equation}
The Lie-Poisson equation for $\pi_1$,
\begin{equation}
\partial _t  \pi _1 \pm \operatorname{ad}^*_{\frac{\delta  h}{\delta \pi _1 } }\pi _1 =0
,
\label{pi1-LP}
\end{equation}
then implies the $2^{nd}$-order {Euler-Poincar\'e} equation,
\[
\left( {\partial _t}\pm {\rm ad}_\xi^* \right) 
 \left( 
\frac{\delta  \ell}{\delta \xi  } - \partial _t \frac{\delta \ell }{\delta \dot{ \xi } }
\right)=0
.
\]

\subsubsection*{Example: Ostrogradsky-Lie-Poisson approach for geometric 2-splines}

The Ostrogradsky reduced Hamiltonian \eqref{h-def-LegXform} for geometric 2-splines is
\begin{equation}
h(\pi_1,\xi,\pi_2) = \frac12\left\|\pi_2\right\|^2- \Big\langle \pi_2, {\rm ad}_\xi^\dagger \xi \Big\rangle
+ \Big\langle \pi_1, \xi \Big\rangle
.
\label{Ost-ham-splines}
\end{equation}
From this reduced Hamiltonian, the Poisson bracket \eqref{tot_Poiss_2} recovers the geometric 2-spline equations \eqref{2nd-EPeqns2}. For a bi-invariant metric $ {\rm ad}_\xi^\dagger \xi = 0$ and these equations reduce to \eqref{CrSLe-commutator}. In addition for $SO(3)$ these equations produce the NHP equation \eqref{EP-eqn-ddu}. 

\paragraph{Third order.}
Before going to the general case, it is instructive to quickly presenting the case of a  third order $G$-invariant Lagrangian  $L:T^{(3)}G \rightarrow \mathbb{R}  $, $L=L(g, \dot g, \ddot g, \dddot g)$ inducing the symmetry-reduced Lagrangian $\ell: T^{(3)}G/G\simeq 3\mathfrak{g}  \rightarrow \mathbb{R}$, $\ell=\ell( \xi , \dot \xi , \ddot \xi )$. The Ostrogradsky momenta
\[
p _1 = \frac{\partial  L}{\partial  \dot g} - \partial _t  \frac{\partial  L}{\partial  \ddot g} 
+ \partial ^2 _t  \frac{\partial  L}{\partial  \dddot g}
, \qquad
p _2 = \frac{\partial  L}{\partial  \ddot g} - \partial _t  \frac{\partial  L}{\partial  \dddot g}
,\qquad 
p _3  =  \frac{\partial  L}{\partial  \dddot g},
\]
produce the Legendre transform $(g, \dot g,..., g^{(5)} )\in T^{(5)}G \mapsto (g, \dot g, \ddot g, p _1 , p _2 , p _3 )\in T^*(T^{ (2) }G)$.
The associated $G$-invariant Hamiltonian is obtained from the Legendre transformation,
\[
H:T^*(T^{(2)}G) \rightarrow \mathbb{R}  ,\quad H(g, \dot g, \ddot g, p _1 , p _2 , p _3 )
:=\left\langle p _1 , \dot g \right\rangle + \left\langle p _2 , \ddot g \right\rangle
+ \left\langle p _3 , \dddot g \right\rangle  -L(g, \dot g, \ddot g, \ddot g),
\]
so that $G$-invariance of the Hamiltonian $H$ yields the symmetry-reduced Hamiltonian $h(\pi_1 ,\xi ,\dot{\xi}, 
\pi _2, \pi _3)$, $h:T^*(T^{(2)}G)/G\simeq  \mathfrak{g}  ^\ast \times T^*(2\mathfrak{g}) \rightarrow \mathbb{R}$. 
The reduced Hamiltonian $h$ is related to the reduced Lagrangian $\ell$ by the extended Legendre transformation
\begin{equation}
h (  \pi _1 ,\xi ,\dot \xi , \pi _2, \pi _3  )=\Big\langle \pi  _1 , \xi  \Big\rangle 
+ \left\langle \pi  _2 , \dot \xi  \right\rangle 
+\left\langle \pi _3 , \ddot \xi  \right\rangle  -\ell(\xi , \dot \xi , \ddot \xi )
,\qquad \frac{\delta \ell }{\delta\ddot \xi   } = \pi _3.
\label{3rdOrderRedHam}
\end{equation}
The $3^{rd}$-order Ostrogradsky-Lie-Poisson system reads
\begin{framed}
\begin{equation}\label{Ostro_LP_3rd}
\left\{\begin{array}{l}
\displaystyle \partial _t  \pi _1 \pm \operatorname{ad}^*_{\frac{\delta  h}{\delta \pi _1 } }\pi _1 =0
,\\
\displaystyle\partial _t \xi = \frac{\delta  h}{\delta \pi _2  },\qquad 
\partial _t \pi _2  = -\frac{\delta  h}{\delta \xi },\qquad \partial _t \dot \xi = \frac{\delta  h}{\delta \pi _3},\qquad 
\partial _t \pi _3 = -\frac{\delta  h}{\delta \dot \xi },
\end{array}\right.
\end{equation}
\end{framed}\noindent
with associated Poisson bracket
\begin{equation}\label{PB}
\{f,g\}( \pi _1 , \xi ,\dot \xi , \pi _2 , \pi _3  )
=\{f,h\}_{\pm}(\pi _1 )+\{f,h\}_{can}(\xi ,\pi _2) +\{f,h\}_{can}(\dot \xi , \pi _3 ).
\end{equation}

If the Hamiltonian \eqref{3rdOrderRedHam} is associated to a Lagrangian $\ell$ by Legendre transformation, we have
\begin{equation}
\dot \xi = \frac{\delta  h}{\delta  \pi _2 }
 ,\quad 
 \dot \pi_2 = -\frac{\delta  h}{\delta \xi }=- \pi _1 + \frac{\delta \ell }{\delta \xi  } 
 ,\quad 
 \ddot \xi = \frac{\delta  h}{\delta  \pi _3} 
 ,\quad 
 \dot \pi_3 = -\frac{\delta  h}{\delta \dot \xi }= - \pi _2 + \frac{\delta  \ell}{\delta \dot \xi  } 
 ,\quad 
 \dot \pi _1 + \operatorname{ad}^*_{\frac{\delta  h}{\delta \pi _1 } }\pi _1 =0
 .
 \label{redHam-eqns}
\end{equation}
Consequently, we have 
\[
\pi _1 =\frac{\delta \ell }{\delta \xi  }- \dot \pi_2 =\frac{\delta \ell }{\delta \xi  }-\partial _t \left(\frac{\delta  \ell}{\delta \dot \xi  }- \dot \pi _3  \right)= \frac{\delta \ell }{\delta \xi  }-\partial _t \frac{\delta  \ell}{\delta \dot \xi  } + \partial _t ^2 \frac{\delta  \ell}{\delta \ddot \xi  }
,
\]
and the last equation in \eqref{redHam-eqns} for the momentum map $\pi_1$ implies by the 3rd-order {Euler-Poincar\'e} equation
\[
\left( {\partial _t}\pm {\rm ad}_\xi^* \right) 
 \left( 
\frac{\delta  \ell}{\delta \xi  } - \partial _t \frac{\delta \ell }{\delta \dot{ \xi } }+ \partial _t ^2 \frac{\delta  \ell}{\delta \ddot \xi  }
\right)=0
.
\]

\paragraph{Higher-order and $q$-dependence.} The Ostrogradsky-Lie-Poisson approach generalizes to $k^{th}$-order as follows. For a $G$-invariant Lagrangian $L:T^{(k)}G \rightarrow \mathbb{R}  $, $L=L(g, \dot g, ..., g^{(k)})$, the Ostrogradsky momenta define the Legendre transform as a map $T^{(2k-1)}G \rightarrow T^*(T^{(k-1)}G)$ (see \cite{deLRo1985}) and the associated Hamiltonian $H: T^*(T^{(k-1)}G) \rightarrow \mathbb{R}$, $H=H(g, \dot g, ..., g^{(k-1)}, p _1 , ..., p_{k})$ is given by
\begin{equation}
H(g, \dot g, ..., g^{(k-1)}, p _1 ,\ldots , p_k)
:=\sum _{j=1}^{k}\left\langle g^{(j)}, p_j \right\rangle 
-L(g, \dot g, ..., g^{(k)}).
\end{equation}
Extending the symmetry-reduced Ostrogradsky procedure outlined above to $k^{th}$-order then leads to the Ostrogradsky-Lie-Poisson equations
\begin{framed}
\begin{equation}\label{Ostro_LP_kth}
\left\{\begin{array}{l}
\displaystyle \vspace{0.2cm}\partial _t  \pi _1 \pm \operatorname{ad}^*_{\frac{\delta  h}{\delta \pi _1 } }\pi _1 =0
,\\
\displaystyle\partial _t \xi^{(j-2)} = \frac{\delta  h}{\delta \pi _{j}  },\qquad 
\partial _t \pi _{j}  = -\frac{\delta  h}{\delta \xi^{(j-2)} },\qquad j=2,...,k,

\end{array}\right.
\end{equation}
\end{framed}\noindent
whose Hamiltonian structure is
\begin{equation}\label{PB-k}
\{f,g\}( \pi _1 ,\xi,\dots , \xi^{(k-2)} ,\pi_2,...,\pi _{k})
= \{f,g\}_{\pm}(\pi _1 )+\sum_{j=2}^{k}\{f,g\}_{can}(\xi^{(j-2)} , \pi _{j})
.
\end{equation}
This Poisson bracket produces the geometric $k$-spline equations from the corresponding reduced Hamiltonian; we
do not carry out the details here.

The Ostrogradsky procedure for reduction by symmetry outlined above generalizes to allow $q$-dependence. Indeed, for a $G$-invariant Lagrangian $L:\left( T^{(k)}G \right) \times Q \rightarrow \mathbb{R}  $, the previous steps may all be repeated with only slight changes, resulting in the reduced Poisson bracket \eqref{PB-k}, modified by adding the terms
\begin{equation}
\left\langle \mathbf{J} \left( \frac{\delta f }{\delta q } \right) , \frac{\delta  h}{\delta  \pi _1 }\right\rangle
-
\left\langle \mathbf{J} \left( \frac{\delta h }{\delta q } \right) , \frac{\delta  f}{\delta  \pi _1 }\right\rangle .
\label{momap-brktmod}
\end{equation}
Thus, allowing $q$-dependence leaves the canonical equations invariant, but alters the $\pi_1$-equation so that it becomes
\begin{equation}
 \partial _t \pi _1 \pm   \operatorname{ad}^*_{\frac{\delta  h}{\delta \pi _1 } }\pi _1= - \mathbf{J} \left( \frac{\delta h }{\delta q }\right)
.
\label{pi1-LP+q}
\end{equation}

\paragraph{From Clebsch to Ostrogradsky-Lie-Poisson.} Recall that, in the case $k=1$, one passes from the canonical Clebsch formulation to the Lie-Poisson formulation (with Lie-Poisson bracket \eqref{red_Poisson}) by using the momentum map, via the transformation
\[
\alpha_q  \in T^*Q \mapsto ( \mu , q)\in \mathfrak{g}  ^\ast \times Q,\quad  \mu = \mathbf{J} ( \alpha _q ).
\]
Consider now the case $k=2$ with Lagrangian $\ell=\ell(\xi , \dot \xi , q)$. On the Clebsch side, the Hamiltonian is given by
\[
H( \alpha _q , \xi , \pi ):= \langle \pi , \dot \xi \rangle -\ell( \xi , \dot \xi ,q)+ \left\langle \mathbf{J} ( \alpha_q  ), \xi \right\rangle ,\quad \pi = \frac{\delta \ell }{\delta \dot \xi  }\,,
\]
whereas the Ostrogradsky-Lie-Poisson Hamiltonian with $q$-dependence is defined by
\[
h(\pi _1 , \xi ,\pi _2,q ):=\left\langle \pi _1 , \xi \right\rangle +\left\langle \pi _2 , \dot \xi \right\rangle - \ell( \xi , \dot \xi ,q),\quad \pi_2  = \frac{\delta \ell }{\delta \dot \xi  }\,.
\]
These definitions suggest that one can pass from the canonical Clebsch formulation on $T^*(Q\times \mathfrak{g} )$ to the Lie-Poisson-Ostrogradsky formulation on $\mathfrak{g}  ^*\times T^* \mathfrak{g}  \times Q$ by the Poisson map
\[
(\alpha _q ,\xi , \pi _2 ) \mapsto (\pi _1 , \xi  ,\pi _2, q ),\quad \pi _1 := \mathbf{J} ( \alpha _q ).
\]
This is indeed the case, as one can check easily. 
Generalization to $k>2$ is now straightforward and we have the Poisson map
\[
T^*(Q \times (k-1) \mathfrak{g}  ) \rightarrow \mathfrak{g}  ^* \times T^*(k-1) \mathfrak{g}  \times Q,
\]
\[
( \alpha _q ,\xi , \dot \xi ,..., \xi ^{(k-2)},\pi _2 ,..., \pi _k )\mapsto (\pi _1 , \xi , \dot \xi ,..., \xi ^{(k-2)},\pi _2 ,..., \pi _k,q),\quad \pi _1 := \mathbf{J} ( \alpha _q ),
\]
that relates the canonical Hamilton equations on $T^*(Q \times ( k-1) \mathfrak{g}  )$ and the Ostrogradsky-Lie-Poisson equations \eqref{Ostro_LP_kth}.

\section{Outlook and open problems} \label{outlook-sec}

\subsection{Brief summary and other potential directions} 

This paper has begun the application of symmetry-reduction tools to higher-order variational problems on Lie groups, culminating in an application of $2^{nd}$-order geometric splines to template matching on the sphere which was shown to be governed by a higher-order Euler-Poincar\'e equation on the dual Lie algebra of the Lie group $SO(3)$. The generality of this result was emphasized in Remark \ref{HOEP-remark}, on seeing that the higher-order Euler-Poincar\'e equation (\ref{TM_HO_EP}) had emerged once again as the optimality condition for template matching. 

Various open problems not treated here seem to crowd together to present themselves. A few of these are:

\begin{itemize}

\item
We have applied variational constraints to $k$-splines in Section \ref{subsec_constraints}. However,  accommodating nonholonomic constraints would require additional developments of the theory.

\item
In Section \ref{hoTemplateMatching-sec} we presented higher-order methods that increased the smoothness in interpolating through a sequence of data points. In future work these methods will be compared to the shape spline model introduced in \cite{TrVi2010}. Some initial forays into the analysis of these problems were also presented in Section \ref{some-analytical-remarks-sec}, but much remains to be done for these problems that have been treated here only formally. 

\item
Extension of the basic theory presented here to allow for actions of Lie groups on Riemannian manifolds is also expected to have several interesting applications, particularly in image registration. 
For example, one could address higher-order Lie group invariant variational principles that include both curves on Lie groups and the actions of Lie groups on smooth manifolds, particularly on Riemannian manifolds.  These Lie group actions on manifolds apply directly to the optimal control problems associated with large-deformation image registration in the \textit{Large Deformation by Diffeomorphisms Metric Mapping} (LDDMM) framework via Pontryagin's maximum principle. Actions of Lie groups on Riemannian manifolds will be investigated in a subsequent treatment.

\end{itemize}

Doubtlessly, other opportunities for applying and extending this symmetry reduction approach for $k$-splines will present themselves in further applications.

\subsection{An open problem:  the slalom, or brachistochrone for splines}

Let us formulate yet another example in slightly more detail. This is the brachistochrone version of the optimization problem treated here, for possible application, say, in a slalom race. 
Unlike the optimization problem, which seeks the path of least \emph{cost}, a race would seek the path of least \emph{time}. 
For example, the familiar \emph{slalom} race involves dodging around a series of obstacles laid out on the course. The objective of the slalom racer in down-hill skiing, for example, is to pass through a series of gates as quickly as possible. The strategy in slalom racing is to stay close to the shortest-time path (or geodesic) between the gates, while also moderating the force exerted in turning to keep it below some threshold, lest the snow give way and the skier slides off the course. Thus, the ideal slalom path sought by an expert racer would hug the geodesic between the gates and make the series of turns passing through the gates with no skidding at all.

The strategy for achieving the optimal slalom has many potential applications in modern technology. 
For example:
\begin{itemize}
\item
 A charged-particle beam in an accelerator may be guided in its path by a series of quadrupole magnets that steer the beam to its target. The steering must be done as gently as possible, so as to minimize the transverse acceleration (seen as curvature in the path of the beam) that causes Bremsstrahlung and the consequent loss of energy in the beam. 
\item
An underwater vehicle may be steered smoothly through narrow passageways in a sunken ship along a path that will take it quickly and efficiently to its objective, thereby avoiding collisions while minimizing fuel expenditure, time, etc.
\item
A car may be programmed to glide smoothly through a tight parallel-parking maneuver that ends in an elegant stop in the narrow space between two other cars along the curb. 
\item
A vehicle may follow a program to roll as rapidly as possible along the terrain through a series of gates with its cameras mounted so that they continuously point toward an object above it that must keep in sight.
\end{itemize}

The slalom strategy that applies in all these examples seeks a path that minimizes the time for the distance travelled over a prescribed course, while also moderating the acceleration or force exerted along the path as it passes around a series of obstacles or through a series of gates laid out on the course. Designing such maneuvers requires optimization for least time, while also using cost functions that depend on both the velocity and acceleration of the motion. Moderation of higher-order accelerations such as \emph{jerk} (rate of change of acceleration) may also be needed. As in the present paper,  in solving optimal slalom problems that minimize the time taken to finish the course, one might expect to take advantage of continuous symmetries by investigating Lie group invariant variational problems for cost functions that are defined on $k^{th}$-order tangent spaces of Lie groups acting on smooth Riemannian manifolds. Investigations of invariant variational principles 
for slalom problems using the present group reduction and induced metric methods would be a promising direction. However, this direction seems even more challenging than the geometric splines for optimizing costs in trajectory planning on a Lie group treated here and it will be deferred to a later paper.

\subsection*{Acknowledgements}
The work of FGB was partially supported by a Swiss NSF postdoctoral fellowship.
DDH is grateful for partial support by the Royal Society of London, Wolfson Scheme. 
TSR was partially supported by  Swiss NSF grant 200020-126630.  
We also thank M. Bruveris and A. Trouv\'e for encouraging comments and insightful remarks during the course of this work.

\bibliographystyle{alpha}
\bibliography{myrefs-FGBDM}

\end{document}